\title{Deterministic 3SUM-Hardness}
\author{Nick Fischer\thanks{Weizmann Institute of Science. This work is part of the project CONJEXITY that has received funding from the European Research Council (ERC) under the European Union's Horizon Europe research and innovation programme (grant agreement No.~101078482).} \and Piotr Kaliciak\thanks{Jagiellonian University in Kraków.} \and Adam Polak\thanks{Max Planck Institute for Informatics and Jagiellonian University in Kraków.}}
\date{}
\begin{document}

\maketitle
\begin{abstract}
\noindent
As one of the three main pillars of fine-grained complexity theory, the 3SUM problem explains the hardness of many diverse polynomial-time problems via fine-grained reductions. Many of these reductions are either directly based on or heavily inspired by Pătraşcu's framework involving additive hashing and are thus \emph{randomized}. Some selected reductions were derandomized in previous work~[Chan, He; SOSA'20], but the current techniques are limited and a major fraction of the reductions remains randomized.

In this work we gather a toolkit aimed to derandomize reductions based on additive hashing. Using this toolkit, we manage to derandomize \emph{almost all} known 3SUM-hardness reductions. As technical highlights we derandomize the hardness reductions to (offline) Set Disjointness, (offline) Set Intersection and Triangle Listing---these questions were explicitly left open in previous work~[Kopelowitz, Pettie, Porat; SODA'16]. The few exceptions to our work fall into a special category of recent reductions based on structure-versus-randomness dichotomies.

We expect that our toolkit can be readily applied to derandomize future reductions as well. As a conceptual innovation, our work thereby promotes the theory of \emph{deterministic 3SUM-hardness}.

As our second contribution, we prove that there is a \emph{deterministic universe reduction} for 3SUM. Specifically, using additive hashing it is a standard trick to assume that the numbers in 3SUM have size at most $n^3$. We prove that this assumption is similarly valid for \emph{deterministic} algorithms. 
\end{abstract}

\setcounter{page}{0}
\thispagestyle{empty}
\newpage

\section{Introduction}
In this paper we revisit the famous integer 3SUM problem, which involves checking for a given set $A$ of $n$ integers whether there is a triple $a, b, c \in A$ with~\makebox{$a + b = c$}. It is well-known that the 3SUM problem can be solved in deterministic time $\Order(n^2)$ by a classic textbook algorithm, and it is an infamous open problem whether the running time can be substantially improved beyond quadratic time. While recent improvements only scored log-shavings~\cite{BaranDP08,GronlundP18,Freund17,GoldS17,KaneLM19,Chan20}, it became a popular conjecture that there is no truly subquadratic algorithm (i.e., an algorithm running in time $\Order(n^{2-\epsilon})$ for some $\epsilon > 0$).

Over the course of the last decades, this conjecture has evolved into one the three main pillars of fine-grained complexity, explaining the hardness of a diverse set of polynomial-time problems via fine-grained reductions from the 3SUM problem. This trend was initiated in the 1990's by Gajentaan and Overmars~\cite{GajentaanO95} leading to a series of quadratic-time lower bounds for various geometric problems~\cite{GajentaanO95,BergGO97,ArkinCHMSSY98,BoseKT98,Erickson99,AbellanasHIKLMPS01,BarequetH01,SossEO03,ArchambaultEK05,EricksonHM06,CheongEH07,AronovH08}. Later, in~2010, Pătraşcu achieved another breakthrough~\cite{Patrascu10}. Based on earlier ideas due to Baran, Demaine and Pătraşcu~\cite{BaranDP08}, Pătraşcu realized that the interplay between 3SUM and \emph{additive hashing} can be exploited to derive \emph{randomized reductions} from 3SUM to the related \emph{Convolution 3SUM} and \emph{Triangle Listing} problems. This result inspired an even bigger wave of 3SUM-based lower bounds for various combinatorial problems. Today it is known that the class of 3SUM-hard problems encompasses graph problems~\cite{AbboudL13,WilliamsW13,AbboudBKZ22,AbboudBF23,JinX23}, string problems~\cite{AbboudLW14,AmirCLL14}, dynamic problems~\cite{Patrascu10,KopelowitzPP16,Dahlgaard16,AbboudWY18,ChenDGWXY18} and many more.

\subparagraph{Randomized versus Deterministic.}
As noted before, many of these reductions (for non-geometric problems) use Pătraşcu's seminal reduction as a stepping stone, or at least reuse it's ideas. Consequently, a major fraction of the 3SUM-based hardness results are only achieved via \emph{randomized} reductions. These reductions thus imply lower bounds only if we are willing to assume that the 3SUM problem requires quadratic time also for randomized algorithms.

Chan and He~\cite{ChanH20} took a first step in understanding the role of randomness in 3SUM-based hardness reductions. They proposed to replace the hash function in Pătraşcu's reduction from 3SUM to Convolution 3SUM with a deterministic (and simpler) one. This implied that Convolution 3SUM is hard even for deterministic reductions, and entails that there is no deterministic subquadratic algorithm for Convolution 3SUM under the following hypothesis:

\begin{hypothesis}[Deterministic 3SUM] \label{hyp:det-3sum}
For any $\epsilon > 0$ there is a constant $c$, such that there is no deterministic algorithm for the 3SUM problem over $[n^c]$ in time $\Order(n^{2-\epsilon})$.
\end{hypothesis}

While their result entails derandomizations for some reductions~\cite{BringmannN21b,ChanH19,AmirCLL14,WilliamsW13}, for technical reasons it fails to address the full range of problems. Let us describe the issue in a few words. In essence, Chan and He propose to select a hash function $h(x) = x \bmod m$, where $m$ is the product of small primes, and to select these primes one by one using the method of conditional expectations. With this idea one can easily construct a hash function~$h$ with few \emph{collisions} (\makebox{$h(a) = h(b)$}) since the number of collisions is easily computable. However, for the majority of reductions we need the stronger property that there are few \emph{false positives} ($h(a) + h(b) = h(c)$) which is more difficult to test. In the overview in \cref{sec:toolkit} we elaborate on this in more detail.

\subparagraph{Theory of Deterministic 3SUM-Hardness.}
Nevertheless, Chan and He's result marks an important first step towards a \emph{theory of deterministic 3SUM-hardness.} Aiming at a more complete theory, in this work we study the following wide-open question:
\begin{center}
\medskip
\emph{Question 1: Can we derandomize all reductions from 3SUM?}
\medskip
\end{center}
Understanding the power of randomness is one of the central goals in theoretical computer science. However, in fine-grained complexity (and also parameterized complexity) it is common to neglect this aspect and allow for randomized algorithms and reductions by default. In light of Question~1 it would be exciting to see whether at least for 3SUM-hard problems we could gain some foundational knowledge about the necessity of randomness. 

Answering Question 1 would also be particularly important if there turned out to be faster randomized than deterministic algorithms for 3SUM. This is very well possible---in fact, in the history of state-of-the-art 3SUM algorithms (with mildly subquadratic running times) the gap between randomized and deterministic algorithms was only recently closed~\cite{Chan20}.

Of course the net of 3SUM-based reductions is vast, so where should we begin to tackle Question~1? A reasonable starting point is the aforementioned Triangle Listing problem, which is one of the bottlenecks for the many still-randomized reductions. Recall that Pătraşcu~\cite{Patrascu10} proved that Triangle Listing is hard under randomized reductions, and this reduction was later refined by Kopelowitz, Pettie and Porat~\cite{KopelowitzPP16}. They proposed, as appropriate intermediate problems, the (offline) Set Disjointness and (offline) Set Intersection problems, and proved that these problems are 3SUM-hard under randomized reductions. In order to answer Question 1, we should therefore focus on Set Disjointness and Set Intersection. In their paper~\cite{KopelowitzPP16}, Kopelowitz et al.\ remark that ``it would be surprising if [their] construction could be efficiently derandomized''. 

\bigskip
\noindent
We also consider another question related to the role of randomness for the 3SUM problem: universe reductions. It is well known that for randomized algorithms we can assume without loss of generality that the input is over the universe $[\Order(n^3)]$ (simply replace each input $a \in A$ by $h(a)$ where~\makebox{$h : \Int \to [\Order(n^3)]$} is an appropriate additive hash function). For deterministic algorithms, on the other hand, there are no similar bounds. Chan and He~\cite{ChanH20} end their paper with the following open problem:
\begin{center}
\medskip
\emph{Question 2: Can 3SUM for arbitrary integers be reduced deterministically}\\
\emph{to 3SUM for integers bounded by~$n^{O(1)}$?}
\medskip
\end{center}
Besides being an interesting question in its own right, obtaining a deterministic universe reduction would also constitute a handy tool in the design of deterministic 3SUM-based reductions, and thus contribute to our dream goal of a complete theory of deterministic 3SUM-hardness.





\subsection{Our Results}
Our main technical contribution is that we gather a toolset aimed to derandomize 3SUM-based reductions. It consists of two major tools: A deterministic algorithm to compute an additive hash function with few collisions (see \cref{lem:hashing-few-pseudo-solutions}), and a refined deterministic self-reduction for 3SUM (see \cref{lem:dominance}). We establish these tools in \cref{sec:toolkit} and give a short technical overview. Notably, none of these results use heavy black-box derandomization machinery, but instead are rather simple algorithms tailored to the 3SUM problem.

To our great surprise, by a combination of these tools we can completely answer Question~2, and answer Question~1 for \emph{almost all} known 3SUM-based reductions.

\subparagraph{Deterministic Universe Reduction.}
Let us start with our second driving question. We prove that the universe size can indeed be reduced to $n^{\Order(1)}$:

\begin{restatable}[Deterministic Universe Reduction for 3SUM]{theorem}{thmunivreduction} \label{thm:univ-reduction}
If 3SUM over~$[n^3]$ can be solved in deterministic time $\Order(n^{2-\epsilon})$ for some $\epsilon > 0$, then 3SUM over $[U]$ can be solved in deterministic time $\Order(n^{2-\epsilon'} \log^c U)$ for some constants $\epsilon', c > 0$.
\end{restatable}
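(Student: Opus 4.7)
The strategy is to deterministically construct an almost-additive hash function $h : [U] \to [m]$ with $m = n^{\Order(1)}$ that introduces few spurious 3SUM solutions, reduce the problem to 3SUM on $h(A)$, invoke the assumed algorithm, and then clean up the false positives. First I would apply the hashing tool \cref{lem:hashing-few-pseudo-solutions} to the set $A$ in order to compute, in time $\Order(n^{2-\epsilon'}\log^{c} U)$, an almost-linear hash $h : [U] \to [m]$ (of the typical form $h(x) = (x \bmod p) \bmod m$) with few \emph{pseudo-triples}, i.e., triples $(a,b,c) \in A^3$ with $h(a) + h(b) \equiv h(c) \pmod{m}$ but $a + b \neq c$. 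The target bound on the number of pseudo-triples would be $\Order(n^{2-\delta})$ for some $\delta > 0$, or ideally only polylogarithmic.

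Next I would invoke the assumed deterministic algorithm for 3SUM over $[n^3]$ on the hashed multiset $h(A)$. Two small issues arise. First, if $m > n^3$ (but is still $n^{\Order(1)}$), the hashing step is iterated $\Order(\log\log U)$ times, each iteration shrinking the universe polynomially, all within the $\log^{c} U$ budget. Second, since $h$ is only almost-linear (the outer modulus introduces boundary wrap-around), a true 3SUM triple in $A$ satisfies one of a constant number of modular relations in $h(A)$; this is handled by a constant number of separate oracle calls, or by appealing to a three-partite 3SUM variant with appropriate shifts. If the oracle reports no solution on the hashed instance, the original instance has no solution either, since $h$ preserves true solutions up to the constant modular shift.

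If the oracle reports a solution, I would extract an explicit candidate triple using a standard $\Order(\log n)$-overhead decision-to-search self-reduction, then verify it in $\Order(1)$ time against the original $A$ via bucket lookup. A correct verification terminates the algorithm; a false positive is eliminated by removing one of its three elements from $A$, after which the process restarts. The main obstacle is bounding the cost of this cleanup phase: naively each iteration costs a fresh inner oracle call, and multiplied by the pseudo-triple count the overall budget could easily exceed $n^2$. The right fix is an amortization showing that each failed verification either certifies a true solution or removes an element of $A$ that participates in many pseudo-triples, so that only $\log^{\Order(1)} n$ iterations are needed. This amortization is essentially the content of the refined self-reduction \cref{lem:dominance}, and carefully composing the two toolkit lemmas yields the claimed $\Order(n^{2-\epsilon'}\log^{c} U)$ bound.
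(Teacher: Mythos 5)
Your high-level plan (hash to a polynomial universe with few pseudo-solutions, call the oracle, clean up false positives) shares the same first step as the paper, but the second half diverges and, as written, has two genuine gaps that the paper's proof is specifically engineered to avoid.

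\textbf{Gap 1: you cannot afford to hash to a cubic universe in one shot.}
To make the hashed instance directly usable by the oracle you want $m \approx n^3$, but \cref{lem:hashing-few-pseudo-solutions} constructs the modulus in time $\widetilde\Order(g\,n^{\max(\mu,1)+\delta})$, which for $\mu \approx 3$ is $\widetilde\Order(n^{3+\delta})$ --- far above your quadratic budget. If instead you hash to a smaller modulus $m \approx n^{2-2\delta}$ (which \emph{is} affordable), the hashed instance has $\Theta(n)$ elements in universe $n^{2-2\delta} \ll n^3$, so the oracle certainly applies, but now you have up to $\widetilde\Order(n^{1+\Theta(\delta)})$ pseudo-triples to cope with, not polylogarithmically many. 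The iterate-$\Order(\log\log U)$-times remark in your write-up does not resolve this; $\log U$ shows up only as a multiplicative polylog overhead in \cref{lem:hashing-few-pseudo-solutions}, not as an iteration count, and each iteration would re-incur the same construction cost.

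\textbf{Gap 2: the delete-and-restart cleanup does not amortize.}
Your plan is: if the decision oracle says yes on the hashed instance, extract a witness, verify it, and if it is a false positive remove one of its three elements and restart. A failed verification only certifies that one specific triple $(a,b,c)$ has $h(a)+h(b)\equiv h(c)$ but $a+b\neq c$; removing, say, $a$ may eliminate only that single pseudo-triple. There is no structural reason the removed element participates in many pseudo-triples, so the $\log^{\Order(1)} n$ iteration bound you assert is unsupported, and the cleanup could plausibly cost $\widetilde\Order(n^{1+\Theta(\delta)})$ oracle calls, i.e.\ time $\widetilde\Order(n^{3+\Theta(\delta)-\epsilon})$. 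You attribute the needed amortization to \cref{lem:dominance}, but that lemma has nothing to do with amortizing false-positive removals --- it is a range-based partition of $A$ into $g$ buckets together with an $\Order(g^2)$-size set of candidate triples of buckets.

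\textbf{What the paper does instead.}
After hashing to $m\approx n^{2-2\delta}$ (affordable), the paper applies \cref{lem:dominance} with $g=n^\alpha$ to partition the hashed set into $\Order(n^{2\alpha})$ subinstances of size $\Order(n^{1-\alpha})$, each over universe $\Order(m/n^\alpha)$; a trivial chopping step then brings each subinstance into cubic universe relative to its own size $n^{1-\alpha}$. Each subinstance is solved by the oracle, and only those subinstances reported \emph{yes} are brute-forced, at cost $\Order(n^{2-2\alpha})$ each. The point you are missing is that the total number of yes-subinstances is bounded by the number of pseudo-solutions $S$, and $S\cdot n^{2-2\alpha}$ is subquadratic for the paper's choice of $\mu,\alpha,\delta$. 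So the paper never needs a decision-to-search reduction, never removes elements, and never needs any amortization --- false positives are tolerated because they are confined to \emph{tiny} subinstances where brute force is cheap. That decomposition into many small subinstances is precisely what \cref{lem:dominance} is for, and it is the missing idea in your proposal.
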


It is satisfactory that the precise polynomial bound is $n^3$, which matches what is known in terms of randomized algorithms precisely. Interestingly though, we obtain our universe reduction not as a preprocessing step, but rather as a fine-grained reduction from 3SUM with large universe to 3SUM with small universe.

Let us note that Chan and He's original motivation to this question was replacing the dependence on $U$ by a dependence on $n$ (in a reduction from 3SUM to Convolution 3SUM)~\cite{ChanH20}, which our result fails to achieve, because it involves $\log U$ factors in the running time. However, it is a typical assumption in fine-grained complexity theory that the problems involve weights in~$[-n^c, n^c]$ (for some constant $c$), and in this setting the overhead is only polylogarithmic in $n$. In particular, \cref{thm:univ-reduction} implies that unless the deterministic 3SUM hypothesis (\cref{hyp:det-3sum}) fails, there is no deterministic subquadratic-time algorithm for 3SUM over $[n^3]$. In all deterministic 3SUM-hardness reductions we can thus assume without loss of generality that the given 3SUM instance is over $[n^3]$. And indeed, for some problems~\cite{AbboudBBK20} this assumption is necessary to recover the same lower bounds that are known via randomized reductions.

\subparagraph{Derandomizing Almost All 3SUM-Based Reductions.}
Let us turn to our driving Question~1. Up to very few exceptions, we indeed manage to derandomize all remaining randomized 3SUM-based reductions. Our flagship result here is that we derandomize the reductions to Set Disjointness and Set Intersection, originally due to Kopelowitz, Pettie and Porat~\cite{KopelowitzPP16}: 

\begin{restatable}[Set Disjointness]{problem}{SetDisjointnessProblem}
Given sets $S_1, \dots, S_N \subseteq [U]$ each of size at most $s$ and a set of~$q$ queries $Q \subseteq [N]^2$, report for each query $(i, j) \in Q$ whether $S_i \cap S_j = \emptyset$.
\end{restatable}

\begin{restatable}[Deterministic Set Disjointness Hardness]{theorem}{thmsetdisjointness} \label{thm:set-disjointness}
Let $0 \leq \alpha < 1$. Unless the deterministic 3SUM hypothesis fails, there is no deterministic algorithm for (offline) Set Disjointness with parameters $|U| = \Order(n^{2-2\alpha})$, $N = \Order(n)$, $s = \Order(n^{1-\alpha})$ and $q = \Order(n^{1+\alpha})$ that runs in time $\Order(n^{2-\epsilon})$, for any~\makebox{$\epsilon > 0$}.
\end{restatable}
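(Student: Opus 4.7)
The plan is to follow the randomized reduction of Kopelowitz, Pettie, and Porat~\cite{KopelowitzPP16} line by line, replacing every use of randomness with one of the deterministic ingredients from \cref{sec:toolkit}. First I would preprocess the input by invoking \cref{thm:univ-reduction} to bring the 3SUM instance into the bounded universe $[n^3]$, at a cost of only polylogarithmic overhead; all subsequent hashing is then performed over this bounded domain.

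The heart of the KPP reduction picks a random almost-linear hash $h$ with range $r = \Theta(n^\alpha)$, partitions $A$ into buckets $B_0, \ldots, B_{r-1}$, and encodes the 3SUM question as Set Disjointness queries between bucket-sets (and appropriate additive translates thereof). Its randomized analysis depends on two properties: (i) every bucket has size $\Order(n^{1-\alpha})$, and (ii) the number of \emph{pseudo-solutions}---triples $(a,b,c) \in A^3$ with $h(a) + h(b) \equiv h(c) \pmod{r}$ but $a + b \neq c$---is at most $n^{3-\alpha}$ up to polylogarithmic factors, so that post-processing pseudo-solutions does not dominate the running time. I would secure property (ii) by constructing $h$ via \cref{lem:hashing-few-pseudo-solutions}, and then enforce property (i) by invoking \cref{lem:dominance} to split oversized buckets deterministically---precisely the scenario the dominance self-reduction was designed for.

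Given such a hash, the rest of the KPP reduction goes through verbatim: one produces a Set Disjointness instance with the stated parameters $N = \Order(n)$, $s = \Order(n^{1-\alpha})$, $q = \Order(n^{1+\alpha})$, and $|U| = \Order(n^{2-2\alpha})$, where each query corresponds to a candidate pair $(c, i) \in A \times [r]$ asking whether a witness with middle element $c$ has its first partner inside $B_i$. Any true 3SUM solution forces some query to be non-disjoint; conversely, every non-disjoint query either certifies a true witness or matches one of the few pseudo-solutions and is resolved in total additional time $\Order(n^{2-\alpha})$ up to polylog factors. A deterministic Set Disjointness algorithm running in time $\Order(n^{2-\epsilon})$ on such an instance would thus solve 3SUM over $[n^3]$ in truly subquadratic deterministic time, contradicting \cref{hyp:det-3sum}.

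The main obstacle is the interplay between properties (i) and (ii): after splitting oversized buckets via \cref{lem:dominance}, one must verify that the pseudo-solution guarantee from \cref{lem:hashing-few-pseudo-solutions} survives the redistribution---this is exactly where the randomized proof ``cheats'' by appealing to independence, and it is precisely why the collision-only derandomization of Chan and He~\cite{ChanH20} stops short. A secondary technical point is compressing the Set Disjointness universe from the ambient $[n^3]$ down to $[\Order(n^{2-2\alpha})]$; this is accomplished by an additional additive re-hashing of the candidate sums $a+b$ restricted to interacting bucket pairs, appealing once more to the toolkit to ensure that this secondary hash introduces no further pseudo-solutions.
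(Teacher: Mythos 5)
Your proposal mirrors the structure of the original KPP reduction (hash first into $r = \Theta(n^\alpha)$ buckets, then fix up oversized buckets) and attempts to swap in the deterministic ingredients. But as you yourself flag, the crux---showing that the pseudo-solution bound from \cref{lem:hashing-few-pseudo-solutions} survives after you split oversized buckets using \cref{lem:dominance}---is exactly what your proposal leaves open, and it does not resolve itself. \cref{lem:dominance} partitions by value intervals after sorting; a hash bucket $B_i$ is an arbitrary subset, and there is no reason the pseudo-solutions counted for the hash $h$ distribute nicely across the sub-intervals you would carve out of $B_i$. So the proposal as written has a genuine gap at precisely the step it identifies as the main obstacle.

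The paper avoids this entirely by reversing the order of the two tools. It first applies \cref{lem:dominance} with $g = n^\alpha$ to partition $A$ into groups $A_1, \dots, A_g$ of size $O(n^{1-\alpha})$ together with a set $R \subseteq [g]^3$ of only $O(n^{2\alpha})$ relevant group triples. The groups are balanced \emph{by construction}---no hashing is involved in forming them, so there is no oversized-bucket problem to begin with. Only afterwards does it invoke \cref{lem:hashing-few-pseudo-solutions}, applied simultaneously to the $|R|$ sub-instances $(A_i \cup A_j \cup A_k)_{(i,j,k) \in R}$ (this is where the $g > 1$ version of the lemma is actually needed), to choose a single modulus $m = O(n^{2-2\alpha})$ that has few pseudo-solutions across all relevant group triples at once. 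The Set Disjointness universe is then $[m]$ directly, so no secondary re-hashing is required either. The sets $B_{i,x} = \{(b + x\sqrt{m}) \bmod m : b \in A_i\}$ and $C_{i,y} = \{(c + y) \bmod m : c \in A_i\}$ for $x, y \in [\sqrt{m}]$ give $N = O(g\sqrt{m}) = O(n)$ sets of size $O(n^{1-\alpha})$, and one query per $(i,j,k) \in R$ and $a \in A_i$ gives $q = O(n^{1+\alpha})$. The remaining steps (list up to $t = n^\rho$ elements per intersection via \cref{thm:set-intersection-to-disjointness}, brute-force the few heavy intersections using the pseudo-solution bound) then close the argument. Also, the preliminary invocation of \cref{thm:univ-reduction} in your proposal is unnecessary: \cref{lem:hashing-few-pseudo-solutions} already accommodates an arbitrary universe $[U]$ at the cost of polylog factors, and the paper applies it directly without a universe-reduction preprocessing pass.
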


\begin{restatable}[Set Intersection]{problem}{SetIntersectionProblem}
Given sets $S_1, \dots, S_N \subseteq [U]$ each of size at most $s$ and a set of~$q$ queries $Q \subseteq [N]^2$, report for each query $(i, j) \in Q$ the set intersection $S_i \cap S_j$. Occasionally we specify a size threshold up to which the algorithm is supposed to list elements. 
\end{restatable}

\begin{restatable}[Deterministic Set Intersection Hardness]{theorem}{thmsetintersection} \label{thm:set-intersection}
Let $0 \leq \alpha < 1$ and $0 \leq \beta \leq 1 - \alpha$. Unless the deterministic 3SUM hypothesis fails, there is no deterministic algorithm for (offline) Set Intersection with parameters $|U| = \Order(n^{1+\beta-\alpha})$, \smash{$N = \Order(n^{\frac12 + \frac{\alpha}{2} + \frac{\beta}{2}})$}, $s = \Order(n^{1-\alpha})$ and $q = \Order(n^{1+\alpha})$ that in total lists up to $\Order(n^{2-\beta})$ elements and runs in time $\Order(n^{2-\epsilon})$, for any~\makebox{$\epsilon > 0$}.
\end{restatable}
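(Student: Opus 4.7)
The plan is to derandomize the randomized reduction from 3SUM to (offline) Set Intersection of Kopelowitz, Pettie and Porat~\cite{KopelowitzPP16}, reusing the same overall template as the companion Set Disjointness reduction in \cref{thm:set-disjointness} but with an additional ``sub-hash'' layer that shrinks the universe to $|U| = \Order(n^{1+\beta-\alpha})$ and thereby controls the total output size. In the original randomized proof one picks an additive hash $h$ with range $R$, partitions $A$ into buckets $B_0, \dots, B_{R-1}$, and encodes each subquestion ``does the bucket pair $(B_i, B_{h(c)-i})$ contain a witness $(a,b)$ with $a + b = c$?'' as a set-intersection query. The analysis hinges on $h$ having few \emph{pseudo-solutions}---triples $(a,b,c)$ with $h(a) + h(b) \equiv h(c) \pmod R$ but $a + b \neq c$---and on the buckets being well-balanced.

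First I would apply the deterministic universe reduction (\cref{thm:univ-reduction}) to assume without loss of generality that the input lies in $[n^3]$. Next I would invoke \cref{lem:hashing-few-pseudo-solutions} to deterministically construct an additive hash function whose range $R$ is tuned as a function of $\alpha$ and $\beta$ so that the total number of pseudo-solutions matches the $\Order(n^{2-\beta})$ output budget demanded by the theorem. To also guarantee that no bucket is much larger than the average---a property that comes for free from Chernoff bounds in the randomized analysis but that the deterministic hash does not directly provide---I would appeal to the self-reduction (\cref{lem:dominance}), which lets us peel off any overly heavy sub-instance into a smaller recursive 3SUM call whose cost is absorbed by the assumed subquadratic running time.

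With few pseudo-solutions and balanced buckets in hand, the construction of the Set Intersection instance proceeds essentially as in KPP: sets are indexed by (bucket, target-residue) pairs, the universe encodes ``local'' positions within buckets (giving $|U| = \Order(n^{1+\beta-\alpha})$), and each candidate $c \in A$ contributes $\Order(R)$ queries ranging over the aligned bucket pairs $(B_i, B_{h(c)-i})$. A routine parameter calculation then yields $N = \Order(n^{(1+\alpha+\beta)/2})$, $s = \Order(n^{1-\alpha})$, $q = \Order(n^{1+\alpha})$ and total listed elements $\Order(n^{2-\beta})$, the last bound being dominated by the pseudo-solution count coming from the hashing lemma.

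The main obstacle is not the combinatorial construction, which is essentially KPP's, but the derandomization of the hash function itself. Obtaining a deterministic additive hash with few \emph{pseudo-solutions}---as opposed to merely few collisions, which is what Chan and He~\cite{ChanH20} achieve---is delicate because the pseudo-solution count is a ternary statistic that is not straightforwardly amenable to the method of conditional expectations, and this is exactly the difficulty that \cref{lem:hashing-few-pseudo-solutions} is designed to overcome. Granting that lemma together with \cref{lem:dominance} as black boxes, the remainder of the proof is a parameter-by-parameter adaptation of the randomized KPP construction.
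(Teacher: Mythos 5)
You correctly identify the two key tools---the additive-hash lemma (\cref{lem:hashing-few-pseudo-solutions}) and the deterministic self-reduction (\cref{lem:dominance})---but you combine them in the wrong order and misattribute the role of the self-reduction, and this creates a real gap. The paper applies \cref{lem:dominance} \emph{first}, partitioning $A$ into $g = n^\alpha$ interval-based groups $A_1, \dots, A_g$ of size $\Order(n^{1-\alpha})$ with only $\Order(n^{2\alpha})$ relevant triples $R$; because the partition is a greedy scan by value, the groups are automatically balanced, so there is no heavy-bucket problem at all. Only afterwards is the hashing lemma invoked, and it is applied to the $\Order(n^{2\alpha})$ triple-unions $A_i \cup A_j \cup A_k$ (this is the one place the paper uses $g > 1$ in \cref{lem:hashing-few-pseudo-solutions}) with $\mu = \frac{1-\alpha+\beta}{1-\alpha} - 2\delta$, yielding a modulus $m = \Order(n^{1-\alpha+\beta})$ which becomes the Set Intersection universe directly, and a total pseudo-solution count $\widetilde\Order(n^{2-\beta+3\delta})$. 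The sets $B_{i,x} = \set{(b + x\sqrt m) \bmod m : b \in A_i}$ and $C_{i,y}$ are indexed by (group, shift in $[\sqrt m]$), not by hash residue, and the query count $q = |R| \cdot \Order(n^{1-\alpha}) = \Order(n^{1+\alpha})$ follows immediately from the self-reduction's triple bound, not from any property of the hash.

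Your plan instead hashes first into residue buckets and then tries to repair imbalance via \cref{lem:dominance}. That step does not obviously work: the self-reduction partitions $A$ by value intervals, whereas a heavy hash bucket is a set of elements sharing the same residue mod~$m$, which is scattered across those intervals and is not isolated by the partition. Also, ``peeling off a heavy sub-instance into a recursive 3SUM call'' is under-specified---the recursive instance would itself have to be turned into a Set Intersection instance, and the pseudo-solution bound is a ternary statistic that does not directly control bucket sizes the way a collision bound would. Finally, the initial reduction to $[n^3]$ via \cref{thm:univ-reduction} is unnecessary (the $\log U$ overhead in \cref{lem:hashing-few-pseudo-solutions} is already absorbed for any $U = n^{\Order(1)}$), though harmless. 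The correct ordering is interval groups first, hashing second; once you reorder, the balance issue you worry about disappears by construction.
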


As announced before, \cref{thm:set-disjointness,thm:set-intersection} immediately entail various deterministic lower bounds for e.g.\ Triangle Listing~\cite{KopelowitzPP16}, which in turn implies deterministic lower bounds e.g.\ for many database problems~\cite{NgoNRR14,KhamisNRR16,BringmannC22,KhamisCKO22}.

We remark that the range of parameters for which our deterministic reduction to Set Intersection applies is limited in comparison to the original randomized reduction. While we can only treat cases where we expect to list at least $\Omega(1)$ elements per queried intersection (on average), Kopelowitz et al.'s reduction also applies to the setting where the number of listed elements is much smaller. For all tight follow-up reductions based on Set Intersection, our parameterization suffices though.

\vskip 1em

\noindent
An answer to Question 1 does not only involve considering selected problems though---there are many scattered 3SUM-based reductions in the literature that have to be considered. We have made the effort to check for all these reductions (that we are aware of) whether our derandomization ideas apply. The number of papers is overwhelming---thus, in order to limit the scope of our project, we have only considered \emph{tight} reductions. In the following subsection we summarize the landscape of 3SUM-based reductions, and summarize which reductions admit derandomizations in what way.

\subsection{The Landscape of 3SUM-Based Reductions}
We have depicted the complete landscape of 3SUM-based reductions (that we are aware of) in \cref{fig:reductions} on \cpageref{fig:reductions}. In this figure, we draw a solid arrow from $P$ to $Q$ if there is a deterministic (tight) reduction from $P$ to $Q$, and a dotted arrow if the reduction is randomized. We have segmented the space of problems into seven regions depending on whether and how derandomizations are known. The light blue regions are deterministic prior to our work, our contribution is that we derandomize all problems in the yellow regions, and the dark blue region remains randomized.

\ifdefined\cameraready
  \input{figures/reductions-camery-ready}
\else
  \input{figures/reductions}
\fi

\subparagraph{(O) Originally Deterministic Reductions.}
Only a minority of 3SUM-based hardness results is deterministic out of the box---most problems rely directly or indirectly on a randomized reduction. This minority includes in particular the numerous geometric problems~\cite{GajentaanO95,BergGO97,ArkinCHMSSY98,BoseKT98,Erickson99,AbellanasHIKLMPS01,BarequetH01,SossEO03,ArchambaultEK05,EricksonHM06,CheongEH07,AronovH08}. These reductions usually build on simple algebraic transformations of the input. For instance, the reduction from 3SUM to the problem of finding a triple of colinear points simply creates point $(a, a^3)$ for each input number~\makebox{$a \in A$}. As a consequence all these reductions are deterministic.

Other two examples of similarly simple deterministic constructions are reductions to Hamming Pattern Matching under Polynomial Transformation~\cite{ButmanCCJLPPS13}, and to fully retroactive 3SUM data structures~\cite{ChenDGWXY18}.

\subparagraph{(C) Reductions via Convolution 3SUM\ifdefined\cameraready.\fi}
As mentioned before, the next wave of 3SUM-hard problems often involved reductions via the Convolution 3SUM problem. Except for the randomization in the reduction to Convolution 3SUM, many of these reductions were deterministic, and were thus fully derandomized by~\cite{ChanH20}. This list of problems includes Zero Triangle~\cite{WilliamsW13}, Jumbled Indexing~\cite{AmirCLL14}, Subset Sum for $k$-Enclosing Rectangle~\cite{ChanH19}, and Hausdorff Distance under translation~\cite{BringmannN21b}. The Zero Triangle problem was then deterministically reduced to the Matching Triangle and Triangle Collection problems~\cite{AbboudWY18}, which in turn led to deterministic reductions to a number of dynamic problems~\cite{AbboudWY18,Dahlgaard16}.

\subparagraph{(S) Reductions Based on the Self-Reduction.}
Another class of reductions that was already deranomized prior to our work is reductions that rely on an efficient \emph{self-reduction} for 3SUM (i.e., a reduction that reduces 3SUM on $n$ numbers to $n^{2\alpha}$ instances on $n^{1-\alpha}$ numbers each, for any $\alpha \in [0, 1]$). The first 3SUM self-reduction was given implicitly by Patrascu~\cite{Patrascu10}, and it was randomized. Later, deterministic self-reductions were proposed~\cite{LincolnWWW16, GronlundP18}. Based on these deterministic self-reductions, one can immediately derandomize the reductions to All-Numbers 3SUM~\cite{WilliamsW18}, and to detecting a 3-star or a 3-matching of total weight zero in edge-weighted graphs~\cite{AbboudL13}.

\subparagraph{(D) Reductions via Set Disjointness, Set Intersection, and Triangle Problems.}
Another major class of reductions, as announced before, is via the Set Disjointness or Set Intersection problems. There are deterministic reductions from Set Intersection to Triangle Listing (in various parameterizations, also involving the arboricity of the given graph)~\cite{KopelowitzPP16}, and via Triangle Listing to Dictionary Matching with one gap~\cite{AmirKLPPS19} and exists-connectivity queries in graph timelines~\cite{KarczmarzL15}. Besides, Triangle Listing reduces to several database problems~\cite{NgoNRR14,KhamisNRR16,BringmannC22,KhamisCKO22}. The Set Disjointness problem reduces trivially to the All-Edges Triangle problem which in turn reduces to several range-query problems~\cite{DurajK0W20}. As a consequence of our \cref{thm:set-disjointness,thm:set-intersection}, we have successfully derandomized all these reductions.

\subparagraph{(U) Reductions Relying on Small Universe Size.}
Other reductions are not inherently randomized, but rather rely on a randomized universe reduction. We can derandomize all such reductions by replacing the randomized universe reduction with our deterministic one (\cref{thm:univ-reduction}). Reductions of this type have been established for 3-Linear Degeneracy Testing~\cite{DGS20}\footnote{In \cite{DGS20} the authors define 3SUM and the class of 3-Linear Degeneracy Testing problems over a universe of cubic size $n^3$, and thereby implicitly rely on a universe reduction.} and grammar-compressed variants of vector inner product and matrix-vector multiplication~\cite{AbboudBBK20}. Similarly, the Dynamic $k$-Mismatch problem~\cite{CliffordGK0U22} relies on a quadratic-size universe reduction for All-Numbers 3SUM, and the Odd Abelian Square problem~\cite{RadoszewskiRSWZ21} relies on a quadratic-size universe reduction for Convolution 3SUM. To additionally derandomize these problems we extend our universe reduction for 3SUM to also cover these two cases (\cref{lem:univ-reduction-an-3sum,lem:univ-reduction-conv-3sum}).

\subparagraph{(A) Reductions via Ad-Hoc Derandomizations.}
Fortunately, most hardness results have been derandomized using our general toolset. There are however a couple of reductions that required additional tailored arguments on top of that. The randomized reduction to Monochromatic Convolution~\cite{LincolnPW20} relies, among other things, on a trick of adding random offsets to merge several sparse sets into a dense one without (too many) collisions. We provide a deterministic variant of this trick, and successfully derandomize the reduction. Next, to derandomize the reduction to Local Alignment~\cite{AbboudWW14} we follow the approach of Chan an He~\cite{ChanH20} of controlling the number of collisions, and add the trick of Abboud, Lewi, and Williams~\cite{AbboudLW14} of replacing numbers with multidimensional vectors with small entries. Finally, the reduction to Convolution Witnesses~\cite{GoldsteinKLP16} can be derandomized by plugging in our method for selecting a hash function with few false positives.

\subparagraph{(R) Reductions that Remain Randomized.}
Unfortunately the list of known 3SUM-based reductions includes three more results which we did not manage to derandomize~\cite{AbboudBKZ22,AbboudBF23,JinX23}. These papers rely on the recent ``short cycle removal'' technique that was originally proposed in~\cite{AbboudBKZ22} and optimized in~\cite{AbboudBF23,JinX23}, and led to tight lower bounds for 4-Cycle Listing, Approximate Distance Oracles with stretch close to~$2$ or~$3$, and for $4$-Linear Degeneracy Testing. At the heart of short cycle removal lies a structure-versus-randomness dichotomy which involves, in~\cite{AbboudBF23,JinX23}, finding \emph{structured} subsets via the algorithmic Balog-Szemerédi-Gowers (BSG) theorem~\cite{ChanL15}. Derandomizing the BSG algorithm constitutes the major challenge in derandomizing these reductions (though not the only one---the reductions also rely on a specialized kind of additive hashing which we have not attempted to derandomize here). We leave it as an important open problem to derandomize these few remaining reductions.


\subsection{Outline}
In \cref{sec:preliminaries} we fix some preliminaries. In \cref{sec:toolkit} we introduce our derandomization toolkit---this section starts with a high-level technical overview before diving into the proofs. In \cref{sec:univ-reduction} we combine our tools to obtain the deterministic universe reduction, and in \cref{sec:set-disjointness} we provide the deterministic hardness results for Set Disjointness and Set Intersection.
\ifdefined\FullVersionArXiv
  In \cref{sec:mono-conv,sec:local-alignment,sec:conv-witness}
\else
  In the full version of the paper
\fi
we provide even more derandomizations: for the Monochromatic Convolution, Local Alignment and Convolution Witness problems. 
\section{Preliminaries} \label{sec:preliminaries}
We use the standard notation $[n] := \set{1, \dots, n}$ and $\widetilde\Order(T) = T (\log T)^{\Order(1)}$. Throughout this paper, we consider the 3SUM problem (and its variants) formally defined as follows:




\begin{problem}[3SUM]
Decide whether in a given set $A \subseteq [U]$ there is a triple $a, b, c \in A$ with $a + b = c$.
\end{problem}

\begin{problem}[All-Numbers 3SUM]
Given a set $A \subseteq [U]$, decide for each $c \in A$ whether there is a pair $a, b \in A$ with $a + b = c$.
\end{problem}

\begin{problem}[Convolution 3SUM]
Given a vector $X \in [U]^n$, decide for each $k \in [n]$ whether there is a pair $i, j \in [n]$ with $i + j = k$ and $X[i] + X[j] = X[k]$.
\end{problem}

While we have stated the problems in a \emph{monochromatic} form (i.e., for a single set $A$ where we select $a, b, c \in A$), it is well-known and easy to check that they are equivalent to their \emph{trichromatic} variants of these problems (i.e., for three sets $A, B, C$ where we select $a \in A,\, b \in B,\, c \in C$) (by deterministic reductions). We occasionally rely on this statement in our proofs.
\section{Our Toolkit} \label{sec:toolkit}
In this section we develop the toolkit that is necessary to prove our main derandomizations. We will start with a technical overview of the two tools in \cref{sec:toolkit:sec:hashing-overview,sec:toolkit:sec:self-reduction-overview}, and provide the missing formal proofs in \cref{sec:toolkit:sec:proofs}.

\subsection{Tool 1: Deterministic Additive Hashing} \label{sec:toolkit:sec:hashing-overview}
A hash function $h$ is \emph{additive} if there is some modulus $m$, such that $h(a) + h(b) = h(a + b) \bmod m$ for all inputs $a, b$. Many 3SUM-hardness reductions involve additive hash functions (or \emph{pseudo-additive} functions, which have the slightly weaker property that $h(a) + h(b) - h(a + b)$ takes only a constant number of values), and one of the simplest randomized constructions is the family of functions~\makebox{$h(x) = x \bmod m$} where $m$ is a \emph{random} prime of prescribed size.

In our setting we are required to select an additive hash function~$h$ \emph{deterministically}. Previous work~\cite{ChanL15,ChanH20} has already faced the same challenge, and provided the following simple solution. The insight is that function $h(x) = x \bmod m$ is still an effective hash function even if $m$ is the product of several smaller primes $m = p_1 \cdot \ldots \cdot p_R$, $p_i \approx m^{1/R}$. In order to select $m$, one can follow the method of conditional expectations: Instead of picking all primes at the same time, we will fix~\makebox{$p_1, \dots, p_R$} step by step. In each step, we pick a prime $p_i$ which is \emph{at least as good as what we would expect from a random prime.} In particular, it suffices to pick a locally optimal prime. To this end, we can exhaustively enumerate all primes of size~\makebox{$\approx m^{1/R}$} and test which one suits us best. All in all, this approach works to deterministically construct hash functions for all properties that can be efficiently tested.

\subparagraph{Additive Hashing with Few Collisions.}
One such example is Chan and He's deterministic reduction from 3SUM to Convolution 3SUM~\cite{ChanH20}. Specifically, they require a hash function which has few \emph{collisions} (i.e., pairs~\makebox{$a, b \in A$} with $h(a) = h(b)$). Since it is easy to count the number of collisions for a given hash function in linear time, the above recipe leads to the following derandomization:

\begin{lemma}[Deterministic Additive Hashing with Few Collisions~\cite{ChanL15}] \label{lem:hashing-few-collisions}
Let $0 \leq \mu \leq 2, \delta > 0$. There is a deterministic algorithm that, given a set $A \subseteq [U]$ of size $n$, finds a modulus $m \in [n^\mu, 2n^\mu)$ such that
\begin{equation*}
    \#\set{(a, b) \in A^2 : a \equiv b \mod m} \leq n^{2-\mu} (\log U)^{\Order(1/\delta)}.
\end{equation*}
The algorithm runs in time $\Order(n^{1+\delta})$.
\end{lemma}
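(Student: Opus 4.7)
The plan is to instantiate the method-of-conditional-expectations recipe sketched in the overview. I set $R := \lceil 2/\delta \rceil$ and aim to produce the modulus as a product $m = p_1 \cdots p_R$, where each $p_i$ is a prime of size about $P := n^{\mu/R}$; then $m = \Theta(n^\mu)$, and with a little care about the ranges from which the $p_i$'s are drawn the product can be forced into $[n^\mu, 2n^\mu)$.

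First I would establish a probabilistic warm-up: for a uniformly random prime $p$ in $[P, 2P]$ and any nonzero integer $z$ with $|z| \leq U$, one has $\Pr[p \mid z] = O(\log U / P)$. This is because $z$ has at most $O(\log U / \log P)$ prime divisors in $[P, 2P]$, while by the prime number theorem the number of primes in that range is $\Theta(P / \log P)$. Set $F_i := \#\{(a, b) \in A^2 : a \neq b \text{ and } a \equiv b \pmod{p_1 \cdots p_i}\}$ so $F_0 \leq n^2$ and the target collision count is at most $n + F_R$. Linearity of expectation over a fresh random prime $p_{i+1}$ gives
\[
\mathbb{E}_{p_{i+1}}[F_{i+1}] \leq F_i \cdot O(\log U / P).
\]

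To derandomize, I exhaust all candidates in each round: an Eratosthenes-style sieve in preprocessing enumerates the $\Theta(P/\log P)$ primes in the current window; for every candidate $p_{i+1}$ I compute $F_{i+1}$ in $O(n)$ time by bucketing the elements of $A$ by residue modulo $p_1 \cdots p_{i+1}$; and I keep whichever prime minimizes $F_{i+1}$. Such a choice always beats the random expectation, so after $R$ rounds
\[
F_R \leq n^2 \cdot (O(\log U)/P)^R = n^{2-\mu}(\log U)^R = n^{2-\mu}(\log U)^{O(1/\delta)},
\]
which is exactly the bound claimed. The per-round cost is $\widetilde{O}(nP) = \widetilde{O}(n^{1+\mu/R})$, and since $R \geq \mu/\delta$ the $R$ rounds together fit into $O(n^{1+\delta})$, matching the stated runtime.

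The main obstacle I expect is the boundary constraint $m \in [n^\mu, 2n^\mu)$ rather than the loose $m = \Theta(n^\mu)$ that the raw product yields. I would handle this by drawing each $p_i$ from a narrower window $[P_i, 2^{1/R} P_i]$, which by PNT still contains enough primes whenever $\mu/R$ is not too small, and by selecting the final prime from whatever sub-interval makes the running product land in $[n^\mu, 2n^\mu)$. The degenerate regime where $n^{\mu/R}$ is too small for PNT to apply can be handled separately by a single-prime construction: enumerate all primes in $[n^\mu, 2n^\mu)$ directly, and pick the best one, which stays within the time budget when $\mu \leq 1 + \delta$. A secondary technical point is that the divisor bound must be applied along the way to differences $a-b$ that are themselves products of earlier primes, but since $|a - b| \leq U$ throughout, the same estimate $O(\log U / P)$ persists in every round.
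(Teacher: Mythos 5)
Your proof is correct and instantiates exactly the method-of-conditional-expectations recipe that the paper sketches in \cref{sec:toolkit:sec:hashing-overview} and spells out in full for the analogous \cref{lem:hashing-few-pseudo-solutions}. One simplification available for the boundary constraint $m \in [n^\mu, 2n^\mu)$: rather than drawing primes from narrower windows, the paper's proof of \cref{lem:hashing-few-pseudo-solutions} just grows $m$ until $m \ge \tfrac12 n^{\mu-\delta}$ and then sets $m \gets m \cdot \lceil n^\mu / m \rceil$, which lands $m$ in the required range and is harmless for collisions since $a \equiv b \pmod{mk}$ implies $a \equiv b \pmod{m}$, so the final multiplication can only decrease the count.
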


\subparagraph{Additive Hashing with Few False Positives.}
Unfortunately, for most 3SUM-based reductions, bounding the number of collisions is not sufficient though. The property that most reductions rely on is that the number of triples $a, b, c \in A$ with $a + b \equiv c \mod m$ is small---we often refer to such a triple as a \emph{pseudo-solution} or a \emph{false positive}. Unfortunately, counting the number of false positives is in general a hard problem---it is in fact another 3SUM instance $\set{a \bmod m : a \in A}$ which cannot be expected to be solvable in subquadratic time.

To circumvent this barrier, we exploit a simple observation: Note that the universe size of the reduced 3SUM instance is only $m$, hence we can count the 3SUM solutions in time $\widetilde\Order(m)$ using the Fast Fourier Transform. Building on that insight, we establish the following new lemma.

\begin{lemma}[Deterministic Additive Hashing with Few False Positives] \label{lem:hashing-few-pseudo-solutions}
Let $0 \leq \mu < 3, \delta > 0$. There is a deterministic algorithm that, given 3SUM instances $A_1, \dots, A_g \subseteq [U]$ of size $\Order(n)$, either finds a positive modulus $m \in [n^\mu, 2n^\mu)$ such that
\begin{equation*}
    \sum_{i=1}^g \#\set{(a,b,c) \in A_i^3 : a + b \equiv c \mod m } \leq g n^{3-\mu+\delta} (\log U)^{\Order(1/\delta)},
\end{equation*}
or decides that at least one instance is a yes-instance. The algorithm runs in $\widetilde\Order(g n^{\max(\mu, 1)+\delta})$ time.
\end{lemma}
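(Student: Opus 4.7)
The plan is to execute the method-of-conditional-expectations template sketched in the overview, where the crucial new ingredient will be that pseudo-solution counts can be computed efficiently via FFT. I would set $R = \Theta(1/\delta)$ and $P \approx n^{\mu/R}$, then build $m = p_1 p_2 \cdots p_R$ one prime at a time, where each $p_i$ is drawn from the primes in $[P, 2P]$ (with a minor adjustment to the range of the last prime, we can ensure that the product lies in $[n^\mu, 2n^\mu)$). The key new subroutine is, for any candidate modulus $m'$, to compute
$$T(m') \;:=\; \sum_{i=1}^{g} \#\set{(a,b,c) \in A_i^3 : a + b \equiv c \pmod{m'}}.$$
This reduces to $g$ cyclic convolutions: for each $A_i$, form the length-$m'$ frequency vector of $A_i \bmod m'$, compute its cyclic self-convolution via FFT, and take the inner product with the frequency vector of $\set{c \bmod m' : c \in A_i}$. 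This runs in $\widetilde\Order(g m')$ time.

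For the probabilistic analysis, let $T_{\mathrm{real}}$ denote the total number of genuine 3SUM solutions across all $A_i$. For any non-real triple $(a,b,c)$, the integer $x := a+b-c$ is non-zero with $|x| \leq 2U$, so it has at most $\Order(\log U)$ prime factors, and by the prime number theorem a uniformly random prime in $[P, 2P]$ divides $x$ with probability $\Order(\log U \cdot \log P / P)$. Hence, for $R$ independent uniform primes in $[P,2P]$, the probability that $p_1 \cdots p_R$ divides $x$ is at most $(\log U)^{\Order(1/\delta)}/n^\mu$ (absorbing $\log P = \Order(\log n)$). Summing over the at most $g n^3$ non-real triples gives $\mathbb{E}[T(m)] \leq T_{\mathrm{real}} + g n^{3-\mu} (\log U)^{\Order(1/\delta)}$.

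Next, I apply the method of conditional expectations: at step $i$, I enumerate all primes in the prescribed range and pick the one minimizing $T(p_1 \cdots p_{i-1} \cdot p_i)$. A brief calculation shows that the conditional expectation of $T(m)$ admits the upper bound $\mathrm{UB}_i := T_{\mathrm{real}} + (T(p_1 \cdots p_i) - T_{\mathrm{real}}) \cdot (\Order(\log U)/P)^{R-i}$, which is monotone in $T(\cdot)$ and satisfies $\mathbb{E}_{p_i}[\mathrm{UB}_i] \leq \mathrm{UB}_{i-1}$; hence greedy minimization of $T$ also drives down $\mathrm{UB}_i$, so at the end $T(m) = \mathrm{UB}_R \leq \mathrm{UB}_0 \leq T_{\mathrm{real}} + g n^{3-\mu}(\log U)^{\Order(1/\delta)}$. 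If $T(m) \leq g n^{3-\mu+\delta}(\log U)^{\Order(1/\delta)}$, output $m$; otherwise, for large enough $n$ we must have $T_{\mathrm{real}} > 0$, so we correctly declare that at least one input instance is a yes-instance.

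For the running time, step $i$ enumerates $\Order(P/\log P)$ candidate primes and runs one FFT of length $\Theta(n^{\mu i /R})$ per candidate on each of the $g$ instances; summing over $i$, the dominant cost is the last step, $\widetilde\Order(g \cdot n^\mu \cdot P) = \widetilde\Order(g \cdot n^{\mu+\delta})$, plus $\Order(gn)$ to read the inputs — totaling $\widetilde\Order(g n^{\max(\mu,1)+\delta})$ as claimed. The main obstacle, and the reason earlier work only controlled collisions, is the apparent circularity of counting pseudo-solutions: on its face this looks like solving another 3SUM instance. I expect the central insight to be that after reducing modulo $m'$ the problem lives over a universe of size only $m'$, small enough for FFT-based counting to comfortably beat the $n^2$ barrier.
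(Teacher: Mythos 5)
Your proposal is correct and follows essentially the same approach as the paper's proof: the key insight — that pseudo-solution counts can be computed in $\widetilde\Order(m')$ time via FFT because the reduced instance lives over a universe of size $m'$ — is exactly the paper's insight, and the overall strategy of selecting primes one at a time by the method of conditional expectations is the same. The only differences are minor bookkeeping choices: you track an explicit potential $\mathrm{UB}_i$ (incorporating $T_{\mathrm{real}}$) and argue it is non-increasing under the greedy choice, whereas the paper directly proves by induction that $S(m_i) \leq g n^3 n^{-i\delta}(2\log U)^i$ under the assumption that all inputs are no-instances and handles yes-instances afterward; and for fitting $m$ into $[n^\mu, 2n^\mu)$ the paper stops once $m \geq \frac12 n^{\mu-\delta}$ and multiplies by the possibly-composite factor $\lceil n^\mu / m \rceil$ (which only shrinks the pseudo-solution count), a slightly cleaner route than adjusting the range of the final prime as you suggest, since it sidesteps any concern about prime density in a shifted interval.
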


We remark that while we state this lemma in a slightly more general form, in almost all applications we only need the lemma for the special case $g = 1$.

\subsection{Tool 2: Deterministic Self-Reduction} \label{sec:toolkit:sec:self-reduction-overview}
As our second major tool we rely on a deterministic \emph{self-reduction} for 3SUM. Here by a self-reduction we mean an algorithm that reduces a 3SUM instance $A$ to some other 3SUM instances~\makebox{$A_1, \dots, A_N$} with the property that~$A$ is a yes-instance if and only if there is a yes-instance $A_i$. This reduction is considered efficient only if $\sum_i |A_i|^2 \leq \widetilde\Order(|A|^2)$ (i.e., if the brute-force running times match).

Self-reductions for 3SUM are known based on two very different approaches. The first approach relies on additive hashing. The second approach, originally due to~\cite{LincolnWWW16,GronlundP18} (with ideas borrowed from~\cite{CzumajL09}), is conceptually simpler and also deterministic. The idea is to bucket $A$ into smaller sets in such a way that only few triples of buckets can contain a solution. For our purposes we need a slightly stronger version than what was stated in the previous papers, where we also consider the universe size of the constructed subinstances:

\begin{lemma}[Deterministic Self-Reduction] \label{lem:dominance}
Let $A \subseteq [U]$ be a set of size $n$, and let $g \geq 1$. We can deterministically construct a partition of $A$ into subsets $A_1, \dots, A_g$, and a set of triples $R \subseteq [g]^3$ of size~\makebox{$\Order(g^2)$} such that:
\begin{itemize}
    \item Each set $A_i$ has size $\Order(n / g)$ and can be covered by an interval of length $\Order(U / g)$.
    \item For all triples $a, b, c \in A$ with $a + b = c$, there is some $(i, j, k) \in R$ with $a \in A_i,\, b \in A_j,\, c \in A_k$.
\end{itemize}
The algorithm runs in time $\widetilde\Order(n + g^2)$.
\end{lemma}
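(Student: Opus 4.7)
The plan is to construct the partition by interleaving rank-based and value-based bucketing. First I would sort $A$ in increasing order, and then scan through the sorted sequence, opening a new bucket whenever either (i) the current bucket contains $\lceil n/g \rceil$ elements, or (ii) the next element crosses a value boundary of the form $p \lceil U/g \rceil$ for some integer $p$. By construction every bucket has at most $\lceil n/g \rceil$ elements and fits inside a value interval of length $\lceil U/g \rceil$. Since trigger (i) can fire at most $g$ times and trigger (ii) at most $g-1$ times, the total number of buckets is at most $2g$, matching the required $\Order(g)$ partition after absorbing a constant factor. Writing $\ell_i := \min A_i$ and $r_i := \max A_i$, the intervals $[\ell_i, r_i]$ are pairwise disjoint since the buckets are contiguous blocks of sorted $A$.

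For the triple set I would take $R := \{(i,j,k) : [\ell_k, r_k] \cap [\ell_i+\ell_j, r_i+r_j] \neq \emptyset\}$. Any solution $a+b=c$ with $a \in A_i$, $b \in A_j$, $c \in A_k$ forces $c$ to lie in both intervals, so $(i,j,k) \in R$; hence the covering property is immediate.

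The main estimate is $|R| = \Order(g^2)$, which I would prove by double counting. Fix $(i,j)$ and let $J_{ij} := [\ell_i+\ell_j, r_i+r_j]$; its length is at most $2U/g$, so $J_{ij}$ touches at most three value intervals $[p\lceil U/g\rceil, (p+1)\lceil U/g\rceil)$. The bucket ranges $[\ell_k, r_k]$ are disjoint and ordered, so the $k$'s with $(i,j,k) \in R$ form a contiguous block, all but at most two of which are fully contained in $J_{ij}$. Splitting the fully-contained buckets into those of full size $\lceil n/g \rceil$ (at most $t_{ij} g / n$ of them, where $t_{ij} := |A \cap J_{ij}|$) and the undersized remainder buckets produced by trigger (ii) (at most one per value interval $J_{ij}$ touches, so at most three) gives $N_{ij} := \#\{k : (i,j,k) \in R\} \leq t_{ij} g/n + \Order(1)$. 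Summing yields $|R| \leq \Order(g^2) + (g/n) \sum_{i,j} t_{ij}$. To close the argument I would swap the order of summation and write $\sum_{i,j} t_{ij} = \sum_{c \in A} \#\{(i,j) : c \in J_{ij}\}$; for fixed $c$ this count equals the number of $(i,j)$ with $[\ell_j, r_j] \cap [c-r_i, c-\ell_i] \neq \emptyset$, which by repeating the same middle/boundary split over $j$ for each $i$---and using that each value $c-a$ lies in at most one bucket range---is $\Order(g)$ per $c$. Thus $\sum_{i,j} t_{ij} = \Order(ng)$ and $|R| = \Order(g^2)$.

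For the runtime, sorting takes $\widetilde\Order(n)$, the scan producing the buckets takes $\Order(n)$, and $R$ is enumerated by iterating over the $\Order(g^2)$ pairs $(i,j)$ and binary-searching the bucket list (ordered by value) for the contiguous block of valid $k$'s, at total cost $\Order(g^2 \log g)$. The hardest step will be the $|R| = \Order(g^2)$ bound: neither a pure rank-based partition (which fails the interval-length property) nor a pure value-based one (which fails the size property) suffices on its own, so the hybrid is forced; the subtlety is that the hybrid introduces up to $\Order(g)$ undersized remainder buckets, and one must verify that they contribute only an $\Order(1)$ additive slack inside each $N_{ij}$, which is precisely what the observation that each value interval contributes at most one such remainder bucket ensures.
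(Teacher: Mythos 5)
Your proposal is correct and matches the paper on the first two structural choices: the hybrid rank-and-value greedy bucketing of the sorted array, and the definition of $R$ as exactly the non-trivial triples $(i,j,k)$ for which the summed interval $[\ell_i+\ell_j,\, r_i+r_j]$ meets $[\ell_k, r_k]$. Where you diverge is in proving $|R| = \Order(g^2)$, and the divergence is substantive. The paper endows $[g]^3$ with a partial order (increasing in the first two coordinates, decreasing in the third) under which any chain contains at most one non-trivial triple, and then covers $[g]^3$ by $\Order(g^2)$ chains of the form $\set{(i,\, i+a,\, -i+b) : i \in [g]}$ indexed by $a,b \in \set{-g,\dots,g}$---a short antichain-style counting argument. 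You instead bound, for each fixed $(i,j)$, the number of valid $k$'s by $t_{ij}\, g/n + \Order(1)$ (the $\Order(1)$ slack absorbing the two boundary buckets and the constantly many undersized trigger-(ii) buckets inside $J_{ij}$), and then close with a genuine double count $\sum_{i,j} t_{ij} = \sum_c \#\set{(i,j) : c \in J_{ij}} = \Order(ng)$, using disjointness of the bucket ranges to get the inner $\Order(g)$ per $c$. Both routes work; the paper's chain-cover argument is slicker and nearly bookkeeping-free, whereas yours is more elementary and makes explicit \emph{why} the hybrid partition is forced (the full-size buckets controlled by $t_{ij}$ and the undersized remainder buckets controlled by the value boundaries play separate roles). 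One cosmetic discrepancy: your thresholds $\lceil n/g\rceil$ and $\lceil U/g\rceil$ give up to $2g$ buckets rather than $g$; the paper uses $2n/g$ and $2U/g$ to land exactly at $g$, but this is a trivial rescaling. Your runtime analysis (sort, scan, then $\Order(g^2\log g)$ binary searches to enumerate the contiguous $k$-blocks) matches the paper's $\widetilde\Order(n + g^2)$.
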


\subsection{Proofs} \label{sec:toolkit:sec:proofs}
We finally provide the missing proofs of \cref{lem:hashing-few-pseudo-solutions,lem:dominance}.

\begin{proof}[Proof of \cref{lem:hashing-few-pseudo-solutions}]
For a modulus $m$, we call a triple $(a, b, c) \in A_i^3$ satisfying $a + b \equiv c \mod m$ a \emph{pseudo-solution} of $m$. Let us write
\begin{equation*}
    S(m) = \sum_{i=1}^g \#\set{(a,b,c) \in A_i^3 : a + b \equiv c \mod m }
\end{equation*}
to denote the number of pseudo-solutions of $m$; our goal is to compute a modulus $m$ that with few pseudo-solutions, $S(m) \leq g n^{3-\mu} (\log U)^{\Order(1/\delta)}$. As outlined before, our strategy is to let $m$ be the product of several small primes, each of size roughly $n^\delta$. These primes are selected one by one, such that each next prime narrows down the number of false positives by an appropriate amount. Specifically, consider the following process: We initialize $m \gets 1$ and $P \subseteq [n^\delta, 2n^\delta)$ to be the subset of primes of size roughly~$n^\delta$, and run the following three steps:
\begin{enumerate}
    \item For each prime $p \in P$, compute $S(m \cdot p)$ using FFT.
    \item Select the prime $p \in P$ that minimizes $S(m \cdot p)$. Update $m \gets m \cdot p$ and $P \gets P \setminus \set{p}$.
    \item If $m \geq \frac12 n^{\mu - \delta}$, then update $m \gets m \cdot \ceil{n^\mu / m}$ and stop. Otherwise, go to step 1.
\end{enumerate}

\subparagraph{Running Time.}
The easier part of the proof is to bound the running time of this process. Note that we increase~$m$ by a factor of at least $n^\delta$ in each step. Hence, after at most $\Order(1)$ iterations we have increased~$m$ beyond the threshold $\frac12 n^{\mu - \delta}$ and the process stops. The running time is dominated by step 1: For each prime $p$, we compute $S(m \cdot p)$ using the Fast Fourier Transform. More precisely, for each 3SUM instance~$A_i$ we first prepare the (multi-)set~\makebox{$A_i \bmod {(mp)}$} obtained by hashing all numbers modulo $mp$. We can compute the number of 3SUM solutions in this reduced instance in time $\Order(mp \log (mp)) = \widetilde\Order(n + n^\mu)$ (using that $m \leq n^{\mu - \delta}$ and $p \leq n^\delta$). Repeating this call for all~$g$ instances and all $\Order(n^\delta)$ primes amounts for time~\smash{$\widetilde\Order(n^\delta \cdot g \cdot (n + n^\mu)) = \widetilde\Order(g n^{\max(1, \mu) + \delta})$}.

\subparagraph{Correctness.}
It remains to bound $S(m)$, where is $m$ is the modulus computed by the above process. Let~$m_i$ denote the modulus at the $i$-th step of the process (with $m_0 = 1$) and assume that all given 3SUM instances are unsatisfiable. We prove by induction that
\begin{equation*}
    S(m_i) \leq \frac{g n^3}{n^{i \delta}} \cdot (2 \log U)^{i}.
\end{equation*}
For $i = 0$, this is clearly true. So let $i > 0$ and assume that the induction hypothesis holds for~\makebox{$i - 1$}. Our strategy is to show that for a \emph{random} prime $p \in [n^\delta, 2n^{\delta})$, with good probability we have
\begin{equation*}
    S(m_{i-1} \cdot p) \leq \frac{g n^3}{n^{i \delta}} \cdot \Order(\log U)^{i}.
\end{equation*}
Consider an arbitrary triple~\makebox{$a, b, c \in [U]$} with $a + b \neq c$. The event $a + b \equiv c \bmod p$ is equivalent to~$p$ being a divisor of the number $a + b - c \in [-U\,..\,2U]$, which happens with probability at most
\begin{equation*}
    \frac{\log_{n^\delta}(2U)}{\Omega(n^\delta \log^{-1}(n))} = \Order(n^{-\delta} \log U),
\end{equation*}
since $\log_{n^\delta}(2U)$ is the maximum number of divisors of size at least $n^\delta$ of $a + b - c$, and since there are~\smash{$\Omega(n^\delta \log^{-1}(n))$} primes in $[n^\delta, 2n^\delta)$ by the Prime Number Theorem. Therefore, among the pseudo-solutions modulo $m_{i-1}$, only an expected $\Order(n^{-\delta} \log U)$-fraction is also a pseudo-solution modulo $m_{i-1} \cdot p$. Therefore, by Markov's inequality with probability at least~$\frac12$ we have that
\begin{align*}
    S(m_{i-1} \cdot p)
    &\leq 2 \cdot S(m_{i-1}) \cdot \Order(n^{-\delta} \log U) \\
    &\leq \frac{g n^3}{n^{(i-1) \delta}} \cdot \Order(\log U)^{i-1} \cdot \Order(n^{-\delta} \log U) \\
    &= \frac{g n^3}{n^{i \delta}} \cdot \Order(\log U)^{i}.
\end{align*}
It follows that there is a successful prime $p \in [n^\delta, 2n^{\delta})$ with \smash{$S(m_{i-1} \cdot p) \leq \frac{g n^3}{n^{i \delta}} \cdot \Order(\log U)^{i}$}. Recall that the algorithm selects the prime that \emph{minimizes} the number of pseudo-solutions $S(m_{i-1} \cdot p)$ and so indeed we have \smash{$S(m_i) \leq  \frac{g n^3}{n^{i \delta}} \cdot \Order(\log U)^{i}$}.

Finally, recall that the process terminates no sooner than $m_i \geq \frac12 n^{\mu - \delta}$. Since $n^\delta \leq m_i < (2n^\delta)^i$, it follows that the final iteration count is at least $i \geq \frac{\mu}{\delta} - \Order(1)$ and at most $i \leq \frac\mu\delta$. Hence, for the final modulus $m$ we have indeed
\begin{equation*}
    S(m) \leq \frac{gn^3}{n^{\mu - \Order(\delta)}} \cdot \Order(\log U)^{\frac\mu\delta} = g n^{3-\mu+\Order(\delta)} \cdot (\log U)^{\Order(1/\delta)}.
\end{equation*}
By decreasing $\delta$ by a constant factor, we obtain the claimed bound. Recall that this bound is conditioned on the assumption that the given 3SUM instances are no-instances. However, if our algorithm computes a modulus $m$ with unexpectedly many pseudo-solutions (which we can verify in time $\widetilde\Order(g n^{\max(1, \mu) + \delta})$), we can infer that at least one of the given instances is a yes-instance.

As a final detail, we show that the algorithm produces a modulus $m$ in the range $[n^\mu, 2n^{\mu})$. Indeed, in line 3, we execute the final update $m \gets m \cdot \ceil{n^\mu / m}$ only after we have increased~$m$ to be in the range $\frac12 n^{\mu-\delta} \leq m < n^\mu$. It follows that then $m \cdot \ceil{n^\mu / m} \leq m \cdot (n^\mu / m + 1) < 2n^\mu$.
\end{proof}

This completes the treatment of the deterministic additive hashing tool. Let us next prove the refined deterministic self-reduction.

\begin{proof}[Proof of \cref{lem:dominance}]
Our goal is to partition $A$ into subsets~$A_1, \dots, A_g$ with the following two properties:
\begin{itemize}
    \item $|A_i| \leq 2n / g$, and
    \item $\max(A_i) - \min(A_i) \leq 2 U / g$ (i.e., $A_i$ can be covered by an interval of length $2 U / g$).
\end{itemize}
By constructing this partition greedily, by scanning over the elements in $A$ in sorted order including all elements until one of the two conditions is violated, it is easy to see that both rules lead to at most $g / 2$ stops and therefore $g$ groups suffice in total. We assume that the groups constructed in this way are ordered in the natural way (i.e., $\max(A_i) < \min(A_{i+1})$ for all $i$).

The next step is to construct the set $R \subseteq [g]^3$. Naively, including all~$g^3$ triples $(i, j, k)$ would be correct, if there is a 3-sum in the original instance it is certainly contained in one of the subinstance~$(A_i, A_j, A_k)$. However, many of these instances are \emph{trivial} in the sense that either
\begin{enumerate}
    \item $\min(A_i) + \min(A_j) > \max(A_k)$, or
    \item $\max(A_i) + \max(A_j) < \min(A_k)$;
\end{enumerate}
both conditions make it impossible for $(A_i, A_j, A_k)$ to contain a 3-sum. To improve the reduction, we will therefore include in $R$ only triples that are not trivial.

We argue that the number of remaining instances (that is, the number of instances that falsify both previous conditions) is at most~$\Order(g^2)$. To see this, consider the partially ordered set~\makebox{$([g]^3, \prec)$} where we let~\makebox{$(i, j, k) \prec (i', j', k')$} if and only if~\makebox{$\max(A_i) < \min(A_{i'})$}, \makebox{$\max(A_j) < \min(A_{j'})$} and \makebox{$\max(A_k) < \min(A_{k'})$}. With this definition, if $(i, j, k) \prec (i', j', k')$, then~$(i, j, k)$ or $(i', j', k')$ must be a trivial no-instance. Indeed, suppose that $(i, j, k)$ and $(i', j', k')$ both falsify the conditions~(1) and~(2). Then:
\begin{align*}
    0 &\leq \max(A_i) + \max(A_j) - \min(A_k) &&\text{(since $(i, j, k)$ falsifies (2))} \\
    &< \min(A_{i'}) + \min(A_{j'}) - \max(A_{k'}) &&\text{(since $(i, j, k) \prec (i', j', k')$)} \\
    &\leq 0. &&\text{(since $(i', j', k')$ falsifies (1))}
\end{align*}
More generally, on any chain in the lattice there can be at most one nontrivial instance. Therefore, in order to bound the number of nontrivial instances, it suffices to cover $[g]^3$ by $\Order(g^2)$ chains. This cover is easy to construct: For all~\makebox{$a, b \in \set{-g, \dots, g}$}, take the chain~\makebox{$\set{(i, i + a, -i + b) : i \in [g]}$}. Any triple $(i, j, k)$ is covered by the chain for $a = j - i$ and~\makebox{$b = k + i$}.

It remains to analyze the running time. Construction the partition of $A$ takes linear time after sorting~$A$ in time $\Order(n \log n)$. We can construct the set $R$ by enumerating all $(i, j) \in [g]^2$, and enumerating for each such pair only the subset of $k$'s with $\min(A_i) + \min(A_j) \leq \max(A_k)$ and $\max(A_i) + \max(A_j) \geq \min(A_k)$. By preprocessing~$A$ with a range query data structure in near-linear time, this step runs in time~\smash{$\widetilde\Order(g^2 + |R|) = \widetilde\Order(g^2)$}.
\end{proof}

This lemma immediately implies the following deterministic self-reductions for 3SUM and the all-numbers variant of 3SUM. (Here, the running time increases to $\widetilde\Order(n g)$ since we explicitly write down all constructed instances, instead of concisely describing the instances via subsets of the original instance.)

\begin{corollary}[Deterministic Self-Reduction for 3SUM] \label{lem:self-reduction-3sum}
For any $g \geq 1$, a given 3SUM instance of size~$n$ over the universe $[U]$ can deterministically be reduced to $\Order(g^2)$ 3SUM instances of size~$\Order(n / g)$ over the universe~\makebox{$[\Order(U / g)]$}. The running time of the reduction is \smash{$\widetilde\Order(n g)$}.
\end{corollary}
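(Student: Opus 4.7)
The plan is to apply \cref{lem:dominance} with the given parameter $g$ essentially out of the box: this yields a partition $A_1,\dots,A_g$ with $|A_\ell|=\Order(n/g)$ and $\max A_\ell-\min A_\ell=\Order(U/g)$, together with a set $R\subseteq[g]^3$ of size $\Order(g^2)$ such that every genuine solution $a+b=c$ in $A$ lives in some $(A_i,A_j,A_k)$ with $(i,j,k)\in R$. For each such triple I would emit one trichromatic 3SUM instance on $(A_i,A_j,A_k)$, and then invoke the standard deterministic monochromatic--trichromatic equivalence noted in \cref{sec:preliminaries} to turn it into an honest monochromatic 3SUM instance, paying only a constant factor in size and universe.

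The only work beyond bookkeeping is compressing the universe from $[U]$ down to $[\Order(U/g)]$, which I would do by shifting. Let $L_\ell=\min A_\ell$ and replace $(A_i,A_j,A_k)$ by
\[
    \bigl(A_i-L_i,\;A_j-L_j,\;(A_k-L_i-L_j)\cap[0,\,\Order(U/g)]\bigr).
\]
The first two shifted sets lie in $[0,2U/g]$ by construction. For any real solution $a+b=c$ with $a\in A_i$, $b\in A_j$, $c\in A_k$, the value $c-L_i-L_j=(a-L_i)+(b-L_j)$ lies in $[0,4U/g]$, so truncating the shifted $A_k$ to a window of width $\Order(U/g)$ preserves every valid solution while keeping the universe small. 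Note that this truncation can only discard elements, so it cannot introduce any new solutions either.

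To finish, I would count: there are $|R|=\Order(g^2)$ constructed instances, each of size $\Order(n/g)$ over universe $[\Order(U/g)]$, exactly as claimed. For the running time, \cref{lem:dominance} itself runs in $\widetilde\Order(n+g^2)$, and materializing all $\Order(g^2)$ shifted instances of size $\Order(n/g)$ costs $\Order(ng)$ arithmetic (plus polylogarithmic overhead for sorting, for the truncation, and for the trichromatic-to-monochromatic conversion). The total is $\widetilde\Order(ng)$.

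I do not foresee a substantive obstacle here; the work is essentially packaging. The one thing worth being slightly careful about is that the shift step genuinely loses no solutions, which is immediate from $c=a+b$ forcing $c-L_i-L_j$ into $[0,4U/g]$ as above; no appeal to the nontriviality conditions $(1)$ and $(2)$ from the proof of \cref{lem:dominance} is needed for correctness, although those conditions are what make the $\Order(g^2)$ bound on $|R|$ (and hence on the total output size) tight.
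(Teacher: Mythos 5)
Your proposal is correct and fleshes out exactly the argument the paper treats as immediate: apply \cref{lem:dominance}, emit a (shifted, truncated) trichromatic instance for each triple in $R$, convert to monochromatic with a standard offset trick, and observe that materializing $\Order(g^2)$ instances of size $\Order(n/g)$ dominates the $\widetilde\Order(n+g^2)$ cost of \cref{lem:dominance}, giving $\widetilde\Order(ng)$ total. The shift $A_k-L_i-L_j$ followed by truncation to a window of width $\Order(U/g)$ is the right way to realize the universe bound, and your remark that this loses no solutions follows directly from the interval bounds on $A_i$ and $A_j$.
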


\begin{corollary}[Deterministic Self-Reduction for All-Numbers 3SUM] \label{lem:self-reduction-an-3sum}
For any $g \geq 1$, a given AN-3SUM instance of size $n$ over the universe $[U]$ can deterministically be reduced to $\Order(g^2)$ AN-3SUM instances of size~$\Order(n / g)$ over the universe~\makebox{$[\Order(U / g)]$}. The running time of the reduction is~\smash{$\widetilde\Order(n g)$}.
\end{corollary}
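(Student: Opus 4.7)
My proof plan is to combine Lemma \ref{lem:dominance} with the monochromatic--trichromatic equivalence for AN-3SUM noted in Section \ref{sec:preliminaries}.

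First, I apply Lemma \ref{lem:dominance} with the given parameter $g$ to partition the input $A$ into groups $A_1, \dots, A_g$ of size $\Order(n/g)$, each contained in an interval of length $\Order(U/g)$, together with a set $R \subseteq [g]^3$ of size $\Order(g^2)$ covering every 3-sum triple. For a fixed $c \in A$, letting $k$ be the unique index with $c \in A_k$, the covering property guarantees that the AN-3SUM answer for $c$ in the original instance is the logical OR, over all $(i, j, k) \in R$, of the trichromatic AN-3SUM answers to ``does there exist $a \in A_i$ and $b \in A_j$ with $a + b = c$?''. Thus it suffices to answer this trichromatic AN-3SUM question for each triple in $R$.

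Next, for each triple $(i, j, k) \in R$, I first translate $A_i, A_j, A_k$ to lie in $[\Order(U/g)]$ (using the interval covering property) and then apply the deterministic trichromatic-to-monochromatic AN-3SUM reduction (for instance via the standard tagging construction that maps the three color classes to disjoint offset shifts, so that the only pairs summing into the target class correspond to genuine trichromatic solutions). This reduction preserves the per-element answer structure and inflates both size and universe by only constant factors, yielding a single monochromatic AN-3SUM subinstance of size $\Order(n/g)$ over a universe $[\Order(U/g)]$; the answer for each shifted element in the target class directly recovers the desired answer for the corresponding $c \in A_k$.

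Finally, for each $c \in A$ we aggregate by taking the OR over all relevant triples in $R$. For the running time, Lemma \ref{lem:dominance} costs $\widetilde\Order(n + g^2)$; writing out the $\Order(g^2)$ subinstances of size $\Order(n/g)$ each costs $\widetilde\Order(ng)$; the aggregation is $\Order\big(\sum_{(i,j,k) \in R} |A_k|\big) = \Order(g^2 \cdot n/g) = \Order(ng)$. The total is $\widetilde\Order(ng)$, as claimed. The only real subtlety---and the step I would be most careful about---is confirming that the standard tri-to-mono conversion preserves per-element answers and not merely the existential decision version, but the tagging construction handles this cleanly, so no new ideas are required beyond the self-reduction tool already established.
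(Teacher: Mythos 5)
Your proof is correct and matches the paper's intended argument: the paper explicitly treats this corollary (and its 3SUM counterpart) as an immediate consequence of Lemma~\ref{lem:dominance}, noting only that the running time increases to $\widetilde\Order(ng)$ because the $\Order(g^2)$ subinstances are written out explicitly, and it relegates the trichromatic--monochromatic equivalence to the preliminaries exactly as you do. You correctly identified the two small points that need care---translating each triple $(A_i,A_j,A_k)$ consistently so that the resulting universe is $[\Order(U/g)]$, and checking that the tagging construction preserves per-element answers for the target class---and both are handled properly in your sketch.
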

\section{Deterministic Universe Reduction for 3SUM} \label{sec:univ-reduction}
In this section we combine the tools established in \cref{sec:toolkit} to deduce a deterministic universe reduction for 3SUM (\cref{thm:univ-reduction}). We later extend the universe reduction also to All-Numbers 3SUM and Convolution 3SUM (\cref{lem:univ-reduction-an-3sum,lem:univ-reduction-conv-3sum}). We start with a trivial reduction which we need in the proof of \cref{thm:univ-reduction}.

\begin{observation}[Trivial Universe Reduction for 3SUM] \label{obs:univ-reduction-trivial}
Let $U \geq U'$. A 3SUM instance over~$[U]$ can be reduced to $\Order((\frac{U}{U'})^3)$ many 3SUM instances over $[U']$ (of the same size $n$) such that the solutions are in 1-to-1 correspondence. The reduction runs in time $\Order(n \cdot (\frac{U}{U'})^3)$.
\end{observation}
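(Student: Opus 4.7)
The plan is to chop $[U]$ into about $U/U'$ consecutive blocks of length $U'$ and enumerate the block indices of the three numbers in any candidate 3SUM triple.

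First, I would set $T = \lceil U/U' \rceil$, partition $[U]$ into intervals $I_1, \ldots, I_T$ of length at most $U'$, and let $A_i = A \cap I_i$. Observe that if $(a,b,c) \in A^3$ is a 3SUM solution with $a \in I_i$, $b \in I_j$, $c \in I_k$, then $a+b \in ((i+j-2)U', (i+j)U']$ while $c \in ((k-1)U', kU']$, forcing $k \in \{i+j-1,\, i+j\}$. Hence only $O(T^2)$ triples $(i,j,k)$ can possibly contain a solution; all others are vacuously no-instances. The stated $O((U/U')^3)$ bound then holds with room to spare, and one can even emit a dummy instance for every triple in $[T]^3$ if a uniform indexing is desired.

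Next, for each relevant triple $(i,j,k)$ I would construct a subinstance. Shifting $A_i$ by $-(i-1)U'$ (and analogously $A_j$, $A_k$), each shifted set lies in $[0, U']$, and the equation $a+b=c$ becomes $a' + b' = c' + (k-i-j+1)\,U'$. Because the coefficient lies in $\{0, -1\}$, a single further translation of $c'$ by this fixed offset reduces everything to an ordinary trichromatic 3SUM instance on $(A_i, A_j, A_k)$ inside a universe of length $O(U')$. I would then invoke the standard trichromatic-to-monochromatic equivalence for 3SUM cited in Section~\ref{sec:preliminaries}, which blows up the universe by only a constant factor. Absorbing that constant into $U'$ at the outset (i.e.\ running the reduction with $U'/c$ in place of $U'$ for a suitable constant $c$) yields genuine monochromatic 3SUM instances over $[U']$.

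Finally, I would verify the bookkeeping. Each original solution $(a,b,c)$ determines a unique block triple $(i,j,k)$, and conversely each solution of the $(i,j,k)$-subinstance decodes back to a unique original solution by undoing the shifts and the monochromatization, giving the required $1$-to-$1$ correspondence. The $i$-th subinstance has size $|A_i|+|A_j|+|A_k| \leq 3n$ (and one may pad with far-away dummy elements to reach size exactly $n$ without introducing spurious triples, after rescaling $U'$ by another constant). Partitioning $A$ into blocks takes $\widetilde{O}(n)$ time after a single sort, and writing out $O(T^2)$ subinstances of size $O(n)$ each costs $O(n T^2) \subseteq O(n(U/U')^3)$ as claimed.

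I do not anticipate a genuine obstacle: the whole argument is a bucketing+pigeonhole. The only points requiring a little care are absorbing the constant universe blow-ups (from shifting and from the trichromatic-to-monochromatic step) into $U'$, and exploiting the constraint $k \in \{i+j-1, i+j\}$ to keep the residual offset bounded by $U'$ so that the target universe is not exceeded.
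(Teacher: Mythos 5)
Your proof is correct and follows essentially the same approach as the paper's: chop $A$ into $\lceil U/U' \rceil$ interval-aligned blocks, form the $O((U/U')^3)$ trichromatic subinstances, and observe the $1$-to-$1$ correspondence of solutions. (One trivial slip: with your conventions the offset coefficient $k-i-j+1$ lies in $\{0,1\}$, not $\{0,-1\}$, but this does not affect the argument.)
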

\begin{proof}
Chop the given set $A$ into $\ceil{\frac{U}{U'}}$ subsets that can be covered by an interval of length~$U'$. In this way we construct at most $\ceil{\frac{U}{U'}}^3$ triples each of which can be viewed as a 3SUM instance over the universe $[U']$. (In fact the number of relevant 3SUM solutions is only $\Order((\frac{U}{U'})^2)$, but we do not need this improvement here.) 
\end{proof}

\thmunivreduction*
\begin{proof}
Assume that 3SUM over the universe $[n^3]$ can be solved in time $\Order(n^{2-\epsilon})$ for some $\epsilon > 0$. We are designing a subquadratic algorithm for a given 3SUM instance $A \subseteq [U]$. Our reduction runs in four steps:
\begin{enumerate}
    \item Let $\mu, \delta > 0$ be parameters to be determined later. We use the deterministic additive hashing lemma (\cref{lem:hashing-few-pseudo-solutions} with parameters~$\mu$,~$\delta$ and $g = 1$) to find a modulus~$m$ with~$m = \Theta(n^{\mu})$ and
    \begin{equation*}
        S := \#\set{(a, b, c) \in A^3 : a + b \equiv c \mod m} \leq n^{3-\mu+\delta} (\log U)^{\Order(1/\delta)}.
    \end{equation*}
    Alternatively, if \cref{lem:hashing-few-pseudo-solutions} reports that the instance contains a solution, we can immediately stop here. As before let us refer to triples $(a, b, c)$ with $a + b \equiv c \mod m$ as \emph{pseudo-solutions.} Our strategy is to design a subquadratic algorithm that lists all pseudo-solutions. Afterwards, it is easy to check if a genuine solution is among them. Note that the pseudo-solutions stand in 1\=/to\=/$\Order(1)$-correspondence to the solutions of the 3SUM instance
    \begin{equation*}
        A' := \set{a \bmod m, (a \bmod m) + m : a \in A},
    \end{equation*}
    hence in the following steps it suffices to list all solutions in $A'$. 
    \item Let $\alpha$ be a parameter to be determined. We apply the deterministic self-reduction (\cref{lem:self-reduction-3sum} with parameter $g = n^\alpha$) to $A'$ to construct $\Order(n^{2\alpha})$ 3SUM instances $A_i'$, each of size $\Order(n^{1-\alpha})$ and over a universe of size~$\Order(m / n^\alpha) = \Order(n^{\mu-\alpha})$.
    \item Using the trivial universe reduction (\cref{obs:univ-reduction-trivial} with parameters $U = \Theta(n^{\mu-\alpha})$ and $U' = \Theta(n^{3-3\alpha})$) we can further reduce each instance $A_i'$ to \smash{$\max(1, (\frac{n^{\mu-\alpha}}{n^{3-3\alpha}})^3)$} 3SUM instances of the same size $\Order(n^{1-\alpha})$ and over a universe of cubic size $\Order(n^{3-3\alpha})$. We solve each such instance using the fast algorithm in time $\Order(n^{(1-\alpha)(2-\epsilon)})$. (Here we only detect whether $A_i'$ contains a solution, but do not list all solutions.)
    \item For each set $A_i'$ for which we have detected a solution in the previous step, we exhaustively enumerate all solutions in time $\Order(|A_i'|^2) = \Order(n^{2-2\alpha})$.
\end{enumerate}

\subparagraph{Running Time.}
The correctness is clear, but it remains to analyze the running time. In what follows we choose the parameters $\delta = \frac{\epsilon}{32} > 0,\, \mu = 2 - 2\delta,\, \alpha = \frac12 + 2\delta$ and analyze the contributions step-by-step:
\begin{enumerate}
    \item Running \cref{lem:hashing-few-pseudo-solutions} takes time $\Order(n^{\max(1, \mu) + \delta}) = \Order(n^{2-\delta})$.
    \item Running \cref{lem:self-reduction-3sum} takes time $\widetilde\Order(n g) = \widetilde\Order(n^{3/2+2\delta})$.
    \item The overhead due to \cref{obs:univ-reduction-trivial} is negligible, and this step is dominated by solving the $\Order(n^{2\alpha} \cdot \max(1, \frac{n^{\mu-\alpha}}{n^{3-3\alpha}})^3)$ instances using the fast algorithm. Each call takes time $\Order(n^{(1-\alpha)(2-\epsilon)})$, hence in total this step takes time
    \begin{gather*}
        \Order(n^{2\alpha} \cdot \max(1, \tfrac{n^{\mu-\alpha}}{n^{3-3\alpha}})^3 \cdot n^{(1-\alpha)(2-\epsilon)}) \\
        \qquad= \Order(n^{2 - \epsilon(1 - \alpha)} \cdot \max(1, \tfrac{n^{3/2-4\delta}}{n^{3/2-6\delta}})^3) \\
        \qquad= \Order(n^{2 - \epsilon(1 - \alpha) + 6\delta}) \\
        \qquad= \Order(n^{2-2\delta}),
    \end{gather*}
    where in the last step we used that $1 - \alpha \geq \frac14$.
    \item Recall that we only brute-force 3SUM solutions $A_i'$ which contain a solution. Moreover, since the solutions across the sets $A_i'$ are in $\Order(1)$-to-1 correspondence to the at most $S$ pseudo-solutions, we brute-force at most $\Order(S)$ instances in total time
    \begin{equation*}
        \Order(S \cdot n^{2-2\alpha}) =  \Order(n^{3-\mu+\delta + 2-2\alpha} (\log U)^{\Order(1/\delta)}) = \Order(n^{2-\delta} (\log U)^{\Order(1/\delta)}).
    \end{equation*}
\end{enumerate}
Summing over all contributions, the total running time is bounded by $n^{2-\delta} (\log U)^{\Order(1/\delta)}$, which is as claimed.
\end{proof}

\begin{corollary}[3SUM over Cubic Universes]
For any $\epsilon > 0$, the 3SUM problem over a universe of size $n^3$ cannot be solved in deterministic time $\Order(n^{2-\epsilon})$, unless the deterministic 3SUM hypothesis fails.
\end{corollary}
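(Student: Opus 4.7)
The plan is to prove this corollary by contrapositive directly from \cref{thm:univ-reduction}. Assume toward contradiction that for some fixed $\epsilon > 0$, 3SUM over $[n^3]$ admits a deterministic algorithm in time $\Order(n^{2-\epsilon})$; I would then show that this suffices to refute the deterministic 3SUM hypothesis.

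First I would plug the assumed algorithm into \cref{thm:univ-reduction}, which yields an algorithm for 3SUM over an arbitrary universe $[U]$ in deterministic time $\Order(n^{2-\epsilon'} \log^c U)$ for some constants $\epsilon', c > 0$ depending only on $\epsilon$. The next step is to specialize this bound to polynomial universes: for any fixed constant $k$, substituting $U = n^k$ turns the logarithmic overhead into $\log^c(n^k) = k^c \log^c n$, which is absorbed into an arbitrarily small polynomial factor $n^{\epsilon'/2}$ for $n$ sufficiently large. Consequently, 3SUM over $[n^k]$ can be solved in deterministic time $\Order(n^{2-\epsilon'/2})$.

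Since $k$ was arbitrary, this contradicts \cref{hyp:det-3sum}: choosing $\epsilon'' = \epsilon'/2$, there is no $c$ that avoids a deterministic $\Order(n^{2-\epsilon''})$ algorithm for 3SUM over $[n^c]$, so the hypothesis fails. This gives the desired contrapositive.

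The proof is almost entirely a rewrapping of \cref{thm:univ-reduction}, so there is no real obstacle; the only point worth stating explicitly is that the $\log^c U$ factor in the theorem is harmless exactly because the hypothesis quantifies over all polynomial universes $[n^c]$, allowing us to trade the polylog slack for a slightly worse (but still positive) exponent saving.
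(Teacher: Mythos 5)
Your proof is correct and is exactly the intended argument: the paper states this corollary without a separate proof immediately after \cref{thm:univ-reduction}, because the implication is precisely the rewrapping you describe (apply the theorem, specialize to $U = n^k$, absorb the polylog into the polynomial savings, and note that the resulting exponent $\epsilon''$ is uniform over all $k$). No gaps.
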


\subsection{Universe Reduction for All-Numbers 3SUM}
For the All-Numbers 3SUM problem there exists an even better randomized universe reduction---to a universe of quadratic size $n^2$. To derandomize this universe reduction as well, we follow the same ideas as in \cref{thm:univ-reduction}:

\begin{lemma}[All-Numbers 3SUM over Quadratic Universes] \label{lem:univ-reduction-an-3sum}
For any $\epsilon > 0$, the AN-3SUM problem over a universe of size $n^2$ cannot be solved in deterministic time $\Order(n^{2-\epsilon})$, unless the deterministic 3SUM hypothesis fails.
\end{lemma}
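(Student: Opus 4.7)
The plan is to prove the contrapositive: assume a deterministic $\Order(n^{2-\epsilon})$ algorithm for AN-3SUM over $[n^2]$, and use it to build a deterministic subquadratic algorithm for 3SUM over $[n^3]$; combined with \cref{thm:univ-reduction} this contradicts the deterministic 3SUM hypothesis. The construction mirrors the proof of \cref{thm:univ-reduction}, but with parameters retuned for the quadratic (rather than cubic) target universe.

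\textbf{The algorithm.} Given $A \subseteq [n^3]$ of size $n$, I would pick $\alpha = \tfrac12$, $\mu$ slightly above $\tfrac32$, and $\delta \ll \epsilon$. First, invoke \cref{lem:hashing-few-pseudo-solutions} to obtain a modulus $m = \Theta(n^\mu)$ with pseudo-solution count $S \leq n^{3-\mu+\delta}(\log n)^{\Order(1/\delta)}$, and form $A' = \set{a \bmod m,\, (a \bmod m)+m : a \in A}$. Second, apply \cref{lem:dominance} with $g = n^{1/2}$ to chop $A'$ into $\Order(n)$ triples of buckets of size $\Order(n^{1/2})$ over universe $\Order(n^{\mu-1/2})$. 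Third, since $\mu > \tfrac32$, further shrink each subinstance's universe down to $n = (n^{1/2})^2$ via a trivial universe reduction using the $\Order((U/U')^2)$ refinement of \cref{obs:univ-reduction-trivial} (which follows from the chain argument in the proof of \cref{lem:dominance}); this produces $n^{2(\mu-3/2)}$ sub-subinstances per subinstance, each still of size $\Order(n^{1/2})$ over universe $\Order(n)$. Fourth, run the assumed fast AN-3SUM algorithm on each sub-subinstance in time $(n^{1/2})^{2-\epsilon}$, obtaining for every $c' \in A'$ in each sub-subinstance a pseudo-yes/no flag. Fifth, verify each pseudo-yes flag by iterating $a \in A_i'$, binary-searching $c' - a$ inside $A_j'$, and checking whether the implied original triple $(a,b,c) \in A^3$ satisfies $a+b = c$; output yes as soon as any real triple is found.

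\textbf{Bookkeeping.} The hashing step costs $\widetilde\Order(n^{\mu+\delta})$, the self-reductions run in $\widetilde\Order(n^{3/2})$, the fast AN-3SUM calls total $n^{2\mu-2} \cdot n^{1-\epsilon/2} = n^{2\mu-1-\epsilon/2}$, and the verification---amortized per pseudo-mark---costs $\widetilde\Order(S \cdot n^{1/2}) = \widetilde\Order(n^{7/2-\mu+\delta})$. Setting $\mu = \tfrac32 + \Theta(\epsilon)$ and $\delta = o(\epsilon)$ balances all four terms to $\Order(n^{2-\Omega(\epsilon)})$.

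\textbf{Main obstacle.} The delicate step---and the reason the parameters differ from those in \cref{thm:univ-reduction}---is the verification. Unlike 3SUM where brute-forcing an entire flagged sub-subinstance suffices (any real triple works), AN-3SUM demands a per-$c$ answer. Naively brute-forcing each flagged sub-subinstance costs $(n^{1/2})^2 = n$ each, and summed over all flagged sub-subinstances this can reach $\Theta(n^2)$. The trick is to amortize \emph{per pseudo-mark} instead: the total number of pseudo-marks across all sub-subinstances is at most $S$, and each is checked in $\widetilde\Order(n^{1/2})$ via a single sweep of $A_i'$ with binary searches in $A_j'$. Forcing $S \cdot n^{1/2} = n^{2-\Omega(\epsilon)}$ requires $\mu > \tfrac32 + \delta$, which in turn forces the trivial universe reduction in step three. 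The tight window $\mu \in (\tfrac32 + \delta,\, \tfrac32 + \Theta(\epsilon))$ is precisely what the quadratic target universe costs us.
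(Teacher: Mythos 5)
Your proposal follows essentially the same four-step blueprint as the paper's proof of \cref{lem:univ-reduction-an-3sum}: deterministic additive hashing (\cref{lem:hashing-few-pseudo-solutions}) to bound the pseudo-solution count $S$, the deterministic self-reduction (\cref{lem:dominance}), the trivial universe chop (\cref{obs:univ-reduction-trivial}), a call to the assumed fast AN-3SUM algorithm on each piece, and finally a verification sweep whose cost is amortized per flagged element (at most $S$ of them) rather than per flagged instance. The parameter tuning differs---the paper takes $\alpha = \Theta(\delta)$ tiny and $\mu = 2-\Theta(\delta)$ close to $2$, while you take $\alpha = \tfrac12$ and $\mu$ just above $\tfrac32$---but the decomposition, the key lemmas, and the crucial per-flag amortization argument are the same.
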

\begin{proof}
Suppose that AN-3SUM over quadratic universes can be solved in time $\Order(n^{2-\epsilon})$. We design a 3SUM algorithm for $A \subseteq [n^c]$ in time $\Order(n^{2-\epsilon'})$. The reduction again runs in four steps:
\begin{enumerate}
    \item We first apply \cref{lem:hashing-few-pseudo-solutions} (with parameters $\mu, \delta > 0$ to be determined later and $g = 1$) to find a modulus $m = \Theta(n^\mu)$ such that
    \begin{equation*}
        S := \#\set{(a, b, c) \in A^3 : a + b \equiv c \mod m} \leq n^{3-\mu+\delta} (\log n)^{\Order(1/\delta)} = \widetilde\Order(n^{3-\mu+\delta}).
    \end{equation*}
    Let us again refer to all triples $(a, b, c)$ with $a + b \equiv c \mod m$ as \emph{pseudo-solutions.} Then the pseudo-solutions of the original instance are in 1-to-$\Order(1)$ correspondence to the solutions of the 3SUM instance
    \begin{align*}
        A' &= \set{a \bmod m, (a \bmod m) + m : a \in A}.
    \end{align*}
    \item We apply \cref{lem:dominance} with parameter $g = n^\alpha$ to partition $A'$ into subsets $A_1', \dots, A_g'$ such that each part has size at most $\Order(n^{1-\alpha})$, with only $\Order(n^{2\alpha})$ triples of parts that may contain a 3-sum. (This reduction even reduces the universes of the subinstances by a factor $n^\alpha$, but we don't need this improvement here).
    \item We apply the trivial universe reduction (\cref{obs:univ-reduction-trivial} with parameters $U = \Theta(n^\mu)$ and $U' = \Theta(n^{2-2\alpha})$) to further reduce each instance $A_i'$ to $\max(1, \frac{n^\mu}{n^{2-2\alpha}})^3$ instances of the same size over a universe of quadratic size $\Order(n^{2-2\alpha})$. We solve each new instance using the fast AN-3SUM algorithm in total time $\Order(n^{2\alpha} \cdot \max(1, \frac{n^\mu}{n^{2-2\alpha}})^3 \cdot n^{(1-\alpha)(2-\epsilon)})$.
    \item For each relevant triple $(i, j, k)$, and for each $a \in A_i'$ that was found to be part of a pseudo-solution, we explicitly list all pseudo-solutions in $A_i' \cup A_j' \cup A_k'$. As this step lists all pseudo-solutions, we are guaranteed to find any proper 3-sum in this step.
\end{enumerate}

\subparagraph{Running Time.}
It is easy to see that this algorithm is correct, but it remains to analyze the running time. We choose the parameters \smash{$\delta := \frac{\epsilon}{16}$}, $\mu := 2 - 2\delta$, and $\alpha = 4\delta$, and analyze the running time step by step:
\begin{enumerate}
    \item Running \cref{lem:hashing-few-pseudo-solutions} takes time $\Order(n^{\max(\mu, 1) + \delta}) = \Order(n^{2-\delta})$.
    \item Running \cref{lem:dominance} takes time $\widetilde\Order(n + g^2) = \widetilde\Order(n)$.
    \item The trivial universe reduction is negligible, hence this step runs in total time
    \begin{equation*}
        \Order(\max(1, \tfrac{n^\mu}{n^{2-2\alpha}})^3 \cdot n^{(1-\alpha)(2-\epsilon)}) = \Order(n^{2-(1-\alpha)\epsilon + 6\delta}) = \Order(n^{2-2\delta}),
    \end{equation*}
    using in the last step that $1 - \alpha \geq \frac12$.
    \item We brute-force $S = \widetilde\Order(n^{3-\mu+\delta})$ many instances of size $\Order(n^{1-\alpha})$, taking time $\widetilde\Order(n^{2-\delta})$.
\end{enumerate}
Summing over all contributions we obtain that the total running time is $\Order(n^{2-\delta})$ as claimed.
\end{proof}

\subsection{Universe Reduction for Convolution 3SUM}
Finally we also prove that by means of deterministic reductions, the Convolution 3SUM problem is already hard over quadratic universes. This proof is essentially identical to~\cite{ChanH20}, except that we start with our deterministic universe reduction for 3SUM.

\begin{lemma}[Convolution 3SUM over Quadratic Universes] \label{lem:univ-reduction-conv-3sum}
For any $\epsilon > 0$, the Convolution 3SUM problem over a universe of size $n^2$ cannot be solved in deterministic time $\Order(n^{2-\epsilon})$, unless the deterministic 3SUM hypothesis fails.
\end{lemma}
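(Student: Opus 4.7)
The plan is to plug Chan and He's deterministic reduction from 3SUM to Convolution 3SUM~\cite{ChanH20} into our deterministic universe reduction (\cref{thm:univ-reduction}). Concretely, suppose for contradiction that Convolution 3SUM over $[N^2]$ admits a deterministic $\Order(N^{2-\epsilon})$-time algorithm. Given an arbitrary 3SUM instance $A \subseteq [n^c]$ of size $n$, I would first apply \cref{thm:univ-reduction} to obtain a 3SUM instance $A' \subseteq [\Order(n^3)]$ of the same size $n$, at the cost of only a polylogarithmic overhead in the running time.

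Next, I would run Chan and He's deterministic 3SUM-to-Convolution-3SUM reduction on $A'$. Specifically, \cref{lem:hashing-few-collisions} with $\mu = 1$ provides a modulus $m = \Theta(n)$ such that $h(a) = a \bmod m$ produces only $\widetilde\Order(n)$ collisions on $A'$. Decomposing each $a \in A'$ as $a = q(a) \cdot m + r(a)$ with $r(a) \in [m]$ and $q(a) \in [\Order(n^2)]$, Chan--He's construction packs the quotients $q(a)$ into a Convolution 3SUM vector indexed by the residues $r(a)$. The few-collisions guarantee keeps the resulting Convolution 3SUM instance at length $N = \widetilde\Theta(n)$, with entries in $[\Order(n^2)] = [\Order(N^2)]$. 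The modular carry from $r(a) + r(b) \equiv r(c) \pmod{m}$ is handled by running the construction twice, once for each carry value, and bucket collisions are absorbed by the standard unpacking enabled by the collision bound.

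Feeding this instance into the hypothetical subquadratic Convolution 3SUM algorithm yields a deterministic 3SUM algorithm for $A$ running in time $\widetilde\Order(n^{2-\delta})$ for some $\delta > 0$, contradicting \cref{hyp:det-3sum}. The authors' remark that the proof is essentially identical to~\cite{ChanH20} reflects the fact that, once the deterministic modulus is in hand, the packing/carry/unpacking steps are already derandomized in their paper; the only new ingredient here is the initial application of \cref{thm:univ-reduction}. The main obstacle to verify is therefore that Chan--He's reduction really does preserve the universe size proportionally---so that starting from universe $[\Order(n^3)]$ (rather than the arbitrary $[n^c]$ they allow) one obtains universe $[\Order(n^2)] = [\Order(N^2)]$ on the Convolution 3SUM side. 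Once this proportionality is confirmed, the rest of the argument is immediate.
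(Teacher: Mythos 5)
Your approach is correct and matches the paper's: combine the deterministic universe reduction with Chan and He's few-collisions hashing, observing that a modulus $m = \Theta(n)$ turns quotients $\lfloor a/m \rfloor \leq U/m$ into Convolution 3SUM entries, so starting from $U = O(n^3)$ yields entries in $[O(n^2)]$. One imprecision worth flagging: \cref{thm:univ-reduction} is a Turing reduction, not a many-one one, so it does not produce a \emph{single} instance $A' \subseteq [O(n^3)]$ of the same size $n$. The paper avoids this by arguing in the reverse direction---it suffices to give a subquadratic algorithm for 3SUM over $[n^3]$ (which the universe reduction then propagates to arbitrary universes)---and then spells out the Chan--He construction on a cubic-universe instance, handling bucket collisions via the heavy/light dichotomy and iterating over $(x,y,z) \in [n^\delta]^3$. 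Your outline of the carry and collision handling is high-level but consistent with that, and the overall logical chain (hypothetical Conv3SUM algorithm $\Rightarrow$ subquadratic 3SUM over $[n^3]$ $\Rightarrow$ hypothesis fails via \cref{thm:univ-reduction}) is sound.
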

\begin{proof}
Suppose that the Convolution 3SUM problem over a universe of size $n^2$ can be solved in deterministic time $\Order(n^{2-\epsilon})$ for some $\epsilon > 0$. In order to falsify the deterministic 3SUM hypothesis, it suffices to design a subquadratic algorithm for a given 3SUM instance $A \subseteq [n^3]$. Using the deterministic additive hashing construction with few \emph{collisions} (\cref{lem:hashing-few-collisions} with parameters $\mu = 1$ and $\delta = \frac12$, say), we can find a modulus~\makebox{$m = \Theta(n)$} such that
\begin{equation*}
    \#\set{(a, b) \in A^2 : a \equiv b \mod m} \leq \widetilde\Order(n).
\end{equation*}
We continue with some terminology: Call an integer $i$ is \emph{$\alpha$-heavy} if $\abs{\set{a \in A : a \equiv i \mod m}} > \alpha$, and \emph{$\alpha$-light} otherwise.

\begin{claim}
Let $\alpha > 1$. The number of $\alpha$-heavy elements $a \in A$ is at most $\widetilde\Order(n / \alpha)$.
\end{claim}
\begin{proof}
The proof is by a simple calculation. We express the number of $\alpha$-heavy elements in $A$ as
\begin{gather*}
    \sum_{\substack{i \in [m]\\\text{$i$ is heavy}}} \abs{\set{a \in A : a \equiv i \mod m}} \\
    \qquad\leq \sum_{\ell = \floor{\log \alpha}}^\infty \sum_{\substack{i \in [m]\\\text{$i$ is $2^\ell$-heavy}\\\text{$i$ is $2^{\ell+1}$-light}}} \abs{\set{a \in A : a \equiv i \mod m}} \\
    \qquad\leq \sum_{\ell = \floor{\log \alpha}}^\infty \sum_{\substack{i \in [m]\\\text{$i$ is $2^\ell$-heavy}}} 2^{\ell+1}
\intertext{Now observe that each bucket $i \in [m]$ which is $2^\ell$-heavy contributes at least \smash{$\binom{2^\ell}{2} = \Omega(2^{2\ell})$} many collisions. Since the total number of collisions is bounded by $\widetilde\Order(n)$, there can be at most $\widetilde\Order(n / 2^{2\ell})$ many such heavy buckets, and thus}
    \qquad\leq \sum_{\ell = \floor{\log \alpha}} \widetilde\Order(2^{\ell} \cdot \tfrac{n}{2^{2\ell}}) \\
    \qquad\leq \widetilde\Order(n / \alpha). \qedhere
\end{gather*}
\end{proof}

Coming back to the lemma, let us simply call $i$ \emph{heavy} if it is $n^\delta$-heavy for some parameter~\makebox{$\delta > 0$} to be determined later, and \emph{light} otherwise. We consider two cases, depending on whether the 3SUM solution contains a 3SUM solution with a heavy element, or whether all 3SUM solutions consist of light elements only. For the former case, since there are at most $\widetilde\Order(n^{1-\delta})$ many heavy elements $a \in A$, we can simply brute-force over all solutions involving a heavy element in time~\makebox{$\widetilde\Order(n^{1-\delta} \cdot n) = \widetilde\Order(n^{2-\delta})$}. Hence, for the rest of the proof we focus on the latter case and prove that we can detect 3SUM solutions involving only light elements.

Let $(x, y, z) \in [n^\delta]^3$. For each such triple, we construct a trichromatic Convolution 3SUM instance $(X_x, Y_y, Z_z)$ of length $2m$ as follows. (Reducing that trichromatic instance further to a monochromatic instance is standard.) For $i \in [2m]$, let $X_x[i] := \frac{a - i}{m}$ where $a$ is the $x$-th element in~$A$ with \smash{$a \equiv i \mod m$} (according to an arbitrary but fix order). Similarly, let \smash{$Y_y[j] := \frac{b - j}{m}$} where~$b$ is the $y$-th element in $A$ with $b \equiv j \mod m$ and let \smash{$Z_z[k] := \frac{c-k}{m}$} where $c$ is the $z$-th element in~$A$ with $z \equiv k \mod m$. If there happens to be no $x$-th (or $y$-th or $z$-th) element, simply put a dummy value that cannot be part of any solution. It is easy to verify that all entries are integers, and further have size at most $n^3 / m = \Order(n^2)$.

Moreover, there is a solution in any of the Convolution 3SUM instances $(X_x, Y_y, Z_z)_{x, y, z \in [n^\delta]}$ if and only if the given 3SUM instance $A$ has a light solution. For the one direction, suppose that there are light elements~\makebox{$a, b, c \in A$} with $a + b = c$. Let $i := x \bmod m$, $j := y \bmod m$ and $k := i + j$; clearly we have $k \equiv x \mod m$. Then there are $x, y, z \in [n^\delta]$ such that $a$ is the $x$-th element in $A$ with $a \equiv i \mod m$, $b$ is the $y$-th element in $A$ with $b \equiv j \mod m$, and $c$ is the $z$-th element in~$A$ with $c \equiv k \mod m$. By definition we have that $i + j = k$ and that
\begin{equation*}
    X_x[i] + Y_y[j] = \frac{a - i + b - j}{m} = \frac{c - k}{m} = Z_z[k].
\end{equation*}
The other direction is symmetric.

We have constructed $n^{3\delta}$ many Convolution 3SUM instances over a quadratic universe, and solving all of these takes time $\Order(n^{3\delta} \cdot n^{2-\epsilon})$ by our initial assumption. The total time is thus $\Order(n^{2-\delta} + n^{2-\epsilon+3\delta})$ which is subquadratic by choosing $\delta = \frac{\epsilon}{4} > 0$.
\end{proof}

\subsection{Universe Reduction for 3SUM Listing}
As a final simple corollary we include the following deterministic universe reduction for the 3SUM Listing problem (where the goal is to list all solutions):

\begin{lemma}[3SUM Listing over Small Universes]
For any $1 < \mu < 2$ and $\epsilon > 0$, there is no deterministic algorithm that given a size-$n$ 3SUM instance $A \subseteq [n^\mu]$ with $\Order(n^{3-\mu})$ solutions lists all solutions in time $\Order(n^{2-\epsilon})$, unless the deterministic 3SUM hypothesis fails.
\end{lemma}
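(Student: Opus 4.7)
The plan is to derive a subquadratic deterministic algorithm for 3SUM under the assumption that the hypothesized listing algorithm $\mathcal{A}$ exists, thereby contradicting \cref{hyp:det-3sum}. By \cref{thm:univ-reduction}, it suffices to solve 3SUM on instances $A \subseteq [n^3]$ of size $n$ in subquadratic time (up to polylogarithmic overhead), so fix such an input $A$.

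First, I would invoke \cref{lem:hashing-few-pseudo-solutions} on $A$ with parameters $\mu_0 := \mu + \delta$ and $\delta_0 := \delta/2$, where $\delta > 0$ is a small constant to be tuned later, and $g = 1$. This either directly reports a 3SUM solution in $A$, or returns a modulus $m \in [n^{\mu+\delta}, 2n^{\mu+\delta})$ with at most
\[
    S := \#\set{(a,b,c) \in A^3 : a + b \equiv c \mod m} \leq n^{3-\mu-\delta/2}\cdot (\log n)^{\Order(1/\delta)}
\]
pseudo-solutions, which is $o(n^{3-\mu})$ for sufficiently large $n$. I would then form the hashed set $A' := \set{a \bmod m : a \in A} \cup \set{(a \bmod m) + m : a \in A} \subseteq [0, 2m)$ in the standard way, so that pseudo-solutions of $A$ correspond to genuine 3SUM solutions in $A'$ up to a constant factor.

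Next, to bring the universe down to $[n^\mu]$, I would apply the trivial universe reduction (\cref{obs:univ-reduction-trivial}) to $A'$ from $[\Order(n^{\mu+\delta})]$ down to $[n^\mu]$, producing $\Order(n^{3\delta})$ trichromatic sub-instances of size $\Order(n)$ over $[n^\mu]$ whose solution sets partition the solutions of $A'$. In particular, each sub-instance has at most $o(n^{3-\mu}) \leq \Order(n^{3-\mu})$ solutions and therefore is a valid input for $\mathcal{A}$. After converting each trichromatic sub-instance to a monochromatic one via the standard constant-blowup shifting trick, I would run $\mathcal{A}$ on each sub-instance in time $\Order(n^{2-\epsilon})$, for a total of $\Order(n^{2-\epsilon+3\delta})$. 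Each of the $\Order(S) = o(n^{3-\mu})$ listed solutions is then verified against $A$ in constant time. Choosing $\delta := \epsilon/6$ makes the total running time $\Order(n^{2-\epsilon/2})$, contradicting \cref{hyp:det-3sum}.

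The main obstacle is that \cref{lem:hashing-few-pseudo-solutions} carries an unavoidable $n^{\delta_0}$ overhead in its pseudo-solution bound, so hashing directly to modulus $m = \Theta(n^\mu)$ only guarantees $\widetilde\Order(n^{3-\mu+\delta_0})$ pseudo-solutions---a polynomial factor above the $\Order(n^{3-\mu})$ threshold required by $\mathcal{A}$. I circumvent this by hashing slightly \emph{more aggressively} to a modulus $m = \Theta(n^{\mu+\delta})$, which brings the pseudo-solution count comfortably below $n^{3-\mu}$ at the price of a universe of size $\Order(n^{\mu+\delta})$. The trivial universe reduction then trims the universe back to $[n^\mu]$ at the cost of only $\Order(n^{3\delta})$ sub-instances, which is sub-polynomial for $\delta$ small.
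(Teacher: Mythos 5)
Your proof is correct but takes a genuinely different route from the paper's. Both confront the same obstacle: applying \cref{lem:hashing-few-pseudo-solutions} with exponent $\mu$ leaves $\widetilde\Order(n^{3-\mu+\delta})$ pseudo-solutions, an $n^{\delta}$ factor above what the listing oracle tolerates. You resolve this by \emph{over-hashing}---taking $m = \Theta(n^{\mu+\delta})$ so the pseudo-solution count drops to $o(n^{3-\mu})$---and then paying with \cref{obs:univ-reduction-trivial} to shrink the universe from $\Order(n^{\mu+\delta})$ back to $[n^{\mu}]$, yielding $\Order(n^{3\delta})$ trichromatic sub-instances. The paper instead \emph{pads}: it keeps $m = \Theta(n^{\mu})$ but appends $n^{1+\delta}$ inert dummy elements so the instance size grows to $n' = \Theta(n^{1+\delta})$; then $\widetilde\Order(n^{3-\mu+\delta}) = \Order((n')^{3-\mu})$ precisely because $\mu < 2$, and $A' \subseteq [\Order(n^{\mu})] \subseteq [(n')^{\mu}]$ because $\mu > 1$, so a \emph{single} monochromatic call to the oracle suffices. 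The paper's route avoids the extra universe reduction and the trichromatic-to-monochromatic conversion, while yours composes more mechanically from the toolkit. Two small remarks on your write-up: the opening appeal to \cref{thm:univ-reduction} is unnecessary, since \cref{lem:hashing-few-pseudo-solutions} already handles arbitrary $U$ at only polylogarithmic cost, so you may start directly from $A \subseteq [n^c]$; and the fixed choice $\delta := \epsilon/6$ does not by itself keep the hashing step $\widetilde\Order(n^{\mu+3\delta/2})$ subquadratic when $\mu$ is close to $2$, so you should additionally cap $\delta$ by, say, $(2-\mu)/2$ (or observe that it suffices to prove the statement for small $\epsilon$)---the same implicit convention the paper itself relies on.
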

\begin{proof}
This lemma is basically an immediate consequence of \cref{lem:hashing-few-pseudo-solutions}. The only minor complication is how to remove the overhead $\delta$.

Let $A \subseteq [n^c]$ be a given 3SUM instance. Using \cref{lem:hashing-few-pseudo-solutions} (with parameters $\mu$, $0 < \delta \leq \mu - 1$ to be determined, and $g = 1$) we can compute a modulus $m \in [n^{\mu}, 2n^{\mu}]$ such that the number of pseudo-solutions is bounded by
\begin{equation*}
    S := \#\set{(a, b, c) \in A^3 : a + b \equiv c \mod{m}} \leq \widetilde\Order(n^{3-\mu+\delta}).
\end{equation*}
We construct the 3SUM Listing instance $A' = \set{a \bmod m, (a \bmod m) + m : a \in A} \cup B$, where $B$ is a set of~$n^{1+\delta}$ dummy elements that cannot be part of a solution (e.g., \makebox{$B = \set{10m + i : i \in [n^{1+\delta}]}$}). Clearly the solutions in $A'$ stand in $\Order(1)$-to-$1$ correspondence to the pseudo-solutions of $A$. Therefore, given a list of all $\Order(S)$ solutions of $A'$ we can decide the original instance by testing whether one of the pseudo-solutions is proper. Note that $n' = |A'| = \Theta(n^{1+\delta})$, and thus $A' \subseteq [\Order(m)] \subseteq [(n')^{\mu}]$ and the number of solutions is $\Order(S) = \widetilde\Order(n^{3-\mu+\delta}) = \Order((n')^{3-\mu})$. Applying the efficient listing algorithm takes subquadratic time $\Order((n')^{2-\epsilon}) = \Order(n^{2+2\delta-\epsilon})$, by choosing $\delta := \min\set{\frac{\epsilon}{3}, \mu - 1} > 0$.
\end{proof}
\section{Deterministic Lower Bounds for Set Disjointness and Set Intersection} \label{sec:set-disjointness}
In this section we give our deterministic lower bounds for the (offline) Set Disjointness and (offline) Set Intersection problems (\cref{thm:set-disjointness,thm:set-intersection}). Let us recall their definitions first.

\SetDisjointnessProblem*

\SetIntersectionProblem*

The Set Disjointness has two ``trivial'' solutions: On the one hand, we can solve each query in time $\Order(s)$ thus taking time $\Order(q s)$ in total. On the other hand, we can encode all queries as a matrix multiplication problem of size $N \times s \times N$. If matrix multiplication is in linear time (i.e., if~$\omega = 2$), this leads to an algorithm in time $(N^2 + Ns)^{1+\order(1)}$. While the first approach works out of the box also for the Set Intersection problem, the latter algorithm does not immediately apply.

Kopelowitz, Pettie and Porat~\cite{KopelowitzPP16} have proved that these algorithms are best-possible---at least if 3SUM cannot be solved in randomized subquadratic time. In what follows, we derandomize their reductions.\footnote{Let us remark that in their paper~\cite{KopelowitzPP16} they even optimize their reductions with respect to logarithmic factors; here, since we are only interested in subquadratic-time 3SUM-hardness, we consider a coarser-grained view.}

We start with the following trivial reduction from Set Intersection (for small thresholds~$t$) to Set Disjointness. We will later rely on this reduction for $t = n^\epsilon$.

\begin{observation}[Trivial Reduction from Set Intersection to Set Disjointness] \label{thm:set-intersection-to-disjointness}
If the offline Set Disjointness problem is in deterministic time $f(U, N, s, q)$, then the offline Set Intersection problem with a threshold of $t$ per query is in time $\Order(f(U, Nt^2, \frac s{t^2}, qt^4) + \frac{qs}{t}) = \Order(t^6 \cdot f(U, N, s, q) + \frac{qs}{t})$.
\end{observation}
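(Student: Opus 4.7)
The plan is to reduce offline Set Intersection with threshold $t$ to offline Set Disjointness by chopping each set into small buckets, running Set Disjointness on every relevant pair of buckets, and then reading off the intersection by brute force only inside the bucket-pairs flagged as non-disjoint. Concretely, I would first partition each $S_i$ into $t^2$ buckets $S_i^{(1)}, \dots, S_i^{(t^2)}$ each of size at most $\lceil s/t^2 \rceil$ (for instance by sorting $S_i$ and cutting it into consecutive blocks of equal size). The new Set Disjointness instance has universe $[U]$, $N' = Nt^2$ sets of size $s' = \Order(s/t^2)$, and for every original query $(i,j) \in Q$ we include the $t^4$ bucket-level queries $(S_i^{(k)}, S_j^{(\ell)})$ for $k, \ell \in [t^2]$, so $q' = qt^4$. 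One call to the Set Disjointness algorithm costs $f(U, Nt^2, s/t^2, qt^4)$.

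For the post-processing, I would iterate, for every original query $(i,j)$, over those bucket-pairs that the algorithm marked as non-disjoint, brute-force the intersection of each such pair by merging two sorted lists in $\Order(s/t^2)$ time, and emit the elements as they are found, halting as soon as $t$ have been produced (or all non-disjoint pairs are exhausted). The key counting observation is that every element of $S_i \cap S_j$ lies in a \emph{unique} bucket-pair, so distinct non-disjoint bucket-pairs contribute pairwise disjoint nonempty contributions to the output. Consequently at most $t$ non-disjoint bucket-pairs are ever inspected per query before the threshold is met, giving a post-processing cost of $\Order(t \cdot s/t^2) = \Order(s/t)$ per query and $\Order(qs/t)$ in total. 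Summing with the Set Disjointness call yields the first bound $\Order(f(U, Nt^2, s/t^2, qt^4) + qs/t)$.

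The second form $\Order(t^6 \cdot f(U, N, s, q) + qs/t)$ is a convenience bound that I would obtain by splitting the $Nt^2$ buckets into $t^2$ groups of $N$ sets each, distributing the $qt^4$ queries over the resulting $t^4$ group-pairs, and applying the base algorithm separately to each group-pair after further dividing its queries into chunks of at most $q$; additivity of running time across independent calls, together with monotonicity in $s$, bounds the total by $t^{\Order(1)} f(U, N, s, q)$, with $t^6$ providing a comfortable slack factor.

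The whole argument is quite direct; the only piece of real content is the counting observation bounding the number of non-disjoint bucket-pairs one must inspect per query, and everything else is routine bookkeeping, so I do not anticipate a genuine obstacle.
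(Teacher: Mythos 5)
Your proposal matches the paper's proof essentially exactly: the same partition of each $S_i$ into $t^2$ buckets of size $\Order(s/t^2)$, the same $t^4$ bucket-level sub-queries per original query, one call to the Set Disjointness oracle with parameters $(U, Nt^2, s/t^2, qt^4)$, and the same post-processing in which at most $t$ non-disjoint bucket-pairs are scanned per query at cost $\Order(s/t^2)$ each, giving the $\Order(qs/t)$ term. Your added justification of the $t^6 f(U,N,s,q)$ form via chunking into group-pairs is extra detail that the paper omits (it states that bound without argument), but it does not change the underlying approach.
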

\begin{proof}
Given a Set Intersection instance $S_1, \dots, S_N$, let us arbitrarily partition each set $S_i$ into~$t^2$ subsets $S_{i, 1}, \dots, S_{i, t^2}$ of size $\Order(s / t^2)$. We replace each original by $t^4$ queries that respectively compare all possible combination of subsets. In this way we clearly obtain an equivalent Set Intersection instance, with $Nt^2$ many sets of size $\Order(s / t^2)$, over the same universe size $U$ and with~$qt^4$ many queries. We run the fast Set Disjointness algorithm on that instance. Our task is to report, for each query $(i, j) \in Q$ in the original Set Intersection instance, up to $t$ elements from the intersection~$S_i \cap S_j$. The outcome of the Set Disjointness call yields a subset $Q_{i, j} \subseteq [t^2]^2$ of positions~$(i', j')$ such that~$S_{i, i'} \cap S_{j, j'}$ is nonempty. Among these pairs in $Q_{i, j}$, select up to $t$ arbitrarily. For each selected pair~$(i', j')$ we evaluate~$S_{i, i'} \cap S_{j, j'}$ by scanning through all elements of~$S_{i, i'}$. Each such scan reveals at least one element of the intersection $S_i \cap S_j$, and therefore indeed after $t$ repetitions we have achieved our goal.

The algorithm takes time $f(U, Nt^2, \frac{s}{t^2}, qt^4)$ for the call to the Set Disjointness oracle. Afterwards, we explicitly compute the set intersections of $q \cdot t$ sets of size $\Order(s / t^2)$ each, taking time~\makebox{$\Order(q \cdot t \cdot s / t^2) = \Order(\frac{qs}{t})$} in total.
\end{proof}

\thmsetdisjointness*

Let us remark that the case $\alpha = 0$ is trivial, since the input size is already $\Omega(n^2)$.

\begin{proof}
Suppose there is a Set Disjointness algorithm with parameters as specified in the lemma statement. We will design a subquadratic algorithm for 3SUM. Given a 3SUM instance $A$, we proceed in the following steps:
\begin{enumerate}
    \item We apply the deterministic self-reduction for 3SUM. Specifically, we apply \cref{lem:dominance} to partition $A$ into $g = n^{\alpha}$ groups $A_1, \dots, A_g$ of size $\Order(n^{1-\alpha})$, such that only a subset $R \subseteq [g]^3$ of at most $|R| = \Order(n^{2\alpha})$ group triples is relevant.
    \item We apply the deterministic hashing lemma to find a modulus $m$ under which we have only few false positives in the relevant groups. Specifically, we apply \cref{lem:hashing-few-pseudo-solutions} for the 3SUM instances $(A_i \cup A_j \cup A_k)_{(i, j, k) \in R}$ (of size $\Order(n^{1-\alpha})$ each) and for $\mu = 2 - 2\delta$ and some parameter~$\delta > 0$ to be specified later. This either reports that one of the instances contains a 3SUM solution (in which case we can stop), or the lemma returns a modulus $m = \Order((n^{1-\alpha})^{\mu}) = \Order(n^{2-2\alpha})$ satisfying that
    \begin{multline*}
        \sum_{(i, j, k) \in R} \#\set{(a, b, c) \in A_i \times A_j \times A_k : a + b \equiv c \mod m} \\\leq |R| \cdot \Order((n^{1-\alpha})^{1+3\delta}) \cdot (\log n)^{\Order(1/\delta)} = \widetilde\Order(n^{1+\alpha+3\delta}).
    \end{multline*}
    Let us for simplicity assume that $\sqrt m$ is an integer; otherwise replace all the following occurrences of $\sqrt m$ with an appropriate rounding of $\sqrt m$.
    \item We are ready to construct the Set Disjointness instance: For each $i \in [g]$ and for all~\makebox{$x, y \in [\sqrt m]$}, we construct the following two sets:
    \begin{align*}
        B_{i, x} &= \set{(b + x \sqrt m) \bmod m : b \in A_i}, \\
        C_{i, y} &= \set{(c + y) \bmod m : c \in A_i}.
    \end{align*}
    The Set Disjointness instance consists of the sets $\set{B_{i, x}}_{i, x} \cup \set{C_{i, y}}_{i, y}$. The queries are defined as follows: For each $(i, j, k) \in R$ and for each $a \in A_i$, we add a query. Namely, let $x, y \in [\sqrt m]$ be such that $a \equiv x \sqrt m - y \mod m$; then we query whether the intersection $B_{j, x} \cap C_{k, y}$ is empty.
\end{enumerate}
Let us take a break here from the algorithm, and for now give an explanation of the Set Disjointness instance. We claim that the elements in the queried set intersections $B_{j, x} \cap C_{k, y}$ are in one-to-one correspondence to the set of \emph{pseudo-solutions}
\begin{equation*}
    \bigcup_{(i, j, k) \in R} \set{(a, b, c) \in A_i \times A_j \times A_k : a + b \equiv c \mod m}.
\end{equation*}
Indeed, for any pseudo-solution $(a, b, c) \in A_i \times A_j \times A_k$, let $x, y \in [\sqrt m]$ be such that $a \equiv x \sqrt m - y \mod m$. Then note that $(b + x \sqrt m) \bmod m \in B_{j, x}$ and that $(c + y) \bmod m \in C_{k, y}$. Moreover, since $x \sqrt m - y + b \equiv a + b \equiv c \mod m$, these two elements are in fact identical. The converse direction is analogous.

With this in mind, our goal is to list \emph{all} elements in the intersections $B_{j, x} \cap C_{k, y}$ of all queries. In this way we obtain the set of all pseudo-solutions and we can easily check whether a 3SUM solution is among them. However, the Set Disjointness algorithm is not required to report the elements in the intersections. Instead, we proceed as follows:
\begin{enumerate}
    \setcounter{enumi}{3}
    \item Using the reduction from \cref{thm:set-intersection-to-disjointness} with threshold $t = n^\rho$ (for some parameter $\rho$ to be determined later), we list up to $t$ elements from each queried intersection.
    \item For each query $B_{j, x} \cap C_{k, y}$ with $|B_{j, x} \cap C_{k, y}| \geq t$ (i.e., for which we have exhausted the threshold in the previous listing step), we explicitly enumerate all elements in the intersection by scanning once through $B_{j, x}$. 
\end{enumerate}
After these two steps, we have clearly listed all elements in the queried intersections. As argued before, this completes the 3SUM algorithm.

It remains to argue that the running time is subquadratic as claimed. Let us analyze the five steps of the algorithm individually:
\begin{enumerate}
    \item Takes time $\widetilde\Order(n + g^2) = \widetilde\Order(n + n^{2\alpha})$ by \cref{lem:dominance}.
    \item Takes time $\widetilde\Order(g (n^{1-\alpha})^{\max(1, \mu)+\delta}) = \Order(n^{2\alpha} (n^{1-\alpha+\delta} + n^{(1-\alpha)(2-\delta)})) = \Order(n^{1+\alpha+\delta} + n^{2-\delta(1-\alpha)})$ by \cref{lem:hashing-few-pseudo-solutions}.
    \item The whole construction takes linear time in the size of the Set Disjointness instance. This is negligible in comparison to step 4 (where we actually solve the instance).
    \item To analyze the running time of this step, we first analyze the parameters of the constructed Set Disjointness instance: Observe that
    \begin{align*}
        U &= m = \Order(n^{2-2\alpha}), \\
        N &= 2g \sqrt{m} = \Order(n^{\alpha + 1 - \alpha}) = \Order(n), \\
        s &= \Order(n^{1-\alpha}), \\
        q &= |R| \cdot \Order(n^{1-\alpha}) = \Order(n^{1+\alpha}).
    \end{align*}
    Since these are exactly as claimed in the lemma statement, the efficient Set Disjointness algorithm can solve this instance in time $\Order(n^{2-\epsilon})$ for some $\epsilon > 0$. Due to the overhead of \cref{thm:set-intersection-to-disjointness}, this step runs in total time $\Order(t^6 \cdot n^{2-\epsilon} + \frac{qs}{t}) = \Order(n^{2-\epsilon+6\rho} + n^{2-\rho})$.
    \item To analyze the running time of this step, first recall that there are at most $\widetilde\Order(n^{1+\alpha+3\delta})$ many pseudo-solutions (modulo $m$). Moreover, the pseudo-solutions are in one-to-one correspondence to the queried set intersections. Since we only query the intersections containing at least~\makebox{$t = n^\rho$} elements, there can be at most $\widetilde\Order(n^{1+\alpha+3\delta-\rho})$ that have not been completely answered in the previous step. Each such remaining query takes time $\Order(s) = \Order(n^{1-\alpha})$, hence this step takes time $\widetilde\Order(n^{2 + 3\delta - \rho})$ in total.
\end{enumerate}
Summing over all contributions, the total running time is
\begin{equation*}
    \widetilde\Order(n + n^{2\alpha} + n^{1+\alpha + \delta} + n^{2-\delta(1-\alpha)} + n^{2-\epsilon+6\rho} + n^{2-\rho} + n^{2+3\delta - \rho}).
\end{equation*}
Recall that $0 \leq \alpha < 1$. Thus by picking e.g.\ $\rho = \frac{\epsilon}{7} > 0$ and $\delta = \min\set{\frac{1-\alpha}{2}, \frac{\rho}{4}} > 0$ this running time becomes subquadratic, \smash{$\Order(n^{2-\epsilon'})$} for some $\epsilon' > 0$.
\end{proof}

Next, let us turn to the analogous proof for Set Intersection. The idea is similar, but requires some tweaks (as in the original paper~\cite{KopelowitzPP16}).

\thmsetintersection*

With this theorem we prove that Set Intersection has a deterministic matching lower bound whenever the queries are expected to produce at least $\Omega(1)$ elements on average, possibly much more. We remark however that there are some non-trivial cases which are covered by the randomized reduction~\cite{KopelowitzPP16} which we cannot cover here---namely if $1 - \alpha < \beta \leq 1$. This parameterization describes a setting where we expect the total size of the intersections to be much smaller than the total number of queries.

\begin{proof}
This proof is overall very similar to the proof of \cref{thm:set-disjointness}. We will thus be more concise here. Assume that Set Disjointness with parameters as specified in the lemma statement is in subquadratic time; we design a subquadratic algorithm for a given 3SUM instance $A$:
\begin{enumerate}
    \item We apply the deterministic self-reduction for 3SUM. Specifically, we apply \cref{lem:dominance} to partition $A$ into $g = n^\alpha$ groups $A_1, \dots, A_g$ of size $\Order(n^{1-\alpha})$, such that only a subset $R \subseteq [g]^3$ of at most $|R| = \Order(n^{2\alpha})$ group triples is relevant.
    \item We apply the deterministic hashing lemma to find a modulus $m$ under which we have only few false positives in the relevant groups. Specifically, we apply \cref{lem:hashing-few-pseudo-solutions} for the 3SUM instances $(A_i \cup A_j \cup A_k)_{(i, j, k) \in R}$ (of size $\Order(n^{1-\alpha})$ each) and for $\mu = \frac{1-\alpha+\beta}{1-\alpha} - 2\delta$ and some parameter $\delta > 0$ to be specified later. This either reports that one of the instances contains a 3SUM solution (in which case we can stop), or the lemma returns a modulus $m = \Order((n^{1-\alpha})^\mu) = \Order(n^{1-\alpha+\beta})$ satisfying that
    \begin{multline*}
        \sum_{(i, j, k) \in R} \#\set{(a, b, c) \in A_i \times A_j \times A_k : a + b \equiv c \mod m} \\\leq |R| \cdot \Order((n^{1-\alpha})^{3-\mu+\delta}) \cdot (\log n)^{\Order(1/\delta)} = \widetilde\Order(n^{2\alpha + 3 - 3\alpha - 1 + \alpha - \beta + 3\delta}) = \widetilde\Order(n^{2-\beta+3\delta}). 
    \end{multline*}
    Let us for simplicity assume that $\sqrt m$ is an integer; otherwise replace all the following occurrences of $\sqrt m$ with an appropriate rounding of $\sqrt m$.
    \item We construct the Set Intersection instance exactly as in \cref{thm:set-disjointness}: For each $i \in [g]$ and for all $x, y \in [\sqrt m]$, we construct the following two sets:
    \begin{align*}
        B_{i, x} &= \set{(b + x \sqrt m) \bmod m : b \in A_i}, \\
        C_{i, y} &= \set{(c + y) \bmod m : c \in A_i}.
    \end{align*}
    The Set Intersection instance consists of the sets $\set{B_{i, x}}_{i, x} \cup \set{C_{i, y}}_{i, y}$. The queries are defined as follows: For each $(i, j, k) \in R$ and for each $a \in A_i$, we query the intersection $B_{j, x} \cap C_{k, y}$, where $x, y \in [\sqrt m]$ are such that $a \equiv x \sqrt m - y \bmod m$. As before, there is a natural one-to-one correspondence between the queried elements in the set intersections and the pseudo-solutions modulo $m$.
    \item This is where the proof diverges. We can simply solve the Set Intersection instance using the oracle. For each reported element in a one of the set intersections, we test in $\Order(1)$ time whether the corresponding pseudo-solution is a proper 3SUM solution.
\end{enumerate}
This completes the description of the algorithm. The correctness is immediate from the given in-text descriptions. It remains to bound the running time, for which we again analyze the four steps individually:
\begin{enumerate}
    \item Takes time $\widetilde\Order(n + g^2) = \widetilde\Order(n + n^{2\alpha})$ by \cref{lem:dominance}.
    \item Takes time
    \begin{equation*}\widetilde\Order(g (n^{1-\alpha})^{\max(1, \mu)+\delta}) = \Order(n^{2\alpha} (n^{1-\alpha+\delta} + n^{(1-\alpha)(\frac{1-\alpha+\beta}{1-\alpha}-\delta)})) = \Order(n^{1+\alpha+\delta} + n^{1+\alpha+\beta-\delta(1-\alpha)})
    \end{equation*}
    by \cref{lem:hashing-few-pseudo-solutions}.
    \item The whole construction takes linear time in the size of the Set Intersection instance. This is negligible in comparison to step 4 (where we actually solve the instance).
    \item To analyze the running time of this step, we first analyze the parameters of the constructed Set Intersection instance: Observe that
    \begin{align*}
        U &= m = \Order(n^{1-\alpha+\beta}), \\
        N &= 2g \sqrt{m} = \Order(n^{\alpha + \frac12 - \frac\alpha2 + \frac\beta2}) = \Order(n^{\frac12 + \frac\alpha2 + \frac\beta2}), \\
        s &= \Order(n^{1-\alpha}), \\
        q &= |R| \cdot \Order(n^{1-\alpha}) = \Order(n^{1+\alpha}).
    \end{align*}
    These are exactly as claimed in the lemma statement. Recall further that the total number of elements in all set intersections (i.e., the output size) is exactly the number of pseudo-solutions modulo $m$. By the second step, this number is bounded by $\widetilde\Order(n^{2-\beta+3\delta})$. This is slightly more than we have claimed in the lemma statement, however, we can simply substitute~$n$ for~\makebox{$n' = n^{1+3\delta}$}; then the constructed Set Intersection instance has the claimed parameters with respect to $n'$. The oracle thus takes time $\Order((n')^{2-\epsilon}) = \Order(n^{(1+3\delta)(2-\epsilon)}) = \Order(n^{2-\epsilon+6\delta})$ to solve the instance.
\end{enumerate}
Summing over all contributions, the total running time is
\begin{equation*}
    \widetilde\Order(n + n^{2\alpha} + n^{1+\alpha+\delta} + n^{1+\alpha+\beta-\delta(1-\alpha)} + n^{2-\epsilon+6\delta}).
\end{equation*}
Recall that $0 \leq \alpha < 1$ and $0 \leq \beta \leq 1 - \alpha$. Thus, by picking e.g.\ $\delta = \min\set{\frac{1-\alpha}{2}, \frac{\epsilon}{7}}$ the running time becomes truly subquadratic, \smash{$\Order(n^{2-\epsilon'})$} for some $\epsilon' > 0$.
\end{proof}




    

    




\bibliographystyle{alphaurl}
\bibliography{references}

\appendix
\section{Deterministic Lower Bound for Mono Convolution} \label{sec:mono-conv}
In this section we prove that the Mono Convolution problem (defined as follows) is deterministically 3SUM-hard.

\begin{problem}[Mono Convolution]
Given three sequences $X, Y, Z \in \Int^n$, determine for each $k \in [n]$ whether there exists a pair $i, j \in [n]$ with $i + j = k$ and $X[i] = Y[j] = Z[k]$.
\end{problem}

It is known that this problem can be solved in time $\widetilde\Order(n^{3/2})$, and Lincoln, Polak and Vassilevska Williams~\cite{LincolnPW20} proved that any polynomial improvement over this baseline algorithm is \emph{equivalent} to a subquadratic algorithm for 3SUM (by means of a randomized reduction). Our goal here is to deranomize their reduction:

\begin{lemma}[Deterministic Mono Convolution Hardness] \label{lem:mono-conv}
Unless the deterministic 3SUM hypothesis fails, there is no deterministic algorithm for the Mono Convolution problem running in time $\Order(n^{2-\epsilon})$, for any $\epsilon > 0$.
\end{lemma}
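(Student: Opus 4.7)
The plan is to derandomize the Lincoln, Polak, Vassilevska Williams reduction from 3SUM to Mono Convolution~\cite{LincolnPW20}. That reduction proceeds in two stages: first, a hashing step converts 3SUM into a Convolution 3SUM-like problem in which the value range is small; second, a ``merging'' step collapses the sparse multisets living at each convolution position into single values by adding random offsets, thereby producing an honest Mono Convolution instance. The first stage is straightforward to handle with our toolkit: apply \cref{lem:hashing-few-pseudo-solutions} to the given 3SUM instance to compute a modulus $m$ with few pseudo-solutions, then bucket by residue class to obtain triples of multisets $(X_k, Y_k, Z_k)_{k \in [m]}$ such that a true 3SUM solution corresponds to $a \in X_i$, $b \in Y_j$, $c \in Z_k$ with $i + j \equiv k \pmod{m}$ and $a + b = c$. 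Because the total number of pseudo-solutions is small, verifying a short list of candidates at the end is cheap.

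The second stage is the new ingredient. For each bucket index $k$ we must pick an offset $\sigma_k$ and replace each element $x$ of $X_k, Y_k, Z_k$ by $x + \sigma_k$ (modulo some universe $V'$), so that the resulting three sequences of single values define a Mono Convolution instance whose solutions are in $1$-to-$\Order(1)$ correspondence with the set of pseudo-solutions of the hashed problem. For random offsets the expected number of \emph{spurious} coincidences—triples that merge into a Mono triple without corresponding to a true 3SUM solution—is small, because two unrelated elements collide after independent shifts only with probability $\Order(1/V')$. I would derandomize this by selecting the offsets one by one using the method of conditional expectations, exactly as in \cref{lem:hashing-few-collisions,lem:hashing-few-pseudo-solutions}. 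Each indicator summand in the count of spurious coincidences depends on only a constant number of the $\sigma_k$, so the conditional expectation after fixing $\sigma_1, \dots, \sigma_{k-1}$ can be evaluated as a constant number of convolution-like queries via FFT. Choosing the locally optimal $\sigma_k$ beats the global expectation, and iterating preserves the bound to the end.

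Once the offsets are fixed, we feed the resulting length-$\Order(m)$ Mono Convolution instance into the hypothetical $\Order(n^{2-\epsilon})$ algorithm and brute-force check the surviving candidates against the original $A$. With the parameters chosen in the spirit of \cref{thm:set-disjointness} (in particular $V' = \Order(\sqrt{n})$ and $m$ close to $n$, matching LPW's choice that makes their baseline tight at $n^{3/2}$), this yields a subquadratic deterministic 3SUM algorithm, contradicting \cref{hyp:det-3sum}.

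The main obstacle will be to keep the offset-selection step within budget. A naive implementation that, for every bucket, tries all $V'$ candidates and recomputes the objective by FFT could incur an overhead of $\widetilde\Order(V' \cdot m) = \widetilde\Order(n^{3/2})$ per bucket, which would be fatal. The delicate step will be to observe that only ``local'' contributions change when $\sigma_k$ is updated and to maintain the conditional expectation incrementally—analogous to the incremental updates used in Chan and He's derandomization~\cite{ChanH20}—so that the total overhead of selecting all $m$ offsets matches the $\Order(n^{2-\epsilon})$ budget. Once this bookkeeping is in place, the rest of the proof is essentially identical to LPW's, with randomized claims replaced by their deterministic counterparts.
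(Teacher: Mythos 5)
Your high-level instinct is right on the derandomization technique---the paper's key new ingredient, the load balancing lemma (\cref{lem:load-balancing}), is indeed proved via the method of conditional expectations with FFT evaluation of the objective and a bucketing trick to keep the cost below $\widetilde\Order(NU)$, which is exactly the concern you flag at the end. However, the structural framework you build around that technique differs from the paper's in ways that leave real gaps.

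The paper's reduction is: (i) the deterministic universe reduction for All-Numbers 3SUM (\cref{lem:univ-reduction-an-3sum}) to get $A \subseteq [n^2]$; (ii) the deterministic self-reduction (\cref{lem:self-reduction-an-3sum}) to produce $N = \Theta(n^{2/3})$ AN-3SUM instances $A_1, \dots, A_N$ of size $\Theta(n^{2/3})$ over universe $\Theta(n^{4/3})$; (iii) the load balancing lemma to choose shifts $s_1, \dots, s_N$ and partitions $A_i = A_i' \cup A_i''$ so that for every coordinate $a$ the sets $\set{i : a \in A_i' + s_i}$ and $\set{i : a \in A_i' + 2s_i}$ have size at most $L = N^{5\delta}$; and (iv) a construction of $L^3$ Mono Convolution instances indexed by $(x,y,z) \in [L]^3$, where $X_x[a] := i$ if $i$ is the $x$-th index with $a \in A_i' + s_i$, so that the \emph{alphabet value} of the Mono Conv instance encodes which subinstance $A_i$ a candidate solution lives in. The elements shunted into $A_i''$ are brute-forced separately. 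Your proposal instead hashes via \cref{lem:hashing-few-pseudo-solutions} and buckets by residue class, then wants to add one offset $\sigma_k$ per residue bucket. This skips the self-reduction entirely, and---more importantly---adding a single offset $\sigma_k$ to every element of a bucket does not collapse the bucket to a single value, which is what Mono Convolution requires: you never specify what the value at a position should be. The paper's trick of using the subinstance index $i$ as the value, together with the restriction to $A_i'$ and the outer enumeration over $[L]^3$ to handle the residual multiplicity $L$, is what makes the construction type-check, and your proposal does not contain an analogue of it.

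Concretely, the missing pieces are: (a) the self-reduction into $N$ independent sparse instances, which is what the load balancing lemma (and the ``value $=$ instance index'' encoding) operates on; (b) the partition into light part $A_i'$ and heavy part $A_i''$ together with the separate brute-force for $\sum_i |A_i''| \leq \Order(N^{2-\delta}S^2/U)$; and (c) the enumeration over $[L]^3$ Mono Conv instances, which is essential for correctness because load balancing only guarantees load $L$, not load $1$. Your stated parameters ($V' = \Order(\sqrt n)$ and $m$ close to $n$ with a single length-$\Order(m)$ Mono Conv instance) also do not yield a consistent budget: a single length-$n$ instance can only certify $\Order(n)$ triples, whereas after hashing you must account for $\widetilde\Order(n)$ or more pseudo-solutions spread across $\Theta(n^2)$ candidate pairs. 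The paper instead solves $L^3 = n^{\Order(\delta)}$ Mono Conv instances of length $U = \Theta(n^{4/3})$, each in time $\Order(U^{3/2-\epsilon}) = \Order(n^{2-(4/3)\epsilon})$, which is where the $n^{3/2}$-vs-$n^{3/2-\epsilon}$ threshold actually enters.
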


The randomized reduction it is based on the insight that an \emph{$s$-sparse} Mono Convolution instance is equivalent to a 3SUM instance of size $s$ and universe size $U$ (by viewing the sequences $X, Y, Z$ as the indicator vectors of the three sets). The crucial part in the reduction is that we can combine several 3SUM instances into one \emph{dense} Mono Convolution instance. To this end, one can choose \emph{random offsets} such that in every coordinate we have only few collisions. Besides the usual 3SUM machinery, we therefore have to rely on the following deterministic load balancing lemma:

\begin{lemma}[Deterministic Load Balancing] \label{lem:load-balancing}
Let $\delta > 0$, and let $A_1, \dots, A_N \subseteq [U]$ be sets of size at most $S$. We can compute a partition of each set into $A_i = A_i' \cup A_i''$ and shifts $s_1, \dots, s_N \in [U]$ such that
\begin{enumerate}
    \item For all $a \in [3U]$, we have $\abs{\set{i : a \in A_i' + s_i}}, \abs{\set{i : a \in A_i' + 2s_i}} \leq N^{5\delta}$.
    \item $\sum_i |A_i''| \leq \Order(N^{2-\delta} S^2 / U)$. 
\end{enumerate}
The algorithm runs in deterministic time $\widetilde\Order(N S + N^{1-\delta} U)$. 
\end{lemma}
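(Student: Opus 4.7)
The plan is to choose the shifts $s_i$ by the method of conditional expectations applied to a simple quadratic pessimistic estimator, batched over groups of $N^\delta$ sets at a time to hit the target running time. Given candidate shifts $s_1, \dots, s_N$, introduce the load counters
\begin{equation*}
L(a) := |\{i : a \in A_i + s_i\}|, \qquad L'(a) := |\{i : a \in A_i + 2 s_i\}|,
\end{equation*}
and the potential $\Phi := \sum_{a \in [3U]}(L(a)^2 + L'(a)^2)$. Once the shifts are fixed, the partition is forced: put $x \in A_i''$ exactly when $L(x + s_i) > N^{5\delta}$ or $L'(x + 2 s_i) > N^{5\delta}$, so condition~(1) is automatic. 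A standard Markov-style manipulation gives $\sum_i |A_i''| = \sum_a L(a)\,\mathbf{1}[L(a) > N^{5\delta}] + \sum_a L'(a)\,\mathbf{1}[L'(a) > N^{5\delta}] \leq \Phi/N^{5\delta}$, reducing condition~(2) to finding shifts with $\Phi \leq \Order(N^{2-4\delta}S^2/U)$.

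To keep the number of FFT invocations low, I would partition the sets into $N^{1-\delta}$ batches of $N^\delta$ sets each and force all $s_i$ inside one batch to share a common value $s_t$. Then $L(a) = \sum_{t} B_t(a - s_t)$ with $B_t(x) := |\{i \in \mathrm{batch}\,t : x \in A_i\}| \leq N^\delta$, and analogously for $L'$. Taking the batch shifts $s_t$ uniformly and independently in $[U]$, the pointwise bound $B_t(x)^2 \leq N^\delta B_t(x)$ yields $\mathrm{Var}(B_t(a - s_t)) \leq N^{2\delta}S/U$, so $\mathrm{Var}(L(a)) \leq N^{1+\delta}S/U$ and
\begin{equation*}
\mathbb{E}[\Phi] \leq 6U \cdot \bigl(N^{1+\delta}S/U + (NS/U)^2\bigr) = \Order(N^{1+\delta}S + N^2 S^2 / U).
\end{equation*}
Dividing by $N^{5\delta}$ gives $\sum_i |A_i''| \leq \Order(N^{1-4\delta}S + N^{2-5\delta}S^2/U)$, which is within the promised $\Order(N^{2-\delta}S^2/U)$ except in the very sparse regime $U > N^{1+3\delta}S$. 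In that regime $NS/U \ll 1$ and upgrading the pessimistic estimator to $\sum_a L(a)^k$ for a large constant $k = \Order(1/\delta)$ reclaims the target bound: there one has $\mathbb{E}[L(a)^k] = \Order_k(NS/U)$ dominated by the first-order mass, hence $\mathbb{E}[\Phi] = \Order_k(NS)$ and $\Phi/N^{5\delta(k-1)}$ easily beats $N^{2-\delta}S^2/U$.

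The final step is the derandomization. Pick the batch shifts $s_1, s_2, \dots, s_{N^{1-\delta}}$ in sequence, each minimizing $\mathbb{E}[\Phi \mid s_1, \dots, s_t]$. Splitting $L(a) = L_{\leq t}(a) + L_{>t}(a)$ into deterministic and still-random parts, the only piece of the conditional expectation that depends on $s_t$ is of the form $\sum_a B_t(a - s_t) \cdot \bigl(2 L_{\leq t-1}(a) + B_t(a - s_t) + 2\mu_{>t}(a)\bigr)$, with an analogous $L'$ contribution; here $\mu_{>t}(a)$ denotes the expected load at $a$ from the still-random future batches. Viewed as a function of $s_t \in [U]$, this is a cross-correlation of $B_t$ with a fixed weight vector, so one FFT evaluates it at every candidate shift in $\widetilde\Order(U)$ time. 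Over the $N^{1-\delta}$ batches this sums to $\widetilde\Order(N^{1-\delta} U)$, plus $\widetilde\Order(NS)$ to maintain $L$ and $L'$ incrementally. The main obstacle I expect is confirming that forcing all shifts inside a batch to agree does not fatally weaken the pessimistic estimator---but the constant~$5$ slack hiding in the $N^{5\delta}$ load threshold is exactly the budget needed to absorb this conflation.
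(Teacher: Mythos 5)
Your high-level plan---batch the shifts in groups of $N^{\delta}$, drive a pessimistic estimator by conditional expectations, and evaluate the $s_t$-dependent part of the conditional expectation with one FFT per batch---is exactly the strategy the paper pursues (via its two-level Slow/Fast decomposition). However, your choice of potential $\Phi = \sum_a\bigl(L(a)^2 + L'(a)^2\bigr)$ introduces a genuine gap, and the higher-moment patch does not close it.

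The obstacle is the first-moment floor. Since $L(a)$ is a nonnegative integer, $\Phi \geq \sum_a L(a) = \sum_i|A_i|$, so your Markov bound $\sum_i|A_i''| \leq \Phi/N^{5\delta}$ can never fall below $\sum_i|A_i|/N^{5\delta}$, while the lemma's target $\Order(N^{2-\delta}S^2/U)$ tends to $0$ as $U$ grows. You correctly notice this at $U > N^{1+3\delta}S$, but the $k$-th moment rescue fails for two reasons. First, the claimed estimate $\Ex[L(a)^k] = \Order_k(NS/U)$ is wrong once you batch without merging: a batch $t$ whose sets all contain a common element has $B_t = N^{\delta}$ at that point, whence $\Ex[B_t(a-s_t)^k] \leq N^{\delta k}S/U$ and, summing over $N^{1-\delta}$ batches, $\Ex[L(a)^k] = \Theta_k(N^{1+\delta(k-1)}S/U)$ in the worst case. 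Second, even with the corrected bound the potential $\sum_a L(a)^k$ still inherits the floor $\sum_a L(a)$, so $\Phi_k/N^{5\delta(k-1)}$ is lower-bounded by $NS/N^{5\delta(k-1)}$, and no fixed $k = \Order(1/\delta)$ beats $N^{2-\delta}S^2/U$ uniformly over all $U$; the lemma places no upper bound on $U$.

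The paper avoids both problems by \emph{merging} each batch into a single set $\bar A_t := \bigcup_{i \in I_t} A_i$ (so the batch's indicator is $\{0,1\}$-valued and intra-batch overlaps contribute nothing) and by counting \emph{pairwise collisions} $\sum_a \binom{L(a)}{2}$ rather than a raw second moment. Equivalently, staying within your single-pass framework, replace $\Phi$ by the inter-batch pair count $\Psi := \sum_a \sum_{t<t'} B_t(a-s_t)\,B_{t'}(a-s_{t'})$. This has no first-moment floor: by independence across batches, $\Ex[\Psi] \leq 3U \binom{N^{1-\delta}}{2}\bigl(N^{\delta}S/U\bigr)^2 = \Order(N^2S^2/U)$ unconditionally. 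Moreover, for $L(a) > N^{5\delta}$ the identity $\sum_{t<t'}B_tB_{t'} = \tfrac12\bigl(L(a)^2 - \sum_t B_t^2\bigr) \geq \tfrac12 L(a)\bigl(L(a)-N^{\delta}\bigr) \geq L(a)^2/4$ gives $L(a) \leq 4\Psi_a/N^{5\delta}$, hence $\sum_i|A_i''| \leq 4\Psi/N^{5\delta} = \Order(N^{2-5\delta}S^2/U)$. Your FFT step carries over: the only $s_t$-dependent part of $\Ex[\Psi\mid s_1,\dots,s_t]$ is the cross-correlation of $B_t$ with the already-fixed prefix inter-batch load (the future-expectation term is constant in $a$ and hence in $s_t$), so one FFT of length $\Order(U)$ per batch suffices, giving $\widetilde\Order(N^{1-\delta}U)$ total plus $\widetilde\Order(NS)$ maintenance. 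With that substitution the argument is sound, and it amounts to a collapsed single-pass version of the paper's ``slow lemma applied to merged buckets.''
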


We postpone the proof of \cref{lem:load-balancing} to \cref{sec:mono-conv:sec:load-balancing}, and instead prove \cref{lem:mono-conv} here.

\begin{proof}[Proof of \cref{lem:mono-conv}]
Assume that there is an algorithm in time $\Order(n^{3/2-\epsilon})$ for the Mono Convolution problem. We design a subquadratic algorithm to decide a given 3SUM instance $A$ of size $n$. Consider the following five steps:
\begin{enumerate}
    \item We view $A$ as an instance of All-Numbers 3SUM and use the universe reduction (\cref{lem:univ-reduction-an-3sum}). We can therefore assume that $A \subseteq [n^2]$.
    \item We apply the deterministic self-reduction for All-Numbers 3SUM (\cref{lem:self-reduction-an-3sum} with parameter $g = n^{1/3}$) and reduce $A$ to $N = \Order(n^{2/3})$ All-Numbers 3SUM instances $A_1, \dots, A_N$ of size $\Order(n^{2/3})$ each, and over the univese $[\Order(n^{4/3})]$.
    \item We apply the load balancing lemma (with parameters~\makebox{$N = S = \Theta(n^{2/3}), U = \Theta(n^{4/3})$} and some parameter $\delta > 0$ to be determined later). In this way we obtain shifts $s_1, \dots, s_N$ and partitions~\makebox{$A_i = A_i' \cup A_i''$}. Let~\makebox{$L := N^{5\delta}$}. The lemma guarantees that for each $x \in [3U]$ we have $\abs{\set{i : a \in A_i' + s_i}}, \abs{\set{i : a \in A_i' + 2s_i}} \leq L$.
    \item Let $(x, y, z) \in [L]^3$. We construct a Mono Convolution instance $(X_x, Y_y, Z_z)$ as follows: We assign $X_x[a] := i$ whenever $i$ is the $x$-th index satisfying  that $a \in A_i' + s_i$ (according to an arbitrary but fix order). We similarly assign $Y_y[a] := i$ whenever $i$ is the $y$-th index satisfying~\makebox{$a \in A_i' + s_i$} and $Z_z[x] := i$ whenever $i$ is the $z$-th index satisfying $a \in A_i' + 2s_i$. For each $(x, y, z) \in [L]^3$ we solve $(X_x, Y_y, Z_z)$ using the efficient Mono Convolution algorithm. Whenever it reports $c$ (claiming that there are $a, b$ with $X_x[a] = Y_y[b] = Z_z[c] =: i$), we report that $c - 2s_i$ is a 3SUM solution in $A_i$.
    \item Finally, for each $i \in [N]$ test by brute-force all 3SUM solutions in $A_i$ with at least one element in $A_i''$.
\end{enumerate}

\subparagraph{Correctness.}
We start proving that this algorithm is correct. Clearly the universe reduction and self-reduction steps are correct, and it suffices to prove that we correctly solve the All-Numbers 3SUM instances~\makebox{$A_1, \dots, A_N$} in steps 4 and 5. In step 5 we are guaranteed to find all 3SUM solutions involving elements from $A_i''$, thus in step 4 it is sufficient to only consider the sets $A_1', \dots, A_N'$.

We prove this statement in two directions: First, suppose that there are $a', b', c' \in A_i'$ such that~\makebox{$a' + b' = c'$}. Let $a = a' + s_i$, $b = b' + s_i$ and $c = c' + 2s_i$. The load balancing lemma gives that $\abs{\set{i : a \in A_i' + s_i}}, \abs{\set{i : a \in A_i' + 2s_i}} \leq L$, and thus there are $x, y, z \in [L]$ such that $X_x[a] = i$, $Y_y[b] = i$ and $Z_z[c] = i$. It follows that $c$ is reported in the Mono Convolution instance $(X_x, Y_y, Z_z)$ and we correctly report $c' = c - 2s_i$. 

For the other direction, suppose that we report $c - 2s_i$. This only happens if there are $a, b$ with $a + b = c$ and $X_x[a] = Y_y[b] = Z_z[c]$. By construction, this means that there are elements~\makebox{$a' = a - s_i, b' = b - s_i, c' = c - 2s_i \in A_i'$}. Note that $a', b', c'$ is indeed a 3SUM solution.

\subparagraph{Running Time.}
It suffices to analyze the running time. Let us analyze the steps individually:
\begin{enumerate}
    \item The reduction to a quadratic-size universe runs in subquadratic time (\cref{lem:univ-reduction-an-3sum}).
    \item The self-reduction takes time $\widetilde\Order(n + g^2) = \widetilde\Order(n^{4/3})$ (\cref{lem:self-reduction-an-3sum}).
    \item The load balancing lemma takes time $\widetilde\Order(N S + N^{1-\delta} U) = \widetilde\Order(n^{2-2\delta/3})$ (\cref{lem:load-balancing}).
    \item We consider $L^3 = \Order(n^{15\delta})$ many triples $(x, y, z)$. For each such triple, constructing the Mono Convolution instance is negligible. Solving each instance with the efficient Mono Convolution algorithm takes time $\Order(U^{3/2-\epsilon}) = \Order(n^{2 - \epsilon})$, so the total time is $\Order(n^{2-\epsilon+15\delta})$.
    \item In the brute-forcing step it suffices to enumerate all pairs in $\bigcup_i (A_i'' \times A_i)$. The running time is thus bounded by $\Order(S \cdot \sum_i |A_i''|) = \Order(N^{2-\delta} S^3 / U) = \Order(n^{2-2\delta/3})$ (\cref{lem:load-balancing}).
\end{enumerate}
Summing over all contributions, the total running time is bounded by
\begin{equation*}
    \widetilde\Order(n^{4/3} + n^{2-2\delta/3} + n^{2-\epsilon+15\delta}).
\end{equation*}
By picking $\delta = \frac{\epsilon}{16} > 0$, this running time becomes truly subquadratic.
\end{proof}

\subsection{Deterministic Load Balancing} \label{sec:mono-conv:sec:load-balancing}
It remains to prove the deterministic load balancing lemma, \cref{lem:load-balancing}. We will do so in two steps: First, we prove a weaker version of the lemma claiming an algorithm running in time $\widetilde\Order(N U)$ (see \cref{lem:load-balancing-slow}). We then derive \cref{lem:load-balancing} from that weaker lemma.

\begin{lemma}[Slow Load Balancing] \label{lem:load-balancing-slow}
Let $\delta > 0$, and let $A_1, \dots, A_N \subseteq [U]$ be sets of size at most~$S$. We can compute a partition of each set into $A_i = A_i' \cup A_i''$ and shifts $s_1, \dots, s_N \in [U]$ such that
\begin{enumerate}
    \item For all $a \in [3U]$, we have $\abs{\set{i : a \in A_i' + s_i}}, \abs{\set{i : a \in A_i' + 2s_i}} \leq N^{\delta}$.
    \item $\sum_i |A_i''| \leq \Order(N^{2-\delta} S^2 / U)$.
\end{enumerate}
The algorithm runs in deterministic time $\widetilde\Order(N U)$. 
\end{lemma}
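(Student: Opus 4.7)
The plan is to pick the shifts $s_1, \ldots, s_N$ deterministically by the method of conditional expectations applied to a quadratic load potential, and then to form the partition by demoting any element that falls into an overloaded bucket. Set $L = N^\delta$ and, for any choice of shifts, write $c_j(a) = |\set{i : a \in A_i + j\cdot s_i}|$ for $a \in [3U]$ and $j \in \set{1,2}$. Given the shifts, declare $a \in A_i$ to belong to $A_i''$ whenever $c_1(a + s_i) > L$ or $c_2(a + 2 s_i) > L$; this enforces property~(1) automatically, so the only remaining task is to bound $\sum_i |A_i''|$.

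For this bound I use the second-moment potential
$$\Phi \;=\; \sum_{a \in [3U]} c_1(a)^2 \;+\; \sum_{a \in [3U]} c_2(a)^2.$$
Since $c_j(a) > L$ implies $c_j(a) \le c_j(a)^2/L$, I obtain directly
$$\sum_i |A_i''| \;\le\; \sum_a c_1(a)\,\mathbf{1}[c_1(a){>}L] \;+\; \sum_a c_2(a)\,\mathbf{1}[c_2(a){>}L] \;\le\; \Phi/L.$$
A standard computation for uniformly random $s_i \in [U]$ — expanding each $c_j(a)^2$ as a sum over index pairs, using independence of shifts for distinct indices and the bound $\Pr[a \in A_i + s_i] \le |A_i|/U$ — yields $\mathbb{E}[\Phi] = \Order(N^2 S^2 / U)$ in the pertinent regime $NS \ge U$ (in the complementary sparse regime the target bound is vacuous and one may take $A_i'' = \emptyset$). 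Hence any deterministic shift assignment achieving $\Phi \le \Order(\mathbb{E}[\Phi])$ already gives the required bound $\sum_i |A_i''| = \Order(N^{2-\delta} S^2/U)$.

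To produce such shifts I apply the method of conditional expectations. Let $c_j^{(i)}$ denote the load counts after the first $i$ shifts are fixed, and let $E_j^{(i)}(a) = \sum_{k > i} \Pr[a \in A_k + j\cdot s_k]$ be the expected residual load at $a$. Using the shift-independence of the future Bernoulli variables, one computes
$$\mathbb{E}[\Phi \mid s_1,\ldots,s_i] \;=\; \sum_{j \in \set{1,2}} \sum_a \big[c_j^{(i)}(a)^2 + 2\, c_j^{(i)}(a)\, E_j^{(i)}(a)\big] \;+\; R^{(i)},$$
where $R^{(i)}$ collects variance-type terms that depend on $i$ but not on the shifts. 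This is the exact conditional expectation of $\Phi$, so greedily picking $s_i$ to minimise it yields $\Phi = \mathbb{E}[\Phi \mid s_1, \ldots, s_N] \le \mathbb{E}[\Phi]$ as required.

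The main implementation obstacle is to evaluate this estimator on all $U$ candidate values of $s_i$ within a $\widetilde\Order(U)$ budget per iteration. Isolating the dependence on the candidate $s$ (and dropping terms that are constant in $s$), the quantity to minimise reduces to
$$\Delta(s) \;=\; \sum_{a_0 \in A_i}\!\big[c_1^{(i-1)}(a_0+s) + E_1^{(i)}(a_0+s)\big] \;+\; \sum_{a_0 \in A_i}\!\big[c_2^{(i-1)}(a_0+2s) + E_2^{(i)}(a_0+2s)\big].$$
Each of the four sums is a convolution of $\mathbf{1}_{A_i}$ with a precomputed array over $[3U]$ — the factor-$2$ rescaling of the argument in the type-$2$ terms is handled by evaluating the same convolution at even positions — and FFT computes them on all $s \in [U]$ in time $\widetilde\Order(U)$. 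The arrays $c_j^{(\cdot)}$ are updated in $\Order(|A_i|)$ time after each shift is fixed, and $E_j^{(\cdot)}$ is maintained by subtracting the appropriate probability mass in $\Order(U)$ per step (with a $\widetilde\Order(NU)$ initialisation). Summing over $i \in [N]$ gives the claimed total running time $\widetilde\Order(NU)$, with the partition constructed from the final load arrays in linear time.
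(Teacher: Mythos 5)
Your proposal mirrors the paper's overall plan: derandomize the random-shifts argument from \cite{LincolnPW20} via the method of conditional expectations, implement the per-step minimization by FFT in $\widetilde\Order(U)$ time, and then declare $A_i''$ to consist of elements landing in overloaded buckets. The structural difference is that the paper greedily minimizes the \emph{marginal} collision cost $M_i = \min_s \sum_{j<i}\bigl(|(A_j+s_j)\cap(A_i+s)| + |(A_j+2s_j)\cap(A_i+2s)|\bigr)$, shows $M_i \le \Order(iS^2/U)$ by an averaging argument, and sums to bound the total number of collisions; you instead track the exact conditional expectation of a global second-moment potential $\Phi$. Both are valid derandomization strategies, and your FFT maintenance of the residual-expectation arrays $E_j^{(i)}$ is workable.

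There is, however, a genuine gap in your bound on $\sum_i |A_i''|$. Your potential is $\Phi = \sum_a c_1(a)^2 + \sum_a c_2(a)^2$, whose expectation carries the diagonal (self-pair) contribution $\sum_a \mathbb{E}[c_j(a)] = \Theta(NS)$; thus $\mathbb{E}[\Phi] = \Theta(N^2 S^2/U + NS)$ and your resulting bound $\Phi/L = \Order(N^{2-\delta}S^2/U + N^{1-\delta}S)$ exceeds the target whenever $NS < U$. You acknowledge this by restricting the potential bound to $NS \ge U$, but the escape you offer for the complementary regime---``the target bound is vacuous and one may take $A_i'' = \emptyset$''---does not hold: (i) the target $\Order(N^{2-\delta}S^2/U)$ is far below the trivial $NS$ when $U > N^{1-\delta}S$, so it is not vacuous; and (ii) taking $A_i''=\emptyset$ requires every bucket to have load at most $N^\delta$, which a bound on $\Phi$ cannot deliver (a single bucket of load $\sqrt{\Phi}$ is consistent with it). The paper sidesteps this by working with the \emph{collision count} $\Psi = \sum_a \binom{c_1(a)}{2} + \sum_a \binom{c_2(a)}{2}$, which has no diagonal term and so satisfies $\mathbb{E}[\Psi] = \Order(N^2 S^2/U)$ unconditionally; one still has $c_j(a)\,\mathbf{1}[c_j(a)>L] \le 2\binom{c_j(a)}{2}/L$, hence $\sum_i |A_i''| \le \Order(\Psi/L)$. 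Replacing $c^2$ by $\binom{c}{2}$ in your potential removes the spurious $NS$ term and makes the regime split unnecessary, repairing the argument.
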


\begin{proof}
Recall that a feasible randomized solution is to pick the shifts $s_1, \dots, s_N$ uniformly at random. We follow the method of conditional expectations: We select the shifts $s_1, \dots, s_N \in [U]$ step by step while making sure that each choice is \emph{at least as good as what we expect from a random choice.} Specifically, consider the following algorithm: We execute stages $i \gets 1, \dots, N$, and at the $i$-th stage (where we have already fixed $s_1, \dots, s_{i-1}$), we select~\makebox{$s_i \gets s$} to be the minimizer of
\begin{equation*}
    M_i := \min_{s \in [U]} \sum_{j < i} \parens*{\vphantom{\Big(}|(A_j + s_j) \cap (A_i + s)| + |(A_j + 2s_j) \cap (A_i \cap 2s)|}. 
\end{equation*}

This algorithm determines the offsets $s_1, \dots, s_N$, and it remains to compute the partitions. We say that $a \in [3U]$ is \emph{$\alpha$-heavy} if $\abs{\set{i : a \in A_i + s_i}}, \abs{\set{i : a \in A_i + 2s_i}} > \alpha$, and \emph{$\alpha$-light} otherwise. We also say that $a$ is \emph{heavy} if it is $N^\delta$-heavy and \emph{light} otherwise. Using this terminology, we assign
\begin{align*}
    A_i' &= \set{b \in A_i : \text{$b + s_i$ is light}}, \\
    A_i'' &= \set{b \in A_i : \text{$b + s_i$ is heavy}}.
\end{align*}
This completes the description of the algorithm. In the following we prove that it is correct and efficient.

\subparagraph{Proof of Property 1.}
The first property is trivial by the way we assigned the sets $A_i', B_i', C_i'$: By definition we have $\abs{\set{i : a \in A_i' + s_i}}, \abs{\set{i : a \in A_i' + 2s_i}} \leq N^\delta$ for all $a \in [3U]$.

\subparagraph{Proof of Property 2..}
The second property requires more work. We prove the statement in two steps, starting with the following claim:

\begin{claim}
$M_i \leq \Order(i S^2 / U)$.
\end{claim}
\begin{proof}
In order to obtain an upper bound on $M_i$, pretend that we choose $s$ uniformly at random. We then bound $M_i$ by the \emph{expected} number of collisions:
\begin{gather*}
    M_j \leq \Ex_{s} \brackets*{\sum_{j < i} \parens*{|(A_j + s_j) \cap (A_i + s)| + |(A_j + 2s_j) \cap (A_i \cap 2s)|}} \\
    \qquad\leq \sum_{k < j} \,\Ex_{s}\, \brackets*{\vphantom{\Big|}|(A_j + s_j) \cap (A_i + s)| + |(A_j + 2s_j) \cap (A_i \cap 2s)|}
\intertext{To bound the expected size of $(A_j + s_j) \cap (A_i + s)$, observe that any pair of elements collides only with probability at most $1/U$. Thus, since each of the sets has size at most $S$, the expected intersection size is at most $S^2 / U$:}
    \qquad\leq \sum_{j < i} \Order(S^2 / U) \\
    \qquad= \Order(i S^2 / U). \qedhere
\end{gather*}
\end{proof}

\begin{claim}
Let
\begin{equation*}
    C_a := \binom{\abs{\set{i : a \in A_i + s_i}}}{2} + \binom{\abs{\set{i : a \in A_i + 2s_i}}}{2}
\end{equation*}
denote the number of \emph{collisions} at $a$. Then $\sum_{a \in [3U]} C_a \leq \Order(N^2 S^2 / U)$.
\end{claim}
\begin{proof}
The proof is by a simple calculation:
\begin{gather*}
    \sum_{a \in [3U]} C_a \\
    \qquad= \sum_{a \in [3U]} \bigg(\sum_{\substack{j < i\\a \in A_i + s_i\\a \in A_j + s_j}} 1 + \sum_{\substack{j < i\\a \in A_i + 2s_i\\a \in A_j + 2s_j}} 1\bigg) \\
    \qquad\leq \sum_{j < i} \parens*{\vphantom{\Big(} |(A_j + s_j) \cap (A_i + s_i)| + |(A_j + 2s_j) \cap (A_i \cap 2s_i)|} \\
    \qquad\leq \sum_{i \in [N]} M_i \\
    \qquad\leq \sum_{i \in [N]} \Order(i S^2 / U) \\
    \qquad\leq \Order(N^2 S^2 / U). \qedhere
\end{gather*}
\end{proof}

\noindent
With these two claims in mind, we are ready to prove that $\sum_i |A_i''| \leq \Order(N^{2-\delta} S^2 / U)$. Consider the following calculation:
\begin{gather*}
    \sum_{i \in [N]} |A_i''| \\
    \qquad\leq \sum_{\substack{a \in [3U]\\\text{$a$ is heavy}}} \abs{\set{i : a \in A_i + s_i}} + \abs{\set{i : a \in A_i + 2s_i}} \\
    \qquad\leq \sum_{\ell = {\floor{\delta \log N}}}^{\log N} \sum_{\substack{a \in [3U]\\\text{$a$ is $2^\ell$-heavy}\\\text{$a$ is $2^{\ell+1}$-light}}} \abs{\set{i : a \in A_i + s_i}} + \abs{\set{i : a \in A_i + 2s_i}}\\
    \qquad\leq \sum_{\ell = {\floor{\delta \log N}}}^{\log N} \sum_{\substack{a \in [3U]\\\text{$a$ is $2^\ell$-heavy}}} \Order(2^\ell)
\intertext{Notice that each $2^\ell$-heavy element leads to at least $\binom{2^\ell}{2} = \Theta(2^{2\ell})$ many collisions. Since the total number of collisions is bounded by $\sum_a C_a \leq \Order(N^2 S^2 / U)$ (by the previous claim), there can be at most $\Order(N^2 S^2 / (2^{2\ell} U))$ many $2^\ell$-heavy elements:}
    \qquad\leq \sum_{\ell = {\floor{\delta \log N}}}^{\log N} \Order\parens*{2^\ell \cdot \frac{N^2 S^2}{2^{2\ell} U}} \\
    \qquad\leq \Order(N^{2-\delta} S^2 / U).
\end{gather*}

\subparagraph{Running Time.}
It suffices to prove that the algorithm can be implemented in time $\widetilde\Order(N U)$. Recall that we run $N$ stages, so it suffices to prove that each stage can be implemented in time~$\widetilde\Order(U)$. The relevant task is to find the minimizer $s$ of $M_i$. We solve this problem via the Fast Fourier Transform: Let $V$ be the vectors defined by $V[a] = \abs{\set{j < i : a \in A_j + s_j}}$ and let $W$ be the indicator vector of $-A_i$ (allowing negative indices for simplicity). Then our goal is to compute, for every $s \in [U]$,
\begin{equation*}
    \sum_{a \in [U]} V[a] \cdot W[s - a] = \sum_{\substack{a \in [U]\\s - a \in -A_i}} \abs{\set{j < i : a \in A_j + s_j}} = \sum_{j < i} |(A_j + s_j) \cap (A_i + s)|.
\end{equation*}
This is exactly a convolution computation that can be solved by the Fast Fourier Transform in near-linear time $\widetilde\Order(U)$.
\end{proof}

\begin{proof}[Proof of \cref{lem:load-balancing}]
Our strategy is to group the sets $A_1, \dots, A_n$ into \emph{buckets} $\bar A_1, \dots, \bar A_{\bar N}$ and to apply the previous lemma on these buckets. Specifically, let $\bar N := \ceil{N^{1-\delta}}$ and arbitrarily partition~$[N]$ into sets $I_1, \dots, I_{\bar N}$ of size at most $\ceil{N^{\delta}}$. We then assign
\begin{align*}
    \bar A_j = \bigcup_{i \in I_j} A_i.
\end{align*}
for $j \in [\bar N]$. Note that each of these buckets has size at most $\bar S := S \ceil{N^\delta}$. We apply \cref{lem:load-balancing-slow} with parameter $4\delta$ to obtain shifts $\bar s_1, \dots, \bar s_{\bar N}$ and partitions $\bar A_j = \bar A_j' \cup \bar A_j''$. For any $i \in I_j$, we then assign $s_i := s_j$, \smash{$A_i' := A_i \cap \bar A_j'$}, and \smash{$A_i'' := A_i \cap \bar A_j''$}.

We prove that this construction satisfies the claimed properties:
\begin{enumerate}
    \item For any $a \in [3U]$, we have that $\abs{\set{j : a \in \bar A_j' + \bar s_j}} \leq \bar N^{4\delta} \leq N^{4\delta}$. Since each bucket consists of at most $N^\delta$ original sets, it follows that \smash{$\abs{\set{i : a \in A_i' + s_i}} \leq N^{5\delta}$}.
    \item Similarly, since $\sum_j |\bar A_j''| \leq \Order(\bar N^{2-4\delta} \bar S^2 / U) = \Order(N^{2-2\delta} S^2 / U)$, and since each bucket consists of at most $N^\delta$ original sets, we have that \smash{$\sum_i |A_i''| \leq \Order(N^{2-\delta} S^2 / U)$}. 
\end{enumerate}

It remains to bound the running time. Constructing the buckets takes time $\Order(NS)$. Invoking the previous lemma takes time $\Order(\bar N U) = \Order(N^{1-\delta} U)$. Therefore, the total time is $\widetilde\Order(N S + N^{1-\delta} U)$ as claimed.
\end{proof}
\section{Deterministic Lower Bound for Local Alignment} \label{sec:local-alignment}

In this section we derandomize the known 3SUM-based hardness reduction for the local alignment problem by Abboud, Vassilevska Williams and Weimann~\cite{AbboudWW14}.

\begin{problem}[Local Alignment]
Given two strings $X, Y \in \Sigma^n$ over an alphabet $\Sigma$, and a scoring function $s : \Sigma \times \Sigma \to \Int$, compute the substrings $X', Y'$ of $X, Y$ respectively that have the same length and maximize $\sum_{i=1}^{|X'|} s(X'[i], Y'[i])$.
\end{problem}

\begin{theorem}[Deterministic Local Alignment Hardness] \label{thm:local-alignment}
Let $0 < \delta < 1$. Unless the deterministic 3SUM hypothesis fails, the Local Alignment problem on two strings of length $n$ over an alphabet of size~$\widetilde\Order(n^{1-\delta})$ cannot be solved in deterministic time $\Order(n^{2-\delta-\epsilon})$, for any $\epsilon > 0$.
\end{theorem}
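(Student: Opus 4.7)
My plan is to mimic the randomized reduction of Abboud, Vassilevska Williams, and Weimann by replacing its two randomized ingredients---additive hashing and a random linear map on the hash---with deterministic alternatives built from our toolkit, precisely as advertised in the overview (\textbf{(A)} of \cref{fig:reductions}). First, by \cref{thm:univ-reduction} I may assume without loss of generality that the input 3SUM instance $A$ lives over the universe $[n^3]$. I then invoke \cref{lem:hashing-few-collisions} with $\mu \approx 1-\delta$ to produce a modulus $m_1 = \Theta(n^{1-\delta})$ under which $A$ has only $\widetilde\Order(n^{1+\delta})$ collisions. This is the Chan--He style derandomization, and while it controls collisions cleanly it does \emph{not} directly bound false positives $(a,b,c) \in A^3$ with $a+b \equiv c \pmod{m_1}$.

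To control those false positives, I combine $m_1$ with the Abboud--Lewi--Williams multidimensional trick: I compute an auxiliary modulus $m_2 = \widetilde\Order(1)$ by a second application of \cref{lem:hashing-few-collisions} (or by \cref{lem:hashing-few-pseudo-solutions} on the small universe), and encode every $a \in A$ as the pair
\begin{equation*}
    \widehat a = (a \bmod m_1,\; \lfloor a / m_1 \rfloor \bmod m_2) \in [m_1] \times [m_2].
\end{equation*}
Since addition is additive in both coordinates (up to a constant number of fixed carry patterns, which I can enumerate), a triple is a false positive of the combined encoding iff it is a pseudo-solution in both coordinates. The auxiliary dimension then shaves the false-positive count by an additional factor of roughly $1/m_2$, bringing the total number of surviving false positives down to $\widetilde\Order(n^{2-\delta-\eta})$ for some small $\eta > 0$. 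Crucially, the resulting alphabet $[m_1]\times[m_2]$ still has size $\widetilde\Order(n^{1-\delta})$, matching the lemma statement.

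I then feed the reduced instance into the AWW reduction: this builds two strings of length $\Theta(n)$ over the alphabet $[m_1]\times[m_2]$, together with a scoring function, such that the optimal local alignment score crosses a prescribed threshold iff the reduced 3SUM instance admits a solution. A single call to the assumed deterministic algorithm then decides the reduced instance in time $\Order(n^{2-\delta-\epsilon})$. Finally, I enumerate the at most $\widetilde\Order(n^{2-\delta-\eta})$ surviving false-positive triples in brute force and check which (if any) are genuine 3SUM solutions. Summing up, the whole procedure runs in deterministic time $\Order(n^{2-\delta-\epsilon'})$ for some $\epsilon' > 0$, contradicting \cref{hyp:det-3sum}.

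The main obstacle is the careful balancing of the three parameters $m_1$, $m_2$, and the dimension of the ALW encoding. We simultaneously need the total alphabet to remain within $\widetilde\Order(n^{1-\delta})$, the number of false positives to dip strictly below the budget $n^{2-\delta-\epsilon}$, and the ALW encoding to remain ``carry-free'' so that a coordinate-wise pseudo-solution really does correspond to a scalar pseudo-solution. A secondary technicality is that \cref{lem:hashing-few-collisions} only bounds collisions and not pseudo-solutions directly; I deal with this by combining the collision bound with the same heavy/light-element dichotomy used in the proof of \cref{lem:univ-reduction-conv-3sum}, routing every triple that contains a heavy element through an $\Order(n^{2-\delta})$ brute-force sweep and leaving only the light-only triples for the multidimensional-AWW path.
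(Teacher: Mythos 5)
Your proposal misreads what the Abboud--Lewi--Williams carry-guessing trick does, and as a result the architecture of the reduction does not work. The ALW maps $f_1,\dots,f_N : [-U,U]\to[-\Order(\log U),\Order(\log U)]^d$ (\cref{lem:carry-guessing}) are \emph{not} an auxiliary modulus that shaves false positives by a factor $1/m_2$: they provide an \emph{exact} reformulation, with $a+b+c=0$ if and only if some $f_i(a)+f_i(b)+f_i(c)=0$, which is precisely what lets the construction dodge the question of false positives altogether. The paper's scoring function tests, for a guessed $(i,j)$ and for all $d$ coordinates $\ell$, whether $f_i(c)[\ell]=-f_i(a)[\ell]-f_i(b)[\ell]$ where $c$ is the $j$-th element of the appropriate hash bucket; an alignment of score $d$ therefore certifies a \emph{genuine} 3SUM solution, never a pseudo-solution. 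Your two-coordinate encoding $(a \bmod m_1,\ \lfloor a/m_1\rfloor \bmod m_2)$ with $m_2=\widetilde\Order(1)$ has no such guarantee; it only reduces the pseudo-solution count by a polylogarithmic factor, and more importantly it leaves the oracle answering ``yes'' on pseudo-solutions it cannot hand back to you. Your plan to ``enumerate the surviving false-positive triples and check them in brute force'' is therefore incoherent: the Local Alignment oracle returns only an optimal alignment, not a list of pseudo-solutions, so you have no way to find those triples without essentially re-solving the 3SUM instance.

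There are secondary but real mismatches as well. The heavy/light dichotomy in the paper exists so that the scoring function can deterministically look up the $j$-th element of a bucket of size at most $t$; you instead invoke it to bound pseudo-solutions, which is not what it accomplishes here. You also miss that the reduction must create and solve $N\cdot t = n^{\delta+\gamma+o(1)}$ separate Local Alignment instances (one per $(i,j)\in[N]\times[t]$), not a single call, and that the alphabet must carry the full block structure $[m]\times[-\Order(\log U),\Order(\log U)]\times[d]$ so that the strings can encode each number as a length-$d$ block separated by $\$$ symbols; an alphabet of the form $[m_1]\times[m_2]$ does not support this. Your use of \cref{thm:univ-reduction} and \cref{lem:hashing-few-collisions} as the first two steps does match the paper, but the rest of the route needs to be replaced with the genuine ALW encoding and the per-$(i,j)$ enumeration.
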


The deterministic reduction closely follows the original reduction~\cite{AbboudWW14} except that we replace randomized hashing with a deterministic construction. Here, in fact, it is not even necessary to control the number of pseudo-solutions, and we use the few-collisions deterministic hashing (\cref{lem:hashing-few-collisions},~\cite{ChanL15}). The proof additionally relies on the following lemma due to Abboud, Lewi and Williams~\cite{AbboudLW14}. Throughout we write $[i, j] = \set{i, \dots, j}$.

\begin{lemma}[Immediate by {\cite[Lemma~3.2]{AbboudLW14}}] \label{lem:carry-guessing}
Let $U \geq 1$. Then there are integers~\makebox{$N = U^{\order(1)}$} and $d = \Order(\log U / \log\log U)$, and mappings $f_1, \dots, f_N : [-U, U] \to [-\ceil{\log U}, \ceil{\log U}]^d$ (each deterministically computable in time $U^{\order(1)}$) satisfying the following property: For all $a, b, c \in [-U, U]$,
\begin{equation*}
    a + b + c = 0 \qquad\text{if and only if}\qquad \exists i \in [N]: f_i(a) + f_i(b) + f_i(c) = 0.
\end{equation*}
\end{lemma}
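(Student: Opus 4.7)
The plan is to reduce the single scalar equation $a+b+c=0$ to $d$ digit-wise equations in a base $B = \Theta(\log U)$ after first guessing the sequence of carries that arises when one adds $a,b,c$ in that base. Since adding three base-$B$ digits plus an incoming carry of at most $2$ produces an outgoing carry in $\{0,1,2\}$, the number of possible carry sequences is $3^d = U^{O(1/\log\log U)} = U^{o(1)}$, which will give the count $N$.

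Concretely, I would first shift to the nonnegative range: put $\tilde x := x+U \in [0,2U]$ and note that $a+b+c=0$ iff $\tilde a+\tilde b+\tilde c = 3U$. Pick $B := \lceil (\log U)/C\rceil$ for a suitable constant $C$ and $d := \lceil \log_B(6U+1)\rceil = O(\log U/\log\log U)$, and write $3U = \sum_{j=0}^{d-1} s_j B^j$ in base $B$. The identity $\tilde a+\tilde b+\tilde c = 3U$ holds iff there exists a carry vector $\vec k = (k_0,\dots,k_{d-1}) \in \{0,1,2\}^d$ with $k_{-1}:=0$ and $k_{d-1}=0$ satisfying
\begin{equation*}
\tilde a_j + \tilde b_j + \tilde c_j + k_{j-1} = B k_j + s_j \qquad \text{for every }j \in \{0,\dots,d-1\}.
\end{equation*}

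For each admissible $\vec k$ I would define the function
\begin{equation*}
f_{\vec k}(x)_j \;:=\; 3\tilde x_j + k_{j-1} - B k_j - s_j,
\end{equation*}
so that $f_{\vec k}(a)_j + f_{\vec k}(b)_j + f_{\vec k}(c)_j = 3(\tilde a_j + \tilde b_j + \tilde c_j + k_{j-1} - B k_j - s_j)$. This vanishes simultaneously at all coordinates exactly when $\vec k$ is a valid carry sequence certifying $\tilde a+\tilde b+\tilde c=3U$, which is the desired equivalence. Letting $i$ range over the at most $3^d = U^{o(1)}$ such $\vec k$, and noting each $f_{\vec k}$ is evaluable in $O(d)= U^{o(1)}$ time by one base-$B$ decomposition plus constant corrections, yields the claimed $N$ and runtime bound.

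The main thing to get right is keeping the codomain inside $[-\lceil\log U\rceil,\lceil\log U\rceil]$: my entries have magnitude bounded by $3(B-1) + B\cdot 2 + 2 + (B-1) = O(B)$, so I need $B$ smaller than $\log U$ by a fixed constant factor $C$. This only enlarges $d$ by a constant, preserving $d = O(\log U/\log\log U)$ and $N = U^{o(1)}$. The rest is bookkeeping.
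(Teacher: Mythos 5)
Your proof is correct: the shift to $[0,2U]$, the base-$B$ digit decomposition with $B = \Theta(\log U)$, the enumeration of the $3^d = U^{o(1)}$ carry vectors, and the check that the entries stay within $O(B) \leq \lceil \log U\rceil$ are all sound, and the telescoping argument (using $k_{-1}=k_{d-1}=0$) correctly gives both directions of the equivalence. The paper itself gives no proof—it cites the lemma as immediate from Abboud, Lewi, and Williams—and your argument is exactly the carry-guessing construction underlying that cited lemma, so this is the same approach.
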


\begin{proof}[Proof of \cref{thm:local-alignment}]
To obtain a contradiction under the deterministic 3SUM hypothesis, it suffices to design a $\Order(n^{2-\epsilon})$-time 3SUM algorithm over a universe of size $n^3$. For this proof it will be convenient to work with the following version of 3SUM: Given sets $A, B, C \subseteq [-n^3, n^3]$, test whether there exist $a \in A, b \in B, c \in C$ with $a + b + c = 0$. This alternative version is easily seen to be equivalent to our usual definition (by means of deterministic reductions).

We start with an application of \cref{lem:hashing-few-collisions} (with parameters $\mu = 1 - \delta$ and $\delta = \frac12$), which yields a modulus $m = \Theta(n^{1 - \delta})$ satisfying that the number of collisions is bounded as follows:
\begin{equation*}
    \#\set{a, b \in A \cup B \cup C : a \equiv b \mod m} \leq \widetilde\Order(n^{1+\delta}).
\end{equation*}
Let $\gamma$ be a parameter to be determined later. Note that there can be at most $\widetilde\Order(n^{1-\delta - \gamma})$ many buckets $i \in [m]$ with $\#\set{a \in A \cup B \cup C : a \equiv i \mod m} \geq t := n^{\delta + \gamma}$; we call these buckets \emph{heavy.} We can afford to brute-force all solutions involving an element from a heavy bucket in time~$\Order(n^{2-\gamma})$ in the same way as we did in many previous proofs; for the sake of brevity we omit this part and focus instead on detecting a solution contain only in \emph{light} buckets. We thus assume in the following that every bucket contains at most $t$ elements.

Next, we apply \cref{lem:carry-guessing} for $U = n^3$, $N = U^{\order(1)}$, $q = \ceil{\log U}$ and $d = \Order(\log n / \log\log n)$, to obtain functions $f_1, \dots, f_N : [-U, U] \to [-q, q]^d$. For every pair of numbers~\makebox{$(i, j) \in [N] \times [t]$} we create an instance of the Local Alignment problem (i.e.\ two strings $X, Y$ and a scoring function $s(\cdot, \cdot)$). The optimal solution to that instance will determine whether there are three numbers~\makebox{$a \in A, b \in B, c \in C$} such that 
\begin{enumerate}[label=(\roman*)]
    \item $c$ is the $j$-th element in its bucket,
    \item $a + b + c \equiv 0 \mod m$,
    \item $f_i(a) + f_i(b) + f_i(c) = 0$.
\end{enumerate}
The third condition ensures that in this case $a + b + c = 0$. Moreover, if there is a 3SUM solution in our input set, then indeed for some $(i, j) \in [N] \times [t]$ the above three conditions hold, and we will detect a solution.

We now describe how to generate the Local Alignment instances for each pair $(i, j)$. As the alphabet take $\Sigma = [m] \times [-q, q] \times [d] \cup \set{\$}$ which indeed satisfies $|\Sigma| = \Order(m q d) = \widetilde\Order(n^{1-\delta})$. Focus next on the of the strings $X$ and $Y$. For $a \in A$ and $b \in B$ let
\def\arraystretch{1.2}
\begin{equation*}
    \begin{array}{@{}r@{\;}c@{\;}c@{\;}c@{\;}c@{\;}l}
        X_a & = & (a \bmod m, f_i(a)[1], 1) & \ldots & (a \bmod m, f_i(a)[d], d) & \in \Sigma^d, \\
        Y_b & = & (b \bmod m, f_i(b)[1], 1) & \ldots & (b \bmod m, f_i(b)[d], d) & \in \Sigma^d. \\
    \end{array}
\end{equation*}
Then, writing $A = \set{a_1, \dots, a_n}$ and $B = \set{b_1, \dots, b_n}$, choose
\begin{equation*}
    \begin{array}{@{}c@{\;}c@{\;}c@{\;}c@{\;}c@{\;}c@{\;}c@{\;}c@{\;}c@{\;}l}
        X & = & X_1 & \$ & X_2 & \$ & \ldots & \$ & X_n & \in \Sigma^{nd + n-1}, \\
        Y & = & Y_1 & \$ & Y_2 & \$ & \ldots & \$ & Y_n & \in \Sigma^{nd + n-1}. \\
    \end{array}
\end{equation*}
The scoring function $s(\cdot, \cdot)$ is defined as follows. For two letters $(h_1, y_1, \ell_1), (h_2, y_2, \ell_2) \in \Sigma$, the scoring function looks up $c \in C$ such that $c$ is the $j$-th element in the $(-(h_1 + h_2) \bmod m)$-th bucket, and returns a score of 1 if and only if $\ell_1 = \ell_2$ and \mbox{$f_i(c)[\ell_1] = -(y_1 + y_2)$}. Otherwise, the score is $-\infty$ (or some sufficiently large negative integer). Moreover, we define all scores involving the padding symbol $\$$ as $-\infty$.

\subparagraph{Correctness.}
We claim that the optimal value of any such instance is exactly $d$ if and only if the previous three conditions~(i),~(ii) and~(iii) are satisfied. By the construction it is clear that the optimal value can never exceed $d$, as any substring of length more than $d$ contains at least one padding symbol~$\$$. Thus, the value is exactly $d$ if and only if there exist $a \in A$ and $b \in B$ such that~$X_a$ and~$Y_b$ match perfectly under our scoring function. Let $c \in C$ denote the $j$-th element in the $(-(a + b) \bmod m)$-th bucket. Then the match is perfect if and only if $f_i(c)[\ell] = -f_i(a)[\ell] - f_i(b)[\ell]$ for all $\ell \in [d]$. This completes the correctness proof.

\subparagraph{Running Time.}
It remains to analyze the running time. Brute-forcing the heavy buckets takes time $\Order(n^{2-\gamma})$. Afterwards, recall that we constructed $N \cdot t = n^{\delta + \gamma + \order(1)}$ instances of size $\widetilde\Order(n)$ (and alphabet~$\widetilde\Order(n^{1 - \delta})$). For each such instance we can precompute the scoring tables in time~$\widetilde\Order(n^{2-2\delta})$. Assuming that there is an algorithm solving a single such instance in time $\Order(n^{2-\delta-\epsilon})$, for some~\makebox{$\epsilon > 0$}, the total running time becomes
\begin{equation*}
    \widetilde\Order(n^{2-\gamma} + n^{\delta + \gamma + \order(1)} \cdot (n^{2 - 2\delta} + n^{2 - \delta - \epsilon})) = n^{\max\set{2 - \gamma, 2 - \epsilon + \gamma} + \order(1)},
\end{equation*}
using that $\epsilon \leq \delta$ (due to the trivial lower bound that any algorithm has to read the input). By choosing $\gamma := \frac{\epsilon}{2}$, this running time becomes truly subquadratic.
\end{proof}
\section{Deterministic Lower Bound for Convolution Witness} \label{sec:conv-witness}
In this section we prove that the known randomized hardness reduction for the \emph{Convolution Witness} problem~\cite{GoldsteinKLP16} can be derandomized.

\begin{problem}[Convolution Witness]
Preprocess two vectors $X, Y \in \set{0, 1}^n$ such that upon query of an index $k$ the list of all \emph{witnesses} $\set{(i, j) \in [n] : i + j = k,\, X[i] = Y[i] = 1}$ is returned. 
\end{problem}

\begin{theorem}[Deterministic Convolution Witness Hardness] \label{thm:conv-witness}
Let $0 < \alpha < 1$. Unless the deterministic 3SUM hypothesis fails, there is no algorithm for the Convolution Witness problem with preprocessing time $\Order(n^{2-\alpha})$ and query time $\Order(n^{\alpha/2-\epsilon})$ (per witness), for any $\epsilon > 0$.
\end{theorem}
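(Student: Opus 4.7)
The plan is to adapt the randomized reduction of Goldstein, Kaplan, Lewenstein and Porat~\cite{GoldsteinKLP16}, replacing the random hash function by the deterministic additive hash from \cref{lem:hashing-few-pseudo-solutions}. Given a 3SUM instance $A$, first apply the deterministic universe reduction (\cref{thm:univ-reduction}) to assume $A \subseteq [n^3]$, and then switch to a trichromatic instance $(A, B, C)$ with each set of size $n$. Next, invoke \cref{lem:hashing-few-pseudo-solutions} with $g = 1$ and with parameters $\mu$ and $\delta > 0$ to be chosen later, obtaining a modulus $m = \Theta(n^\mu)$ for which the pseudo-solution count
\begin{equation*}
    S := \#\set{(a, b, c) \in A \times B \times C : a + b \equiv c \mod m}
\end{equation*}
is bounded by $\widetilde\Order(n^{3-\mu+\delta})$.

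Construct the Convolution Witness instance $X, Y \in \set{0, 1}^{2m}$ by letting $X$ be the indicator of $\set{a \bmod m : a \in A}$ in coordinates $[0, m)$ (and zero in $[m, 2m)$), and defining $Y$ analogously from $B$. After running the hypothetical preprocessing on $(X, Y)$, for every $c \in C$ issue two queries at positions $c \bmod m$ and $c \bmod m + m$, which together cover both values $(a \bmod m) + (b \bmod m) \in \set{c \bmod m,\, c \bmod m + m}$ that can arise whenever $a + b \equiv c \mod m$. For every returned witness $(i, j)$, enumerate the precomputed bucket lists $\set{a \in A : a \bmod m = i}$ and $\set{b \in B : b \bmod m = j}$ and test each candidate pair against the equation $a + b = c$.

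Correctness is immediate: any genuine solution $a + b = c$ is in particular a pseudo-solution, so the witness $(a \bmod m, b \bmod m)$ is returned at one of the two query positions for $c$ and confirmed during verification. For the running time, two observations suffice. First, the total number of witnesses reported across all $2n$ queries is at most $S$. Second, the aggregate verification work equals $S$ as well, since each pseudo-solution is inspected exactly once when unfolding witnesses into pairs. Hence the total running time is at most
\begin{equation*}
    \Order(n^{\mu(2-\alpha)}) \;+\; S \cdot \Order(n^{\mu(\alpha/2-\epsilon)}) \;+\; \Order(S) \;=\; \widetilde\Order\parens*{n^{\mu(2-\alpha)} + n^{3 - \mu(1-\alpha/2+\epsilon) + \delta}}.
\end{equation*}

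The main obstacle is picking $\mu$ so that both exponents drop strictly below $2$ simultaneously. The natural balance point is $\mu = 2/(2-\alpha)$, at which the two exponents match but only barely reach $2$; taking $\mu$ slightly below this value and $\delta$ sufficiently small in terms of $\epsilon$ makes both exponents at most $2 - \epsilon'$ for some $\epsilon' > 0$, contradicting the deterministic 3SUM hypothesis.
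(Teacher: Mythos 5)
Your proof is correct, but it takes a genuinely different route from the paper. The paper uses a \emph{two-level} hash: a primary modulus $m_1 = \Theta(n)$ to keep the Convolution Witness vectors at length $\Theta(n)$, and a secondary modulus $m_2 = \Theta(n^{\alpha/2-\delta})$, creating $m_2^2$ separate Convolution Witness instances indexed by residues $(x,y) \in [m_2]^2$. You instead pick a \emph{single} modulus $m = \Theta(n^\mu)$ with $\mu$ chosen slightly below $2/(2-\alpha)$ (so $1 < \mu < 2$ since $0<\alpha<1$), and build one Convolution Witness instance of super-linear length $\Theta(n^\mu)$. Both proofs rely on \cref{lem:hashing-few-pseudo-solutions} and both balance preprocessing against witness-verification in essentially the same way; yours has fewer moving parts and shows that a single invocation of the data structure suffices, at the cost of applying it to an instance whose length exceeds $n$. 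Two minor remarks: (1) the preliminary appeal to \cref{thm:univ-reduction} is a harmless but unnecessary detour, since \cref{lem:hashing-few-pseudo-solutions} already handles an arbitrary universe $[U]$ at the cost of a $(\log U)^{\Order(1/\delta)}$ factor, which is polylogarithmic when $U = n^{\Order(1)}$; (2) at the balance point $\mu = 2/(2-\alpha)$ only the preprocessing exponent equals $2$---the query-side exponent is already $2 - \mu\epsilon + \delta < 2$, and it is precisely this slack that lets you drop $\mu$ by a small amount while keeping both exponents strictly below $2$, as your final sentence correctly asserts. You should also include the $\widetilde\Order(n^{\mu+\delta})$ cost of running \cref{lem:hashing-few-pseudo-solutions} in the tally; it is subquadratic since $\mu<2$, so it causes no harm, but it is a term that belongs in the sum.
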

\begin{proof}
We assume that there is a data structure for the Convolution Witness problem with preprocessing time $\Order(n^{2-\alpha})$ and query time $\Order(n^{\alpha/2-\epsilon})$ per witness. We design a subquadratic-time algorithm for a given 3SUM instance $A \subseteq [n^c]$.

Let $\delta > 0$ be a parameter to be determined later, and let $\mu_1 = 1, \mu_2 = \alpha/2 - \delta$. We apply the deterministic additive hashing lemma (\cref{lem:hashing-few-pseudo-solutions}) two times to obtain moduli $m_1 = \Theta(n^{\mu_1}) = \Theta(n)$ and $m_2 = \Theta(n^{\mu_2}) = \Theta(n^{\alpha/2-\delta})$ with
\begin{equation*}
    S := \#\set{(a, b, c) \in A^3 : a + b \equiv c \mod{m_1 m_2}} \leq \widetilde\Order(n^{3 - \mu_1 - \mu_2}) = \widetilde\Order(n^{2 - \alpha/2 + \delta}).
\end{equation*}
Next, for each pair $x, y \in [m_2]$ we construct the length-$2m_1$ vectors $X_x$ and $Y_y$ as follows:
\begin{align*}
    X_x[i] &=
    \begin{cases}
        1 &\text{if $\exists a \in A$ with $a \equiv i \mod{m_1}$ and $a \equiv x \mod{m_2}$}, \\
        0 &\text{otherwise,}
    \end{cases} \\
    Y_y[j] &=
    \begin{cases}
        1 &\text{if $\exists b \in A$ with $b \equiv j \mod{m_1}$ and $b \equiv y \mod{m_2}$}, \\
        0 &\text{otherwise.}
    \end{cases}
\end{align*}
For each pair $x, y \in [m_2]$ we preprocess the vectors $(X_x, Y_y)$ with the efficient data structure. Then, for each $c \in A$ with $c \equiv (x + y) \mod{m_2}$ we query all witnesses of $k := c \bmod{m_1}$ and $k + m_1$. For each witness $(i, j)$ that is reported, we enumerate all pairs $(a, b) \in A^2$ with $a \equiv i \mod{m_1}$ and $b \equiv j \mod{m_2}$, and test whether $a + b \in A$. In the positive case, we have detected a 3SUM solution and we stop.

\subparagraph{Correctness.}
It is easy to see that this algorithm is correct: Throughout the execution, we enumerating exactly all triples $(a, b, c)$ for which $a + b \equiv c \mod{m_1}$ and $a + b \equiv \mod{m_2}$. In other words, we enumerate exactly all pseudo-solutions modulo $m_1 \cdot m_2$. Since any proper 3SUM solution is in particular a pseudo-solution, we are guaranteed to find it. Moreover, we only report valid solutions.

\subparagraph{Running Time.}
Let us finally bound the running time. First, invoking \cref{lem:hashing-few-pseudo-solutions} runs in time~\makebox{$\Order(n^{\max(1, \mu_1 + \mu_2) + \delta}) = \Order(n^{1 + \alpha/2 + \delta})$}. The remaining algorithm runs for $m_2^2$ iterations, where in each iteration we preprocess two vectors of length $m_1$. By the initial assumption, this takes time~\makebox{$\Order(n^{2\mu_2} \cdot n^{\mu_1 (2 - \alpha)}) = \Order(n^{2-2\delta})$}. Finally, let us consider the query cost. Recall that the total number of witnesses queried is at most $S$---the number of pseudo-solutions modulo $m_1 \cdot m_2$. Each query runs in time $\Order(n^{\alpha/2-\epsilon})$, hence the total query time is $\Order(S \cdot n^{\alpha/2-\epsilon}) = \widetilde\Order(n^{2-\epsilon + \delta})$. The total time is
\begin{equation*}
    \widetilde\Order(n^{1+\alpha/2+\delta} + n^{2-2\delta} + n^{2-\epsilon+\delta}),
\end{equation*}
so by picking $\delta := \min(\frac{\epsilon}{2}, \frac{\alpha}{4}) > 0$ the running time becomes truly subquadratic. This contradicts the deterministic 3SUM hypothesis.
\end{proof}

\end{document}